\let\llncssubparagraph\subparagraph
\let\subparagraph\paragraph
\let\subparagraph\llncssubparagraph
\tikzset{>=latex}
\tikzset{cstate/.style={rectangle,draw,inner sep=2mm, }}%
\tikzstyle{every transition}=[rectangle,draw=black!50,fill=black!20,thick,inner sep=2mm,rounded corners=3pt]
\newcommand{\mypar}[1]{\subsubsection*{#1}}
\def\orcidID#1{\smash{\href{http://orcid.org/#1}{\protect\raisebox{-1.25pt}{
\protect\includegraphics{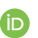}}}}}
\begin{document}
\title{Simple Stochastic Games with Almost-Sure Energy-Parity Objectives are in NP\ and coNP}
\author{Richard Mayr \inst{1}
\and Sven Schewe \inst{2}\orcidID{0000-0002-9093-9518}
\and Patrick Totzke \inst{2}\orcidID{0000-0001-5274-8190}
\and \\ Dominik Wojtczak \inst{2\text{(\Letter)}}\orcidID{0000-0001-5560-0546}}
\authorrunning{R.~Mayr, S.~Schewe, P.~Totzke, D.~Wojtczak}
\institute{University of Edinburgh, Edinburgh, UK\\
\and University of Liverpool, Liverpool, UK\\
}
\maketitle              %
\begin{abstract}
    We study stochastic games with energy-parity objectives, which combine quantitative rewards with a qualitative $\omega$-regular condition:
The maximizer aims to avoid running out of energy while simultaneously satisfying a parity condition.
We show that the corresponding almost-sure problem, i.e., checking whether
there exists a maximizer strategy
that achieves the energy-parity objective with probability $1$ when starting at a given energy level $k$,
is decidable and in $\NP \cap \coNP$. 
The same holds for checking if such a $k$ exists and if a given $k$ is minimal.

\keywords{Simple Stochastic Games, Parity Games, Energy Games}
\end{abstract}

\section{Introduction}
\emph{Simple stochastic games} (SSGs), also called \emph{competitive Markov decision
    processes} \cite{Filar_Vrieze:book},
or \emph{$2\frac{1}{2}$-player games} \cite{CJH2004,CJH2003}
are turn-based games of perfect information played on finite graphs.
Each state is either random or belongs to one of the players (maximizer or minimizer).
A game is played successively moving a pebble along the game graph, where the next state is chosen by the player who owns the current one
or, in the case of random states, according to a predefined distribution.
This way, an infinite run is produced.
The maximizer tries to achieve an objective (in our case almost surely), while the minimizer tries to prevent this.
The maximizer can be seen as a controller trying to ensure an objective in the
face of both known random failure modes (encoded by the random states) and
an unknown or hostile environment (encoded by the minimizer player).

Stochastic games were first introduced in Shapley's seminal work \cite{shapley1953stochastic} in 1953 and have since then played a central role in the solution of many problems in
computer science, including synthesis of reactive systems \cite{ramadge1987supervisory,pnueli1989synthesis}; checking interface compatibility \cite{de2001interface};
well-formedness of specifications \cite{dill1989trace}; verification
of open systems \cite{alur2002alternating}; 
and many others.

A huge variety of objectives for such games was already studied in the literature. We will mainly focus on three of them in this paper: parity; mean-payoff; and energy objectives. In order to define them we assume that numeric rewards are assigned to transitions, and priorities (encoded by bounded
non-negative numbers) are assigned to states.

The {\em parity objective} simply
asks that the minimal priority that appears infinitely often in a run is even.
Such a condition is a canonical way to define desired behaviors of systems, such as safety, liveness,
fairness, etc.;
it subsumes all $\omega$-regular objectives.
The algorithmic problem of deciding the winner in non-stochastic parity games is polynomial-time equivalent to the model checking of
the modal $\mu$-calculus \cite{wilke2001alternating} and is at the center of the algorithmic solutions to the Church's synthesis problem \cite{rabin1972automata}.
But the impact of parity games goes well beyond automata theory and logic: They facilitated the solution of two long-standing open problems in stochastic planning \cite{fearnley2010exponential} and in linear
programming \cite{friedmann2011subexponential}, which was done by careful adaptation of the parity game examples on which the strategy improvement algorithm \cite{friedmann2009exponential} requires exponentially many iterations.

The parity objective can be seen as a special case of the {\em mean-payoff objective} that asks for the limit average reward per transition along the run to be non-negative. 
Mean-payoff objectives are among the first objectives studied for stochastic games and go back to a 1957 paper by Gillette \cite{gillette1957stochastic}.
They allow for reasoning about the efficiency of a system, e.g., how fast it operates once optimally controlled.

The {\em energy objective} \cite{chakrabarti2003resource} can be seen as a refinement of the mean-payoff objective. It asks for the accumulated reward at any point of a run not to be lower than some finite threshold.
As the name suggests, it is useful when reasoning about systems with a finite initial energy level that should never become depleted.
Note that the accumulated reward is not bounded a-priori, which essentially turns a finite-state game into an infinitely-state one.

In this paper we consider SSGs with \emph{energy-parity} objectives,
which requires runs to satisfy both an energy and a parity objective.
It is natural to consider such an objective for systems that should not only be correct, but also energy efficient.
For instance, consider a robot maintaining a nuclear power plant. We not only require the robot to correctly react to all possible chains of events (parity objective for functional correctness), but also never to run out of energy as charging it manually would be risky (energy objective).

While the complexity of games with single objectives is often in $\NP\cap \coNP$, asking for multiple objectives often makes solving games harder.
Parity games are commonly viewed as the simplest of these objectives, and some traditional solutions for non-stochastic games go through simple reductions to mean-payoff or energy conditions (which are quite similar in non-stochastic games) to discounted payoff games that establishes the membership of those problems in UP and coUP \cite{jurdzinski1998deciding}.
However,
asking for \emph{two} parity objectives to be satisfied at the same time
leads to $\coNP$ completeness \cite{DBLP:conf/fossacs/ChatterjeeHP07}.

We study the almost sure satisfaction of the energy-parity objective, i.e., with probability 1. Such {\em qualitative analysis} is important as there are many applications where we
need to know whether the correct behavior arises almost-surely, e.g.,
in the analysis of randomized distributed algorithms (see, e.g, \cite{pogosyants2000verification,stoelinga2003fun}) and safety-critical examples like the one from above.
Moreover, the algorithms for {\em quantitative analysis}, i.e., computing the optimal probability of satisfaction, typically start by 
performing the qualitative analysis first and then solving a game with a simpler objective (see, e.g., \cite{CJH2004,chatterjee2005complexity}). Finally, there are stochastic models for which qualitative analysis is decidable but quantitative one is not (e.g., probabilistic finite automata \cite{blondel2003undecidable}). This may also be the case for our model.  

\mypar{Our contributions.}
We consider stochastic games with energy-parity winning conditions and show that deciding whether
maximizer can win almost-surely for a given initial energy level $k$ 
is in $\NP \cap \coNP$. We show the same for checking if such $k$ exists at all and 
checking if a given $k$ is the smallest possible for which this holds.
The proofs are considerably harder than the corresponding result for MDPs
\cite{MSTW2017} (on which they are partly based), because the attainable mean-payoff value
is no longer a valid criterion in the analysis (via combinations of sub-objectives).
E.g., even though the stored energy
might be inexorably drifting towards $+\infty$ (resp.\ $-\infty$), the mean-payoff value
might still be zero because the minimizer (resp.\ maximizer) can
delay payoffs for longer and longer (though not indefinitely, due to the
parity condition). Moreover, the minimizer might be able to
choose between different ways of losing and never commit to any particular way
after any finite prefix of the play (see \cref{ex:two-ways}).

Our proof characterizes almost-sure energy-parity via a recursive combination of complex
sub-objectives called \emph{Gain} and \emph{Bailout}, which can each eventually be solved in
$\NP \cap \coNP$.

Our proof of the \coNP\ membership is based on a result 
on the strategy complexity of a natural class of objectives,
which is of independent interest.
We show 
(cf.~\cref{lem:FD-strategy}; based on previous work in \cite{GK2014})
that, if an objective $\?O$ is such that its complement %
is both shift-invariant and submixing,
and that every MDP admits optimal finite-memory
deterministic maximizer strategies for $\?O$, then the same is true in
turn-based stochastic games.

\begin{figure}[b]
    \centering
    \includegraphics[width=0.5\linewidth]{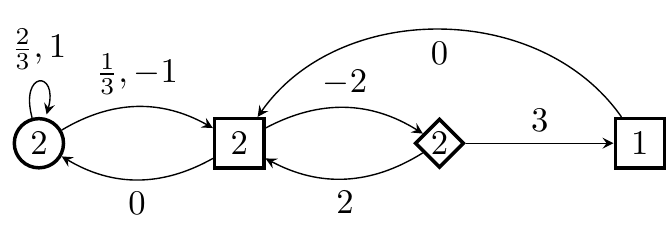}
    \caption{
        A SSG with two maximizer states ($\Max$), one minimizer state ($\Min$) and one probabilistic state ($\Ran$). Each state is annotated with its priority and each edge with a reward by which it increases the energy level (respectively, decreases if the reward is negative).
        The maximizer wins if the lowest priority visited infinitely often 
        is even and the energy level never drops below $0$.
    \label{fig:two-ways}}
\end{figure}

\begin{example}
  \label{ex:two-ways}
\Cref{fig:two-ways} shows an energy-parity game that the maximizer can win almost surely
when starting with an energy level of $\ge 2$ from the middle left node.
Whenever the game is at that node with an energy level $\ge 3$, 
then the maximizer can turn left and has at least $\frac{1}{2}$ chance that 
the energy level will never drop to $2$ while wining the game with
priority $2$.
This is because we can view this process as a random walk on a half line. If $x_n$ is the probability of reaching energy level $2$ when starting at $n$ then these probabilities are the least point-wise positive solution of the following system of linear equations: $x_2 = 1$, $x_n = \frac{2}{3} x_{n+1} + \frac{1}{3} x_{n-1}$ for all $n\geq 3$. We then get that $x_n = \frac{1}{2^{n-2}}$ so the probability of not reaching energy $2$ is $\geq \frac{1}{2}$ for all $n\geq 3$.
Always turning left guarantees that, almost surely, the parity condition holds and 
the limes inferior of the energy level is not $-\infty$.
We call this condition \emph{Gain}.
Strategies for \emph{Gain} can be used when the energy level is sufficiently
high (at least $3$ in our example) to win with a positive probability.

However, if maximizer plays for Gain and always moves left, then for every initial energy level
the chance of eventually dropping the energy down to level $2$
is positive, due to the negative cycle.
When that
happens, the only other option for the maximizer is to move right.
There minimizer can `choose how to lose', via a disjunction of two conditions
that we later formalize as \emph{Bailout}.
Either minimizer goes back to the start state without changing the
energy level (thus maximizer wins as the energy stays at level $2$
and only the good priority 2 is seen), or minimizer turns right.
In the latter case, the play visits a dominating odd priority
(which is bad for maximizer) but also increases the energy by $1$, which allows maximizer to switch back
to playing left for the \emph{Gain} condition until energy level $2$ is
reached again.

Our maximizer strategies are a complex interplay between \emph{Bailout} and \emph{Gain}.
In the example, it is easy to see that the probability of seeing priority $1$
infinitely often is zero if maximizer follows the just described strategy
(the probability of requiring to go right more than $n$ times is
at most $(\frac{1}{2})^n$), so maximizer wins this energy-parity game almost surely.
Note that maximizer does not win almost surely when the initial energy level is $0$ or $1$.
\end{example}

\mypar{Previous work on combined objectives.}
Non-stochastic energy-parity games have been studied in \cite{CD2010}. They
can be solved in $\NP \cap \coNP$
and maximizer strategies require only
finite (but exponential) memory, a property that also allowed to
show P-time inter-reducibility with mean-payoff parity games.
More recently they were also shown to be solvable in pseudo-quasi-polynomial time \cite{daviaud2018pseudo}.
Related results on non-stochastic games (e.g., mean-payoff parity) are
summarized in \cite{chatterjee2011games}. %

Most existing work
on combined objectives
for stochastic systems %
\cite{CD2011,chatterjee2011games,BKN2016,MSTW2017}
is restricted to
Markov decision processes (MDPs; aka $1\frac{1}{2}$-player games).
Almost-sure energy-parity objectives for MDPs were first considered in
\cite{CD2011,chatterjee2011games}, %
where a direct reduction to ordinary energy games was proposed.
This reduction relies on the assumption
that maximizer can win using finite memory if at all.
Unfortunately, this assumption does not necessarily hold:
it was shown in \cite{MSTW2017} that an almost sure winning strategy
for energy-parity in finite MDPs may require infinite memory.
Nevertheless, it was possible to recover the original result,
that deciding the existence of a.s.~winning strategies is
in $\NP \cap \coNP$ (and pseudo-polynomial time),
by showing that
the existence of an a.s.~winning strategy
can be witnessed by the existence of two compatible, and finite-memory, winning strategies for two simpler
objectives.
We generalize this approach from MDPs to full stochastic games.

Stochastic mean-payoff parity games were studied in \cite{CDGQ:2014}, where it was shown that
they can be solved in $\NP \cap \coNP$. However, this does not imply a solution for
stochastic energy-parity games, since, unlike in the non-stochastic
case \cite{CD2010},
there is no known reduction from energy-parity to mean-payoff parity
in stochastic games. (The reduction in \cite{CD2010} relies on the fact that
maximizer has a winning finite-memory strategy for energy-parity, which does
not generally hold for stochastic games or MDPs; see above.)

A related model are the 1-counter MDPs (and stochastic games) studied in
\cite{BBEKW2010,BBE2010,BBEK:IC2013}, since the value of the counter can be interpreted as
the stored energy. These papers consider the objective of reaching counter value
zero (which is dual to the energy objective of staying above zero), thus the roles
of minimizer and maximizer are swapped. However, unlike in this paper,
these works do not combine termination objectives with 
extra parity conditions.

\mypar{Structure of the paper.}
The rest of the paper is organized as follows. We start by introducing the notation and formal definitions of games and objectives in the next section. In \cref{sec:correctness} we show how checking almost-sure energy-parity objectives can be characterized in terms of two newly defined auxiliary objectives: Gain and Bailout.
In Sections \ref{sec:bailout} and \ref{sec:gain}, we show that almost-sure Bailout and Gain objectives, respectively, can be checked in \NP{} and \coNP.
\Cref{sec:main} contains our main result: \NP{} and \coNP{} algorithms for checking almost-sure energy-parity games with a known and unknown initial energy, as well as checking if a given initial energy is the minimal one. We conclude and point out some open problems in \Cref{sec:conclusion}. 
Due to page restrictions, most proofs in the main body of the paper were replaced by sketches.
The detailed proofs can be found in the appendix.

\renewcommand{\O}{\mathsf{Obj}}
\section{Preliminaries}\label{sec:preliminaries}
A probability distribution over a set $X$ is a function $f:X\to[0,1]$
such that $\sum_{x\in X} f(x) = 1$. We write $\Dist{X}$ for the set of
distributions over $X$.

\mypar{Games, Strategies, Measures.}
A \emph{Simple Stochastic Game (SSG)} is a 
directed graph
$\sys{G}\eqdef(V, E, \prob)$, where all states have an outgoing edge and 
the set of states is partitioned 
into
states owned by \emph{maximizer} ($\VMax$), \emph{minimizer} ($\VMin$) and probabilistic states ($\VRan$).
The set of \emph{edges} is $E\subseteq V\x V$ and $\prob:\VRan\to\Dist{E}$
assigns each probabilistic state a probability distribution over its outgoing edges. W.l.o.g., we assume that each probabilistic state has at most two successors, because one can introduce a new probabilistic state for each excess successor. We let $\prob(ws) \eqdef \prob(s)$ for all $ws \in (VE)^*\VRan$.

A \emph{path} is a finite or infinite sequence $\rho\eqdef s_0e_0s_1e_1\ldots$ such that $e_i = (s_i,s_{i+1})\in E$ holds for all indices $i$. A \emph{run} is an infinite path and we write $\Runs[]{}\eqdef (VE)^\omega$ for the set of all runs.

A \emph{strategy} for maximizer is a function $\sigma:(VE)^*\VMax\to\Dist{E}$ that assigns to each path $ws\in (VE)^*\VMax$ a probability distribution over the outgoing edges of its target node $s$. That is, $\sigma(ws)(e)>0$ implies $e=(s,t)\in E$ for some $t\in V$.
A strategy is called \emph{memoryless} if $\sigma(xs)=\sigma(ys)$ for all $x,y\in (VE)^*$ and $s\in \VMax$,
\emph{deterministic} if $\sigma(w)$ is Dirac for all $w\in (VE)^*\VMax$,
and \emph{finite-state} if there exists an equivalence relation $\sim$ on $(VE)^*\VMax$ with a finite index, such 
that $\sigma(\rho_1) = \sigma(\rho_2)$ if $\rho_1 \sim \rho_2$.
Of particular interest to us will be the class of \emph{memoryless deterministic strategies} (\emph{MD}) and
the class of \emph{finite-memory deterministic strategies} (\emph{FD}).
Strategies for minimizer are defined analogously and will usually be denoted by $\tau:(VE)^*\VMin\to\Dist{E}$.

A maximizing (minimizing) \emph{Markov Decision Process (MDP)} is 
a game in which minimizer (maximizer) has no choices, i.e., all her states have exactly one successor. We will write $\sys{G}[\tau]$ for the MDP resulting from fixing the strategy $\tau$.
A \emph{Markov chain} is a game where neither player has a choice.
In particular, $\sys{G}[\sigma,\tau]$ is a Markov chain obtained by 
setting, in the game $\sys{G}$, the strategies for maximizer and minimizer to $\sigma$ and $\tau$, respectively.

Given an initial state $s\in V$ and strategies $\sigma$ and $\tau$ for maximizer and minimizer, respectively, the set of runs starting in $s$ naturally extends to a probability space as follows.
We write $\Runs[\sys{G}]{w}$ for the \emph{$w$-cylinder}, i.e., the set of all runs with prefix $w\in (VE)^*V$. We let $\?F^\sys{G}$ be the $\sigma$-algebra generated by all these cylinders.
We inductively define a probability function $\Prob[G,\sigma,\tau]{s}$
on all cylinders,
which then uniquely extends 
to $\?F^\sys{G}$ by  Carath\'eodory's extension theorem~\cite{B1995},
by setting $\Prob[\sys{G},\sigma,\tau]{s}(\Runs[\sys{G}]{s}) \eqdef 1$
and
$\Prob[\sys{G},\sigma,\tau]{s}(\Runs[\sys{G}]{w}) \eqdef
\prod_{i=0}^{n-1} \mathit{dist}_i(s_0e_0s_1e_1\ldots s_{i})(e_i)$
for $w=s_0e_0s_1e_1\ldots e_{n-1}s_n$,
where $s_0 = s$, $e_i = (s_i, s_{i+1})$ and 
$\mathit{dist}_i$ is $\sigma(\cdot)$, $\tau(\cdot)$ or $\prob(\cdot)$,
for $s_i\in \VMax$,$\VMin$ or $\VRan$, respectively.
\mypar{Objective Functions.}
A (Borel) \emph{objective} is a set $\O\in \?F^\sys{G}$ of runs.
We write $\overline{\O}\eqdef\Runs{}\setminus\O$ for its complement.
Borel objectives $\O$ are weakly determined
\cite{M1998,Maitra-Sudderth:2003}, which means that
$$\sup_\sigma\inf_\tau\Prob[\sigma,\tau]{s}(\O)=
\inf_\tau\sup_\sigma\Prob[\sigma,\tau]{s}(\O).$$
This quantity is called the \emph{value}
of $\O$ in state $s$, and written as $\Val[\sys{G}]{s}{\O}$.
We say that $\O$ holds
\emph{almost-surely} (abbreviated as \emph{a.s.}) at state $s$ iff %
there exists $\sigma$ such that $\forall \tau, \Prob[\sys{G},\sigma,\tau]{s}(\O)=1$.
Let $\AS[\sys{G}]{\O}$ denote the set of states at which $\O$ holds almost surely.
We will drop the superscript $\sys{G}$ and simply write
$\Runs{}$, $\Prob[\sigma,\tau]{s}$ and $\AS{\O}$,
if the game is clear from the context.

We use the syntax and semantics of operators $\eventually$ (eventually)
and $\always$ (always) from the temporal logic LTL \cite{CGP:book} to specify
some conditions on runs.

\smallskip
A \emph{reachability condition} is defined by a set of target states $T \subseteq V$. 
A run $\rho=s_0e_0s_1\ldots$ satisfies the reachability condition iff
there exists an $i \in \N$ s.t.\ $s_i \in T$.
We write $\eventually T \subseteq \Runs{}$ for the set of runs that satisfy this
reachability condition.
Given a set of states $W \subseteq V$, we lift this to a safety condition on
runs and write $\always W \subseteq \Runs{}$ for the set of runs $\rho=s_0e_0s_1\ldots$ where $\forall
i.\, s_i \in W$.

\smallskip
A \emph{parity condition} is given by a bounded function $\parity:V\to\N$
that assigns a priority (a non-negative integer) to each state.
A run $\rho \in \Runs{}$ satisfies the parity condition iff
the minimal priority that appears infinitely often
on the run is even.
The \emph{parity objective} is the subset $\Parity[]{} \subseteq
\Runs{}$ of runs that satisfy the parity condition.

\smallskip
\emph{Energy conditions} are given by a function $\cost{}:E\to\Z$,
that assigns a \emph{reward} value to each edge.
For a given initial energy value $k\in\N$, a run $s_0e_0s_1e_1\ldots$ satisfies the $k$-energy condition
if, for every finite prefix of length $n$, the \emph{energy level} $k+\sum_{i=0}^n\cost(e_i)$ is greater or equal to $0$.
Let $\EN{k} \subseteq \Runs{}$ denote the $k$-energy objective,
consisting of those runs that satisfy the $k$-energy condition.

The \emph{$l$-storage condition} holds for a run $s_0e_0s_1e_1\ldots$
if
$l+\sum_{i=m}^{n-1}\cost(s_i,s_{i+1})\ge 0$
holds for every infix $s_m e_m s_{m+1}\ldots s_n$.
Let $\ES{k,l} \subseteq \Runs{}$ denote the $k$-energy $l$-storage objective,
consisting of those runs that satisfy both the $k$-energy
and the $l$-storage condition.
We write $\ES{k}$ for $\bigcup_l \ES{k,l}$. Clearly, $\ES{k} \subseteq \EN{k}$.

\smallskip
\emph{Mean-payoff} and \emph{limit-payoff conditions} are defined w.r.t.~the same reward function
as the energy conditions. 
The \emph{mean-payoff} value
of a run $\rho=s_0e_0s_1e_1\ldots$ is
$\mathit{MP}(\rho)\eqdef \liminf_{n\rightarrow\infty}\frac{1}{n}\sum_{i=0}^{n-1}\cost(e_i)$.
For $\constraint\in\{>,\ge,=,\le,<\}$ and $c\in\R\cup\{-\infty,\infty\}$,
the set $\MP{\constraint c}\subseteq \Runs{}$ consists of all runs $\rho$ with $\mathit{MP}(\rho)\constraint c$.
Let $\LimInf{\constraint c}\subseteq \Runs{}$ contain all runs $\rho$ with
$(\lim\inf_{n\to\infty} \sum_{i=0}^{n}\cost(e_i))\constraint c$,
and likewise for $\LimSup{\constraint c}$.

The combined energy-parity objective $\EN{k}\cap\Parity{}$ is Borel and therefore weakly determined,
meaning that it has a well-defined ($\inf\sup=\sup\inf$) value for every game \cite{M1998,Maitra-Sudderth:2003}.
Moreover, the almost-sure energy-parity
objective (asking to win with probability $1$)
is even strongly determined \cite{KMSW2017}: either maximizer has a
strategy to enforce the condition with probability $1$ or minimizer has a
strategy to prevent this.

\section{Characterizing Energy-Parity via Gain and Bailout}
\label{sec:correctness}
The main theorem of this section (\cref{thm:correctness})
characterizes almost sure energy-parity objectives in terms of two intermediate
objectives called $\Gain$ and $\kBailout$ for parameters $k \geq 0$.
This will form the basis of all computability results: we will show
(as \cref{thm:bailout-comp,thm:coNP-gain,thm:as-gain-np})
how
to compute almost-sure sets for these intermediate objectives.

\begin{definition}\label{def:gain-bailout}
Consider a finite SSG $\?G=(V,E,\prob)$, as well as reward and parity
functions defining the objectives $\Parity,\LimInf{>\infty},\LimSup{=\infty}$
as well as $\ES{k,l}$ and $\EN{k}$ for every $k,l\in\N$.
We define combined objectives $\Gain$ and $\kBailout\eqdef\cup_{l} \Bailout(k,l)$ where %
\begin{align*}
    \Gain         &\quad\eqdef\quad \LimInf{>-\infty}\cap\Parity\\
    \Bailout(k,l) &\quad\eqdef\quad (\ES{k,l}\cap\Parity)\cup(\EN{k}\cap\LimSup{=\infty}).
\end{align*}
\end{definition}

The main idea behind these two objectives is a special witness property
for energy-parity.
We argue that, if maximizer has an almost-sure winning strategy for energy-parity then he also has one that combines two almost-sure winning strategies, one for $\Gain$ and one for $\kBailout$.

Notice that playing an almost-sure winning strategy for $\Gain$
implies a uniformly lower-bounded strictly positive
chance that the energy level never drops below zero (assuming it is sufficiently high to begin with).
This fact uses the finiteness of the set of
control-states and does not hold for infinite-state MDPs.
In the unlikely event that the energy level does get close to zero, maximizer switches to playing an almost sure winning strategy for $\kBailout$.
This is a disjunction of two scenarios, and the balance might be influenced
by minimizer's choices. In the first scenario $(\ES{k,l}\cap\Parity)$
the energy never drops much and stays above zero (thus satisfying
energy-parity).
In the second scenario, $(\EN{k}\cap\LimSup{=\infty})$, the parity objective is
temporarily suspended in favor of boosting 
(while always staying above zero)
the energy to a sufficiently high
level to switch back to the strategy for $\Gain$ and thus try again from the beginning.
The probability of infinitely often switching between these modes is zero
due to the lower-bounded chance of success in the $\Gain$ phase.
Therefore, maximizer eventually wins by playing for $\Gain$.
Note that maximizer needs to remember the current energy level
in order to know when to switch and consequently, this strategy uses infinite memory. 

\begin{example}
    \label{ex:two-ways-cont}
Consider again the game in \cref{fig:two-ways}.
The middle left state satisfies both $\Gain$ and $\kBailout$
objectives for all $k\geq 2$ almost-surely. The respective winning strategies are
to always go left for $\Gain$ or always go right for $\kBailout$ when at that state.
Note that it neither satisfies $0$-$\Bailout$ nor $1$-$\Bailout$ objectives.
\end{example}

We define the subset $W \subseteq V$ of states from which maximizer can almost
surely win both $\Gain$ and $\kBailout$ (assuming sufficiently high initial energy),
while at the same time ensuring that the play remains within this set of states.
These are the states from which maximizer can win
by freely combining individual strategies for the $\Gain$\ and $\Bailout$\ objectives.

\begin{definition}\label{def:W}
Given a finite SSG $\?G=(V,E,\prob)$,
let $W \subseteq V$ be the largest subset of states satisfying the following condition
\[
W \subseteq \AS{\Gain \cap \always W}\, \cap\,
\bigcup_{k}\AS{\kBailout \cap \always W}
\]
\end{definition}

This condition describes a fixed-point, and as it is easy to see that 
if two sets $W_1$ and $W_2$ are such fixed-points, then so is $W_1 \cup W_2$.
Thus, the maximal fixed-point $W$ is well-defined.

\medskip
Our main characterization
of almost-sure energy-parity objectives
is the following \cref{thm:correctness}.
It states that maximizer can almost surely win 
an $\EN{k} \cap \Parity$ objective if, and only if,
he can win the easier $k$-Bailout objective
while always staying in the safe set $W$.

\begin{restatable}{theorem}{thmCorrectness}
    \label{thm:correctness}
For every $k\in\N$,
$
\AS{\EN{k} \cap \Parity} = \AS{\kBailout \cap \always W}
$.
\end{restatable}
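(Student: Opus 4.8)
The plan is to prove the two inclusions separately.

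The easy direction is "$\supseteq$": if maximizer has an almost-sure winning strategy for $\kBailout \cap \always W$, we must produce one for $\EN{k} \cap \Parity$. This is exactly the strategy-switching argument sketched informally after Definition~\ref{def:W}. Starting from a state in $\AS{\kBailout \cap \always W}$ with initial energy $k$, maximizer plays an a.s.\ winning strategy for $\kBailout \cap \always W$. Since $\kBailout = \bigcup_l \Bailout(k,l)$, almost every run falls into one of the two disjuncts of some $\Bailout(k,l)$ while staying in $W$. On runs realizing $\ES{k,l}\cap\Parity$ the energy-parity objective $\EN{k}\cap\Parity$ is already satisfied, so maximizer can continue forever. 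On runs realizing $\EN{k}\cap\LimSup{=\infty}$, the energy level is unbounded from above and stays $\ge 0$; because $V$ is finite, once the energy is high enough maximizer switches, from a state still in $W$, to an a.s.\ winning strategy for $\Gain\cap\always W$. By the remark following Definition~\ref{def:gain-bailout} (finiteness of $V$), playing for $\Gain$ from sufficiently high energy succeeds — meaning the energy never drops back to the switching threshold and $\Parity$ holds — with probability bounded below by some $p>0$ uniformly. If it fails, the energy has dropped to a fixed low level; maximizer switches back to the $\kBailout$ strategy and repeats. The number of failed $\Gain$ phases is stochastically dominated by a geometric random variable, so almost surely only finitely many switches occur, after which either an $\ES{k,l}\cap\Parity$ run or a successful $\Gain$ phase yields $\EN{k}\cap\Parity$. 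Throughout, the play never leaves $W$, so all the invoked a.s.\ sets remain available. The one subtlety to handle carefully is bookkeeping of the energy level as memory (the strategy is infinite-memory), and checking that the countably many switch points do not cause a measure-zero set of energy conditions to be violated; a union bound over switch epochs handles this.

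The harder direction is "$\subseteq$": from an a.s.\ winning strategy $\sigma$ for $\EN{k}\cap\Parity$ at a state $s$, we must show $s \in \AS{\kBailout \cap \always W}$. First, observe that any a.s.\ winning strategy for $\EN{k}\cap\Parity$ is in particular a.s.\ winning for $\EN{k}$, hence for the weaker $\Bailout(k,\cdot)$ disjunct family; the real content is (i) that $s\in W$, i.e.\ that maximizer can confine the play to states from which both $\Gain$ and some $\kBailout$ are a.s.\ winnable, and (ii) that within $W$ the $\kBailout$ objective is a.s.\ achievable. For (i), the natural approach is to show that the set $\AS{\EN{k}\cap\Parity}$ (union over all $k$, or the set of states that are a.s.\ energy-parity winning for \emph{some} energy) is itself a fixed-point of the inclusion defining $W$, hence contained in $W$. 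This requires arguing that from such a state maximizer can a.s.\ win $\Gain\cap\always W'$ and some $\kBailout\cap\always W'$ where $W'$ is that set — i.e.\ that an energy-parity a.s.-winning strategy can be massaged into a $\Gain$-a.s.-winning strategy that never leaves the energy-parity-winning region. The key insight is that $\Parity$ plus "energy stays $\ge0$ from some finite level" implies $\LimInf{>-\infty}$ — if the $\liminf$ of the energy were $-\infty$, for no finite starting credit would the run stay nonnegative, contradicting a.s.\ satisfaction of $\EN{k}$ — so an a.s.\ energy-parity winner is automatically an a.s.\ $\Gain$ winner, and one checks the play stays in the winning region because maximizer would never (on a positive-probability set) move to a state from which he loses. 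For (ii), one shows the same strategy, or a related one, witnesses membership in some $\AS{\kBailout\cap\always W}$: indeed $\EN{k}\cap\Parity \subseteq \ES{k,l}\cap\Parity$ fails in general, so instead one argues that a.s.\ the energy either stays boundedly close to $k$ (giving $\ES{k,l}$ for the $l$ realised along that run, combined with $\Parity$) or is unbounded (giving $\EN{k}\cap\LimSup{=\infty}$) — a clean dichotomy on every run satisfying $\EN{k}$ — and that by the finiteness of $V$ this can be made uniform enough to pick a single $k$ (here $k$ is already fixed, and one needs $l$ to range, which is why $\kBailout$ quantifies $l$ existentially).

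The main obstacle I expect is the confinement-to-$W$ part of the $\subseteq$ direction: showing that the region of energy-parity-a.s.-winning states satisfies the self-referential inclusion of Definition~\ref{def:W}, which amounts to proving that optimal play for energy-parity can always be arranged to stay inside its own winning region \emph{and} simultaneously serves as optimal play for the two auxiliary objectives relative to that same region. This is delicate because $W$ is defined by a greatest fixed-point and the two auxiliary objectives pull in different directions ($\Gain$ wants the energy not to diverge to $-\infty$; the $\LimSup{=\infty}$ disjunct of $\Bailout$ wants it to diverge to $+\infty$), so one has to verify they are jointly attainable from every state of the candidate fixed-point. I would handle this by a careful case analysis on the behaviour of the energy level under $\sigma$ from each state, using the finiteness of $V$ to extract uniform bounds, and appealing to the strong determinacy of almost-sure energy-parity objectives quoted at the end of Section~\ref{sec:preliminaries} to rule out the adversarial alternative.
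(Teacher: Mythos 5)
Your plan matches the paper's proof in both directions: the $\supseteq$ inclusion is established by exactly the three-mode (Start/Gain/Bailout) switching strategy you describe, with the geometric bound on the number of failed Gain phases, and the $\subseteq$ inclusion by showing that $W'=\bigcup_k\AS{\EN{k}\cap\Parity}$ satisfies the fixed-point inclusion defining $W$, using the residual-strategy argument for $\always W'$ together with the run-level dichotomy $\EN{k}\cap\Parity\subseteq\kBailout$ (the paper's \cref{lem:enpar-bailout}). The one ingredient you take on faith --- a uniform positive lower bound on the probability that a Gain strategy started at sufficiently high energy also satisfies energy-parity --- is precisely the paper's \cref{lem:gain-value}, which is where most of the technical work (via the FD-strategy lifting of \cref{lem:FD-strategy}) actually lies.
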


Our proof of this characterization theorem relies
on the following 
claim, which allows to lift the existence
of finite-memory deterministic optimal strategies 
from MDPs to SSGs.
It applies to a fairly general class of objectives and, we believe, is of independent interest.

\medskip
Recall that $\overline{\O}\eqdef\Runs{}\setminus\O$
denotes the complement of objective $\O$.
For runs $a,b,c\in\Runs{}$ we say that $a$ is a \emph{shuffle} of
$b$ and $c$ if there exist factorizations
$b=b_0b_1\dots$
and
$c=c_0c_1\dots$
such that 
$a=b_0c_0b_1c_1\dots$.
An objective $\O$ is called \emph{submixing} if, 
for every run $a\in \O$ that is a shuffle of runs $b$ and $c$,
either $b\in\O$ or $c\in\O$.
$\O$ is \emph{shift-invariant} if,
for every run $s_1e_1s_2e_2\ldots$, it holds that $s_1e_1s_2e_2\ldots \in \O \iff s_2e_2\ldots \in \O$.
Shift-invariance slightly generalizes the better-known \emph{tail} condition (see \cite{GK2014} for a discussion).

\begin{restatable}{theorem}{lemFDstrategy}
\label{lem:FD-strategy}
    Let 
    $\?O$ be an objective
    such that $\overline{\?O}$ is both shift-invariant and submixing.
    If maximizer has optimal FD strategies (from any state $s$) for $\?O$ 
    for every finite MDP %
    then 
    maximizer has optimal FD strategies (from any state $s$) for $\?O$ 
    for every finite SSG. %
\end{restatable}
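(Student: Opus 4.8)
The plan is to reduce the SSG statement to the MDP hypothesis by fixing minimizer to an optimal memoryless deterministic (MD) strategy, obtaining an MDP in which maximizer's optimal FD strategy is — by the hypothesis — available, and then arguing that this FD strategy is in fact optimal in the original game. The key facts I would lean on are: (i) by strong/weak determinacy and the structural properties of $\overline{\?O}$ (shift-invariance), minimizer should have an optimal MD strategy, so the game value equals $\inf_{\tau\ \mathrm{MD}}\sup_\sigma\Prob[\sigma,\tau]{s}(\?O)$; and (ii) the value function $s\mapsto\Val{s}{\?O}$ is determined by a local fixed-point that an optimal strategy must respect at every state it owns.

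\smallskip
\textbf{Step 1: Existence of optimal MD strategies for minimizer.} Since $\overline{\?O}$ is shift-invariant and submixing, results in the style of \cite{GK2014} (and the fact that these properties pass to the "minimizer's view" of the game) give that minimizer has an optimal MD strategy $\tau^*$ for $\overline{\?O}$, equivalently for the complementary objective $\?O$ from maximizer's side. Concretely one argues that the game restricted to minimizer choices behaves like an MDP for the (shift-invariant, submixing) objective $\overline{\?O}$, where by hypothesis-symmetric reasoning the relevant player has MD optimal strategies; alternatively invoke the $1$-to-$2$ player lifting for such objectives that is already part of the toolbox this paper builds on. The point is to get a \emph{single} MD minimizer strategy $\tau^*$ with $\sup_\sigma\Prob[\sigma,\tau^*]{s}(\?O)=\Val{s}{\?O}$ for all $s$.

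\smallskip
\textbf{Step 2: Apply the MDP hypothesis.} Fix $\tau^*$ from Step 1. Then $\?G[\tau^*]$ is a finite (maximizing) MDP. By hypothesis maximizer has an optimal FD strategy $\sigma^*$ in $\?G[\tau^*]$ for $\?O$, so $\Prob[\sigma^*,\tau^*]{s}(\?O) = \sup_\sigma \Prob[\sigma,\tau^*]{s}(\?O) = \Val{s}{\?O}$ for every $s$. Thus $\sigma^*$ attains the value against \emph{one particular} minimizer strategy; what remains is to show it attains the value against \emph{all} minimizer strategies, i.e.\ $\inf_\tau \Prob[\sigma^*,\tau]{s}(\?O) = \Val{s}{\?O}$.

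\smallskip
\textbf{Step 3: Optimality of $\sigma^*$ against all minimizer strategies — the main obstacle.} This is the crux and the part I expect to be hardest. The naive hope "$\sigma^*$ is optimal in $\?G[\tau^*]$, hence optimal in $\?G$" is false in general, because $\sigma^*$ might exploit idiosyncrasies of $\tau^*$. The fix is to choose $\tau^*$ and $\sigma^*$ \emph{compatibly} with the value function. I would argue: in $\?G[\tau^*]$ the FD strategy $\sigma^*$ can be taken to be \emph{value-optimal} in the strong sense that from every state the continuation still has value $\Val{\cdot}{\?O}$ (standard for finitely-generated optimal strategies, using shift-invariance so that "value of the tail" is well-defined). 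Now fix $\sigma^*$ in the original game $\?G$: this yields a minimizing MDP $\?G[\sigma^*]$. Using shift-invariance and submixing of $\overline{\?O}$ once more, I claim the minimizer's best response in $\?G[\sigma^*]$ cannot beat $\Val{s}{\?O}$; intuitively, at each maximizer state $\sigma^*$ preserves value, at each random state value is preserved in expectation, and the only way minimizer could gain would be to create a run in $\overline{\?O}$ that is a "shuffle" of locally value-respecting plays — but submixing of $\overline{\?O}$ forbids turning finitely-combined good behaviour into bad behaviour with extra probability. Making this rigorous is where one needs to mirror the martingale / $\varepsilon$-optimality argument of \cite{GK2014}: show that $\Val{S_n}{\?O}$ is a submartingale under $\Prob[\sigma^*,\tau]{s}$ for \emph{any} $\tau$, that it converges a.s., and that its limit is (a.s.) the indicator of $\?O$ up to events of measure zero — using shift-invariance for the tail $0$-$1$ law and submixing to rule out the problematic oscillating behaviour. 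Concluding, $\Prob[\sigma^*,\tau]{s}(\?O)\ge \Val{s}{\?O}$ for all $\tau$, so $\sigma^*$ is FD and optimal in $\?G$.

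\smallskip
\textbf{Step 4: Book-keeping.} Finally I would note that FD-ness of $\sigma^*$ is preserved by the construction (it was FD in $\?G[\tau^*]$, and the state space and memory structure are the same in $\?G$), and that the argument is uniform over the starting state $s$ since the value function and the chosen strategies are global. The submixing hypothesis is used precisely once, in Step 3, to exclude the pathology that two individually-losing run-fragments shuffle into a winning run — the same pathology illustrated by \cref{ex:two-ways} for minimizer's "choose how to lose" behaviour — which is exactly why it cannot be dropped.
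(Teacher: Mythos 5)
Your Steps 1 and 2 are fine, but Step 3 --- which you correctly identify as the crux --- has a genuine gap, and the repair you sketch does not close it. Two problems. First, the submartingale claim is unjustified: the FD strategy $\sigma^*$ you obtain is optimal only in the MDP $\?G[\tau^*]$, so there is no reason its choices preserve the game value $\Val{\cdot}{\?O}$ at memory-mode/state pairs that are reachable only under minimizer strategies $\tau\neq\tau^*$; at such histories $\sigma^*$ may well move to a successor of strictly smaller value, and the value process stops being a submartingale exactly where it matters. Second, even granting value preservation, for a general objective whose complement is merely shift-invariant and submixing you cannot identify the a.s.\ limit of the value process with $\mathbf{1}_{\?O}$: against a history-dependent $\tau$ the play is not a Markov chain, so the value at the current state is not the conditional probability $\Prob[\sigma^*,\tau]{s}(\?O\mid\cdot)$ and L\'evy's $0$--$1$ law gives you nothing about it. Your appeal to submixing ``to rule out oscillation'' is not attached to any actual step, which is a warning sign: submixing has to do real work somewhere, and in your outline it never does.

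The paper's proof takes a different and genuinely necessary route: induction on the number of minimizer states with a choice. For the induction step it fixes one such state $\pi$, forms the two subgames $\?G_l,\?G_r$ with one outgoing edge of $\pi$ deleted, obtains FD optimal strategies $\sigma_l,\sigma_r$ there by the induction hypothesis (these are subgame-perfect by shift-invariance), and combines them into a single FD \emph{trigger strategy} $\sigma_{lr}$ that plays $\sigma_l$ or $\sigma_r$ depending on minimizer's last move at $\pi$, resuming the corresponding sub-play (\cref{def:trigger-strategy}). The optimality of $\sigma_{lr}$ against \emph{all} $\tau$ is exactly the content of \cref{lem:gimbert-kelmendi} (Eqs.~(19)--(21) of \cite{GK2014}): a play of the combined game is a shuffle of a play in $\?G_l$ and a play in $\?G_r$, and submixing of $\overline{\?O}$ forces any losing combined play to contain a losing component play, yielding $\Prob[\sigma_{lr},\tau]{s}(\?O)\ge\min\{\Val[\sys{G}_l]{s}{\?O},\Val[\sys{G}_r]{s}{\?O}\}$; an MD optimal $\tau^*$ then pins this down to the game value. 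So the single FD witness is not a best response to one fixed $\tau^*$ but a recursive combination of subgame strategies --- that is the idea missing from your proposal. To salvage your approach you would have to show how to choose $\sigma^*$ so that it is simultaneously a best response to every MD $\tau$, which is essentially what the trigger construction delivers.
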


This applies in particular to the $\Gain$ objective, but not to $\kBailout$ objectives, as these are not shift-invariant.
A proof of \cref{lem:FD-strategy} can be found in \cref{app:lifting}.
It uses a recursive argument based on the notion of \emph{reset strategies}
from \cite{GK2014}.

The remainder of this section is dedicated to proving \cref{thm:correctness}.
We will first collect the remaining technical claims
about $\Gain$, $\Bailout$, and reachability objectives.
Most notably, as \cref{lem:gain-value},
we show that if maximizer can almost surely win $\Gain$ in a SSG,
then he can do so using a FD strategy which moreover satisfies an energy-parity objective
with strictly positive (and lower-bounded) probability.
This is shown in part based on \cref{lem:FD-strategy} applied to the $\Gain$ objective.
We will also need the following fact about reachability objectives in finite MDPs.

\begin{lemma}[{\cite[Lemma~3.9]{BBEK:IC2013}}]\label{BBEK:IC2013:Lemma3.9}
Let $\?M$ be a finite MDP and 
${\it Reach}_T$ be the reachability objective %
with target $T \eqdef \{s'\ |\ \Val{s'}{\LimInf{=-\infty}}=1\}$.
One can compute a rational constant $c< 1$
and an integer $h \ge 0$ such that for all states $s$ and 
$i \ge h$ we have
$
\forall \tau.\, \Prob[\tau]{s}(\overline{\EN{i}} \cap \overline{{\it Reach}_T}) \le \frac{c^i}{1-c}
$.
\end{lemma}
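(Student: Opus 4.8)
The plan is to prove \cref{BBEK:IC2013:Lemma3.9} by reducing it to a statement about the behaviour of a single Markov chain obtained after fixing an optimal counter-decreasing strategy, and then invoking a geometric-decay estimate for biased random walks.

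\textbf{Step 1: Restrict attention to the complement of $\mathit{Reach}_T$.} Observe that $\overline{\EN{i}}\cap\overline{\mathit{Reach}_T}$ is the event that, starting from $s$, the accumulated reward drops below $-i$ at some finite time while the run never visits a state in $T$. Since $T$ collects exactly the states from which the minimizer (here, the only player) can force $\LimInf{=-\infty}=1$, i.e.\ can force the energy to $-\infty$ almost surely, the states in $V\setminus T$ are those from which the maximizer (the controller of the MDP) can guarantee that the energy stays $\liminf >-\infty$ with \emph{positive} probability; more precisely, $\Val{s'}{\LimInf{>-\infty}}>0$ for $s'\notin T$. This is the point where finiteness of the MDP is used: on the finite set $V\setminus T$ there is a uniform positive lower bound $p_0>0$ on this value, and a uniform bound on the memory needed, so we may fix a single strategy $\tau^\ast$ that, from every state of $V\setminus T$, keeps the energy bounded below with probability $\ge p_0$ as long as the run stays in $V\setminus T$.

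\textbf{Step 2: Set up the one-dimensional random walk.} Fix $\tau^\ast$ as above and consider the finite Markov chain $\?M[\tau^\ast]$ restricted to $V\setminus T$ (treating entry into $T$ or energy level $0$ as absorbing). Track only the energy level: it performs a random walk on $\{0,1,2,\dots\}$ with bounded steps (bounded by the maximal absolute reward $B$ on an edge), and by Step 1 the probability of ever hitting $0$ from any starting level, conditioned on staying in $V\setminus T$, is at most $1-p_0<1$. A standard argument (e.g.\ a supermartingale / exponential-moment argument on the energy level, or the classical gambler's-ruin bound for walks with bounded increments and a downward-bounded drift component) then yields a constant $c\in(0,1)$, depending only on $B$ and $p_0$, such that for all $i$ large enough, say $i\ge h$, the probability of the energy ever dropping below $-i$ before leaving $V\setminus T$ is at most $c^i$. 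Summing the geometric tail gives the bound $\frac{c^i}{1-c}$, and since $\tau^\ast$ is an \emph{optimal}-in-this-respect strategy for the minimizer/only player, no strategy can do better on this event, so the bound holds for all $\tau$: $\forall\tau.\ \Prob[\tau]{s}(\overline{\EN{i}}\cap\overline{\mathit{Reach}_T})\le\frac{c^i}{1-c}$.

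\textbf{Step 3: Effectiveness.} To get computability of $c$ and $h$, note that $T$ is computable (it is the almost-sure limit-energy-to-$-\infty$ set, solvable by known 1-counter/energy-MDP techniques), $p_0$ is a computable rational obtained from solving the relevant energy game on the finite MDP restricted to $V\setminus T$, and $B$ is read off the reward function; then $c$ and $h$ are explicit functions of $p_0$ and $B$ coming from the random-walk estimate.

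\textbf{Main obstacle.} The delicate point is Step 1–2: making precise the passage from ``$\liminf>-\infty$ holds with positive probability'' to ``the energy hits $-i$ with probability at most $c^i$,'' uniformly in $i$ and over all $\tau$. The positive-probability statement alone does not immediately give geometric decay; one needs the boundedness of rewards together with a uniform positive escape probability at \emph{every} level (not just the start) to iterate the estimate, and one must be careful that excursions into $T$ are handled correctly (they are harmless here because they are outside $\overline{\mathit{Reach}_T}$). I expect the cleanest route is to exhibit an explicit supermartingale of the form $\lambda^{-(\text{energy level})}$ for a suitable $\lambda>1$ on $V\setminus T$ and apply the optional stopping theorem; this is exactly the kind of routine-but-careful calculation referenced from \cite[Lemma~3.9]{BBEK:IC2013}, which we cite rather than reprove.
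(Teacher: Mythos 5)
The first thing to say is that the paper contains no proof of this statement: it is imported verbatim as \cite[Lemma~3.9]{BBEK:IC2013} and used as a black box in the proof of \cref{lem:gain-value}. So your closing decision to ``cite rather than reprove'' coincides exactly with what the paper does; the only question is whether the sketch you give in Steps 1--3 would, if expanded, constitute a proof. As written, it would not.

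The gap sits precisely where you suspect, between Step~1 and Step~2, but it is not closed by either of your proposed devices. Step~1 correctly yields a uniform $p_0>0$ with $\Prob[\tau]{s'}(\LimInf{>-\infty})\ge p_0$ for all $s'\notin T$ and all $\tau$. Step~2 then asserts that ``the probability of ever hitting $0$ from any starting level \dots is at most $1-p_0$'', which does not follow: $\LimInf{>-\infty}$ is compatible with arbitrarily deep \emph{finite} excursions, so $p_0$ controls only the event that the energy drifts to $-\infty$, not the probability of a drop of any fixed size. What the iteration needs is a $q<1$ bounding, uniformly over states, strategies \emph{and histories}, the probability of a descent by some fixed $K$ before reaching $T$; producing that uniform $q$ is essentially the content of the lemma. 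For each fixed $\tau$ the drop-by-$i$ probabilities do decrease to $\Prob[\tau]{s'}(\LimInf{=-\infty})\le 1-p_0$, but the supremum over the uncountable family of strategies of a decreasing sequence of probabilities need not converge to the supremum of the limits, so finiteness of $V$ alone does not rescue the step. Your fallback, the supermartingale $\lambda^{-(\text{energy})}$, also fails as stated: it requires $\mathbb{E}[\lambda^{-\text{step}}]\le 1$ at every state, i.e.\ a nonnegative local drift, whereas a state outside $T$ can have strictly negative expected one-step reward (it simply cannot be revisited too often); at such a state $\lambda^{-(\text{energy})}$ is a strict \emph{sub}martingale for every $\lambda>1$. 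The argument in \cite{BBEK:IC2013} first corrects the energy by a state-dependent potential (a bias vector obtained from a linear program, in the same spirit as the gain--bias systems this paper uses for $\Gain$) so that the corrected process is a submartingale with bounded increments and positive drift off $T$, and only then applies an Azuma/optional-stopping maximal inequality over the infinite horizon; summing over time steps is where the $\frac{c^i}{1-c}$ shape comes from. Finally, fixing a single ``optimal'' $\tau^\ast$ and transferring the bound to all $\tau$ needs its own justification: the relevant event is a reachability event in a countably infinite product MDP, where optimal strategies need not exist and a bound for one strategy does not automatically dominate the rest.
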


\begin{lemma}\label{lem:gain-value}
Consider a finite SSG $\?G=(V,E,\prob)$ 
where $\Gain$ holds a.s.\ for every state $s\in V$.
Then, for every $\delta \in [0,1)$ and $s \in V$, there exists a
$\hat{k} \in \N$
and an FD strategy $\hat{\sigma}$
s.t. 
\begin{enumerate}
\item
$\forall \tau.\, \Prob[\hat{\sigma},\tau]{s}(\Gain)=1$, and \label{lem:gain-value-ad1}
\item
$\forall \tau.\, \Prob[\hat{\sigma},\tau]{s}(\EN{\hat{k}} \cap \Parity) \ge \delta$.
\label{lem:gain-value-ad2}
\end{enumerate}
\end{lemma}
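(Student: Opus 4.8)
The plan is to realise $\hat\sigma$ as a single FD strategy that wins $\Gain$ almost surely from $s$, and then to show that, in the finite MDP obtained by fixing $\hat\sigma$, minimizer cannot force the energy below $0$ except with probability exponentially small in the initial level, uniformly over her strategies.

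First, since $\overline{\Gain}=\LimInf{=-\infty}\cup\overline{\Parity}$ is shift-invariant and submixing and finite MDPs admit optimal FD strategies for $\Gain$, \cref{lem:FD-strategy} gives that maximizer has an optimal FD strategy $\hat\sigma$ for $\Gain$ from $s$. As $\Gain$ holds a.s.\ at $s$ by hypothesis, $\Val{s}{\Gain}=1$, so $\forall\tau.\,\Prob[\hat\sigma,\tau]{s}(\Gain)=1$, which is the first requirement. Because $\Gain\subseteq\Parity$, this also gives $\forall\tau.\,\Prob[\hat\sigma,\tau]{s}(\Parity)=1$, so the second requirement will follow once a uniform lower bound on $\Prob[\hat\sigma,\tau]{s}(\EN{\hat k})$ is available, via a union bound.

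Next, I would consider the finite MDP $\?M\eqdef\?G[\hat\sigma]$ (in which only minimizer still chooses), restricted to the states reachable from $s$ under $\hat\sigma$ and with rewards inherited from $\?G$, so that energy levels in $\?M$ coincide with those in $\?G$. The key claim is that every state of $\?M$ has value $0$ for $\LimInf{=-\infty}$: otherwise minimizer could, from $s$, first steer the play to such a state with positive probability and then switch to a strategy forcing $\LimInf{=-\infty}$ with positive probability; since $\LimInf{=-\infty}$ is shift-invariant, this would give $\Prob[\hat\sigma,\tau]{s}(\LimInf{=-\infty})>0$, contradicting $\forall\tau.\,\Prob[\hat\sigma,\tau]{s}(\Gain)=1$. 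Hence the target set $T=\{s'\mid\Val{s'}{\LimInf{=-\infty}}=1\}$ of \cref{BBEK:IC2013:Lemma3.9} is empty in $\?M$, so ${\it Reach}_T=\emptyset$ and $\overline{{\it Reach}_T}=\Runs{}$.

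Finally, applying \cref{BBEK:IC2013:Lemma3.9} to $\?M$ gives a rational $c<1$ and an $h\ge 0$ with $\forall\tau.\,\Prob[\hat\sigma,\tau]{s}(\overline{\EN{i}})\le\frac{c^i}{1-c}$ for all $i\ge h$. Choosing $\hat k\ge h$ large enough that $\frac{c^{\hat k}}{1-c}\le 1-\delta$ yields $\forall\tau.\,\Prob[\hat\sigma,\tau]{s}(\EN{\hat k})\ge\delta$, and combining this with $\Prob[\hat\sigma,\tau]{s}(\Parity)=1$ gives $\Prob[\hat\sigma,\tau]{s}(\EN{\hat k}\cap\Parity)\ge\delta$, which is the second requirement. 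The main obstacle is the propagation argument above: turning almost-sure winning of $\Gain$ from $s$ into value $0$ for $\LimInf{=-\infty}$ at \emph{every} reachable residual state, so that \cref{BBEK:IC2013:Lemma3.9} applies with empty target; this uses shift-invariance of $\LimInf{=-\infty}$ together with a careful concatenation of minimizer strategies, and one must also check that fixing the FD strategy $\hat\sigma$ genuinely yields a finite MDP preserving the energy semantics and that the value in \cref{BBEK:IC2013:Lemma3.9} is read with minimizer as $\?M$'s sole player.
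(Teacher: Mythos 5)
Your proposal is correct and takes essentially the same route as the paper's proof: obtain an almost-surely winning FD strategy for $\Gain$ via \cref{lem:MDP-FD-strategy} and \cref{lem:FD-strategy}, fix it to obtain a finite MDP, observe that the target set $T$ of \cref{BBEK:IC2013:Lemma3.9} cannot be reached (you prove the marginally stronger fact that $T$ is empty on the reachable part), then apply that lemma to bound $\Prob[\hat{\sigma},\tau]{s}(\overline{\EN{\hat{k}}})$ and intersect with the probability-one parity event. The only difference is that you spell out the concatenation and shift-invariance argument for why $T$ is unreachable, which the paper asserts without detail.
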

\begin{proof}
Fix a $\delta \in [0,1)$ and a state $s \in V$.
Both $\LimInf{=-\infty}$, as well as $\Parity$ objectives are 
\emph{shift-invariant} and \emph{submixing},
and therefore also the union has both these
properties.
It follows
that $\overline{\Gain} = \overline{\LimInf{>-\infty}\cap\Parity} = \LimInf{=-\infty}\cup\overline{\Parity}$
is both shift-invariant and submixing, since the complement of a parity
objective is also a parity objective.
By \cref{lem:MDP-FD-strategy} and \cref{lem:FD-strategy}, there exists an almost-sure winning FD
strategy $\hat{\sigma}$ for maximizer for the objective $\Gain$ %
from $s$, 
i.e., 
$\forall \tau.\, \Prob[\hat{\sigma},\tau]{s}(\Gain)=1$,
thus yielding \cref{lem:gain-value-ad1}.

Let $\?M$ be the MDP obtained from $\?G$ by fixing the strategy
$\hat{\sigma}$ for maximizer from $s$.
Since $\?G$ is finite and $\hat{\sigma}$ is FD, also $\?M$ is finite.  
In $\?M$ we have
$\forall \tau.\, \Prob[\tau]{s}(\Gain)=1$.
In particular, in $\?M$, the set $T \eqdef \{s'\ |\ \Val{s'}{\LimInf{=-\infty}}=1\}$
is not reachable, i.e., 
$\forall \tau.\, \Prob[\tau]{s}({\it Reach}_T)=0$.

By \cref{BBEK:IC2013:Lemma3.9}, in $\?M$
there exists a horizon $h \in\N$ and a constant $c <1$ such that for all $i \ge h$ we have
$\forall \tau.\, \Prob[\tau]{s}(\overline{\EN{i}} \cap \overline{{\it Reach}_T}) \le \frac{c^i}{1-c}$.
Since $T$ cannot be reached in $\?M$, the condition $\overline{{\it Reach}_T}$ evaluates 
to ${\it true}$ and we have
$\forall \tau.\, \Prob[\tau]{s}(\EN{i}) \ge 1-\frac{c^i}{1-c}$.
Since $c<1$ and $\delta<1$, we can pick a sufficiently large $\hat{k} \ge h$ such that 
$1-\frac{c^{\hat{k}}}{1-c} \ge \delta$ and obtain
$\forall \tau.\, \Prob[\tau]{s}(\EN{\hat{k}}) \ge \delta$ in $\?M$.
Moreover, the above property 
$\forall \tau.\, \Prob[\tau]{s}(\Gain)=1$
in particular implies 
$\forall \tau.\, \Prob[\tau]{s}(\Parity)=1$.
Thus we obtain
$\forall \tau.\, \Prob[\tau]{s}(\EN{\hat{k}} \cap \Parity) \ge \delta$ in $\?M$.

Back in the SSG $\?G$, we have
$\forall \tau.\, \Prob[\hat{\sigma},\tau]{s}(\EN{\hat{k}} \cap \Parity) \ge \delta$
as required for \cref{lem:gain-value-ad2}.
\end{proof}

\begin{lemma}\label{lem:enpar-bailout}
  $\EN{k} \cap \Parity \subseteq \kBailout$.
\end{lemma}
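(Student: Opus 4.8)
The plan is to show the set inclusion directly by taking an arbitrary run $\rho = s_0 e_0 s_1 e_1 \ldots \in \EN{k} \cap \Parity$ and verifying that it belongs to $\kBailout = \bigcup_l \Bailout(k,l)$, i.e.\ that $\rho \in (\ES{k,l}\cap\Parity)\cup(\EN{k}\cap\LimSup{=\infty})$ for some $l$. Since $\rho\in\EN{k}$ already, the $k$-energy part of both disjuncts is free; what has to be established is that at least one of the remaining requirements holds, namely either $\rho$ additionally satisfies the $l$-storage condition for some finite $l$, or else its energy level has $\limsup = \infty$. So the argument naturally splits on the behaviour of the sequence of energy levels $E_n \eqdef k + \sum_{i=0}^{n-1}\cost(e_i)$, all of which are $\ge 0$ because $\rho\in\EN{k}$.

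The key case distinction I would carry out is on whether $\limsup_{n\to\infty} E_n = \infty$ or not. If $\limsup_n E_n = \infty$, then $\rho \in \LimSup{=\infty}$, so together with $\rho\in\EN{k}$ we get $\rho\in\EN{k}\cap\LimSup{=\infty}\subseteq\Bailout(k,l)$ for every $l$, and we are done. If instead $\limsup_n E_n < \infty$, then the nonnegative integer sequence $(E_n)_n$ is bounded, say $E_n \le M$ for all $n$ and some $M\in\N$. I would then show this forces the $l$-storage condition for a suitable finite $l$: for any infix $s_m e_m \ldots s_n$, the partial sum $\sum_{i=m}^{n-1}\cost(e_i) = E_n - E_m \ge 0 - M = -M$, since $E_n\ge 0$ and $E_m\le M$. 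Hence taking $l \eqdef M$ gives $l + \sum_{i=m}^{n-1}\cost(e_i)\ge 0$ for every infix, so $\rho$ satisfies the $l$-storage condition, and since it already satisfies the $k$-energy condition we have $\rho\in\ES{k,l}$. Combined with the hypothesis $\rho\in\Parity$, this yields $\rho\in\ES{k,l}\cap\Parity\subseteq\Bailout(k,l)\subseteq\kBailout$.

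There is no real obstacle here; the only point needing a little care is the bookkeeping in the bounded case — observing that boundedness of the energy levels (which follows from finiteness of the $\limsup$, using that the $E_n$ are nonnegative integers, hence bounded below, so a finite $\limsup$ makes them bounded above too) is exactly equivalent to the existence of a finite storage bound $l$. This is because any infix sum is a difference of two energy levels, so it is bounded below precisely when the energy levels themselves are. Once that equivalence is noted, the inclusion is immediate, and the choice of $l$ can even be made uniform in the run (it depends only on the $\limsup$).
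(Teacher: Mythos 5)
Your proof is correct and is essentially the paper's argument run in the contrapositive direction: the paper shows that failure of every $l$-storage bound forces prefixes of arbitrarily large cost (hence $\LimSup{=\infty}$), while you show that a finite $\limsup$ bounds the energy levels and hence yields a storage bound $l=M$. Both rest on the same observation that an infix sum is a difference of two energy levels, one bounded below by $0$ via $\EN{k}$, so nothing further is needed.
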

\begin{proof}
Let $\rho$ be a run in $\EN{k} \cap \Parity$.
There are two cases. In the first case we have $\rho \in \cup_{l} \ES{k,l}\cap\Parity$ and thus
directly $\rho \in \kBailout$.
Otherwise, $\rho \notin \cup_{l} \ES{k,l}\cap\Parity$.
Since $\rho \in \Parity$, we must have $\rho \notin \cup_{l} \ES{k,l}$.
Since $\rho \in \EN{k}$, it follows that $\rho$ does not
satisfy the $l$-storage condition for any $l \in \N$.
So, for every $l \in \N$, there exists an infix $\rho'$ of $\rho$ 
s.t.\ $l + \cost(\rho') < 0$.
Let $\rho''$ be the prefix of $\rho$ before $\rho'$.
Since $\rho \in \EN{k}$ we have $k + \cost(\rho''\rho') \ge 0$
and thus $\cost(\rho'') \ge -k - \cost(\rho') > -k + l$.
To summarize, if $\rho \notin \cup_{l} \ES{k,l}\cap\Parity$
then, for every $l$, it has a prefix $\rho''$ with
$\cost(\rho'') > -k + l$. Thus
$\rho \in \LimSup{=\infty}$.
Thus $\rho \in \kBailout$.
\end{proof}

We now define $W'$ as the set of states that are almost-sure winning for
energy-parity with some sufficiently high initial energy level.
($W'$ is also called the winning set for the unknown initial credit problem.) 

\begin{definition}\label{def:Wprime}
$W' \eqdef \bigcup_{k} \AS{\EN{k} \cap \Parity}$.
\end{definition}

\begin{lemma}\label{lem:connect-aspar-gain-bailout}
\
\begin{enumerate}
\item\label{lem:connect-aspar-gain-bailout-1}
$\AS{\EN{k} \cap \Parity} \subseteq \AS{\Gain \cap \always W'}$
\item\label{lem:connect-aspar-gain-bailout-2}
$\AS{\EN{k} \cap \Parity} \subseteq \AS{\kBailout \cap \always W'}$
\end{enumerate}
\end{lemma}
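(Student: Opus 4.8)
The plan is to fix $k$ and a state $s \in \AS{\EN{k} \cap \Parity}$, and exhibit, for each of the two parts, an almost-sure winning strategy for the target objective that in addition keeps the play inside $W'$ forever. The crucial observation for the $\always W'$ part is that $W'$ is, in fact, closed under maximizer's (appropriately chosen) moves and under probabilistic moves with positive probability, because winning energy-parity with \emph{some} sufficiently high initial credit is a property that, along any play consistent with an a.s.\ winning strategy, is inherited by all successors: if from $s$ maximizer can win $\EN{k}\cap\Parity$ a.s., then after one step the remaining play must still be a.s.\ winning for $\EN{k'}\cap\Parity$ with $k'$ the updated energy level (or else the minimizer/chance would have a positive escape probability), so every reachable state lies in $W' = \bigcup_{k}\AS{\EN{k}\cap\Parity}$. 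Hence any strategy witnessing a.s.\ energy-parity from $s$ automatically stays in $W'$ almost surely; one may further prune outgoing edges to states outside $W'$ without affecting the value, so WLOG the game restricted to $W'$ still has $s$ a.s.\ winning for $\EN{k}\cap\Parity$.

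For part~\ref{lem:connect-aspar-gain-bailout-2}, combine this with Lemma~\ref{lem:enpar-bailout}: since $\EN{k}\cap\Parity \subseteq \kBailout$, any a.s.\ winning strategy for $\EN{k}\cap\Parity$ from $s$ is also a.s.\ winning for $\kBailout$, and by the closure argument above it also witnesses $\always W'$ with probability $1$. Therefore $s \in \AS{\kBailout \cap \always W'}$.

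For part~\ref{lem:connect-aspar-gain-bailout-1}, I would argue that a.s.\ satisfaction of $\EN{k}\cap\Parity$ already implies a.s.\ satisfaction of $\Gain = \LimInf{>-\infty}\cap\Parity$: the $\Parity$ component is shared, and satisfying $\EN{k}$ (energy never drops below $0$, so partial sums stay $\ge -k$) gives $\LimInf{>-\infty}$ on \emph{every} such run, not just almost surely. Combined with the $\always W'$ closure, this yields $s \in \AS{\Gain \cap \always W'}$. The main obstacle, and the only place requiring care, is the closure claim for $W'$ — specifically justifying that from an a.s.\ winning position the minimizer cannot move to, and chance cannot reach with positive probability, a state that fails to be a.s.\ winning for energy-parity at the relevant (updated) energy level; this is a standard one-step argument for a.s.\ objectives (strong determinacy from \cite{KMSW2017} plus the observation that $\EN{k}$ is "prefix-independent up to shifting the credit"), but it must be spelled out since $\EN{k}$ itself is not shift-invariant. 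Everything else is bookkeeping on energy levels.
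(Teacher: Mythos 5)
Your proposal is correct and matches the paper's proof in all essentials: the paper also shows that the single witnessing strategy $\sigma$ works for both parts, using the run inclusions $\EN{k}\cap\Parity \subseteq \Gain$ and $\EN{k}\cap\Parity \subseteq \kBailout$ (\cref{lem:enpar-bailout}), and establishes $\always W'$ by the same prefix-shifting argument (every state $s_m$ reached after a prefix $\rho'$ lies in $\AS{\EN{k'}\cap\Parity}$ for the updated credit $k' = k + \cost(\rho')$, witnessed by $\sigma$ played with history $\rho'$). The extra machinery you mention (pruning edges, invoking strong determinacy) is not needed, but the core argument is the paper's.
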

\begin{proof}
Let $s \in \AS{\EN{k} \cap \Parity}$ and
$\sigma$ a strategy that witnesses this property.
Except for a null-set, all runs
$\rho = se_0s_1e_1\ldots e_{n-1}s_n \dots$ from $s$ induced by $\sigma$ satisfy
$\EN{k} \cap \Parity$. 

Let $\rho' = se_0s_1e_1\ldots s_m$ be a finite prefix of $\rho$.
For every $n \ge 0$ we have
$k+\sum_{i=0}^{n-1} \cost(e_i) \ge 0$, since $\rho \in \EN{k}$.
In particular this holds for all $n \ge m$.
So, for every $n \ge m$, we have 
$k+\sum_{i=0}^{m-1} \cost(e_i) + \sum_{i=m}^{n-1} \cost(e_i) \ge 0$.
Therefore $s_m \in \AS{\EN{k'} \cap \Parity}$, where
$k' = k+\sum_{i=0}^{m-1} \cost(e_i)$, as witnessed by playing $\sigma$
with history $se_0s_1e_1\ldots s_m$ from $s_m$.
Thus $s_m \in \bigcup_{k} \AS{\EN{k} \cap \Parity} = W'$, i.e., 
almost all $\sigma$-induced runs $\rho$ satisfy $\always W'$.

Towards \cref{lem:connect-aspar-gain-bailout-1},
we have $\EN{k} \subseteq \LimInf{>-\infty}$ and
thus $\EN{k} \cap \Parity \subseteq \LimInf{>-\infty}\cap\Parity = \Gain$.
Therefore $\sigma$ witnesses
$s \in \AS{\Gain \cap \always W'}$.

Towards \cref{lem:connect-aspar-gain-bailout-2},
we have $\EN{k} \cap \Parity \subseteq \kBailout$
by \cref{lem:enpar-bailout}.
Thus $\sigma$ witnesses $s \in \AS{\kBailout \cap \always W'}$.
\end{proof}

\begin{lemma}\label{claim1}
$W' \subseteq W$.
\end{lemma}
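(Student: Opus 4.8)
The plan is to leverage the maximality of $W$: by \cref{def:W}, $W$ is the \emph{largest} set $X$ satisfying $X \subseteq \AS{\Gain \cap \always X}\cap\bigcup_{k}\AS{\kBailout \cap \always X}$. Hence it suffices to show that $W'$ itself satisfies this very inclusion, i.e., that
\[
W' \;\subseteq\; \AS{\Gain \cap \always W'}\;\cap\;\bigcup_{k}\AS{\kBailout \cap \always W'},
\]
and then conclude $W' \subseteq W$ directly from maximality.

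To verify this inclusion, I would unfold $W' = \bigcup_{k}\AS{\EN{k}\cap\Parity}$ (\cref{def:Wprime}) and fix an arbitrary $k$. From \cref{lem:connect-aspar-gain-bailout-1} we get $\AS{\EN{k}\cap\Parity} \subseteq \AS{\Gain\cap\always W'}$, and from \cref{lem:connect-aspar-gain-bailout-2} we get $\AS{\EN{k}\cap\Parity} \subseteq \AS{\kBailout\cap\always W'}$; since $\AS{\kBailout\cap\always W'}$ is one of the terms of the union $\bigcup_{k}\AS{\kBailout\cap\always W'}$, also $\AS{\EN{k}\cap\Parity}\subseteq\bigcup_{k}\AS{\kBailout\cap\always W'}$. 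The right-hand sides of both inclusions are independent of $k$, so taking the union over all $k$ of the left-hand sides yields $W' \subseteq \AS{\Gain\cap\always W'}$ and $W'\subseteq\bigcup_{k}\AS{\kBailout\cap\always W'}$ simultaneously, which is exactly the defining inclusion of \cref{def:W}. Maximality of $W$ then gives $W'\subseteq W$.

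This argument is short and poses no real obstacle — all the substantive work has already been packaged into \cref{lem:connect-aspar-gain-bailout}, which in turn rests on the containments $\EN{k}\subseteq\LimInf{>-\infty}$ and $\EN{k}\cap\Parity\subseteq\kBailout$ (\cref{lem:enpar-bailout}) together with the self-referential observation that almost every $\sigma$-run from an $\EN{k}\cap\Parity$-winning state stays inside $W'$. The only point worth stating carefully is that there is no circularity here: although $W'$ appears on both sides of the inclusion being checked, $W'$ is a fixed, previously-defined set, not something being constructed, so the step merely checks that $W'$ belongs to the class of fixed-points of the operator in \cref{def:W} — which is precisely what the maximality of $W$ requires.
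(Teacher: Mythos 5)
Your proof is correct and follows essentially the same route as the paper's: both verify that $W'$ satisfies the defining inclusion of \cref{def:W} by invoking the two parts of \cref{lem:connect-aspar-gain-bailout}, and then conclude by the maximality of $W$. Your additional remarks on taking the union over $k$ and on the absence of circularity are accurate elaborations of steps the paper leaves implicit.
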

\begin{proof}
It suffices to show that $W'$ satisfies the monotone condition imposed on
$W$ (cf. \cref{def:W}), since $W$ is defined as the largest set satisfying this condition.

Let $s \in W' = \bigcup_{k} \AS{\EN{k} \cap \Parity}$.
Then $s \in \AS{\EN{\hat{k}} \cap \Parity}$ for some fixed $\hat{k}$.
By \cref{lem:connect-aspar-gain-bailout}(1) we have
$s \in \AS{\Gain \cap \always W'}$.
By \cref{lem:connect-aspar-gain-bailout}(2) we have
$s \in \AS{\hat{k}\text{-}\Bailout \cap \always W'}
\subseteq
\bigcup_{k}\AS{\kBailout \cap \always W'}$.
\end{proof}

\begin{proof}[Proof of \cref{thm:correctness}]
Towards the $\subseteq$ inclusion,
we have
\[
\AS{\EN{k} \cap \Parity}
\subseteq
\AS{\kBailout \cap \always W'}
\subseteq
\AS{\kBailout \cap \always W}
\]
by \cref{lem:connect-aspar-gain-bailout}(2)
and
\cref{claim1}.

Towards the $\supseteq$ inclusion, let 
$s \in \AS{\kBailout \cap \always W}$ and $\sigma_1$ be a strategy that witnesses this.
We show that 
$s \in \AS{\EN{k} \cap \Parity}$.
We now consider the modified SSG $\?G' = (W,E,\prob)$ with the state set
restricted to $W$. In particular, $s \in W$ and 
$\sigma_1$ witnesses $s \in \AS{\kBailout}$ in $\?G'$.
We now construct a strategy $\sigma$ that witnesses 
$s \in \AS{\EN{k} \cap \Parity}$ in $\?G'$, and thus also in $\?G$.
The strategy $\sigma$ will use infinite memory to keep track of the
current energy level of the run.

Apart from $\sigma_1$, we require several more strategies as building blocks
for the construction of $\sigma$.

First, in $\?G$ we had $\forall s' \in W.\, s' \in \AS{\Gain \cap \always W}$, 
and thus in $\?G'$ we have $\forall s' \in W.\, s' \in \AS{\Gain}$.
For every $s' \in W$ we instantiate \cref{lem:gain-value} for $\?G'$ with
$\delta = 1/2$ and obtain a number $\hat{k}_{s'}$ and a strategy
$\hat{\sigma}_{s'}$
with
\begin{enumerate}
\item
$\forall \tau.\, \Prob[\hat{\sigma}_{s'},\tau]{s'}(\Gain)=1$, and \label{thm:correctness-ad1}
\item
$\forall \tau.\, \Prob[\hat{\sigma}_{s'},\tau]{s'}(\EN{\hat{k}_{s'}} \cap \Parity) \ge 1/2$.
\label{thm:correctness-ad2}
\end{enumerate}
Let $k_1 \eqdef \max\{\hat{k}_{s'}\ |\ s' \in W\}$.
The strategies $\hat{\sigma}_{s'}$ are called \emph{gain strategies}.

Second, by the finiteness of $V$, there is a minimal number
$k_2$ such that $\bigcup_k \AS{\kBailout \cap \always W} = \bigcup_{k \le k_2}
\AS{\kBailout \cap \always W}$ in $\?G$.
Therefore, in $\?G'$ we have that
$$W \subseteq \bigcup_k \AS{\kBailout} = \bigcup_{k \le k_2}
\AS{\kBailout} = \AS{k_2\text{-}\Bailout}.$$
Thus in $\?G'$ for every $s' \in W$ there exists a strategy
$\tilde{\sigma}_{s'}$ with
$\forall \tau.\, \Prob[\tilde{\sigma}_{s'},\tau]{s'}(k_2\text{-}\Bailout)
= 1$.
The strategies $\tilde{\sigma}_{s'}$ are called \emph{bailout strategies}.
Let $k' \eqdef k_1 + k_2 - k +1$. We now define the strategy $\sigma$.
\begin{description}
\item[Start:] 
First $\sigma$ plays like $\sigma_1$ from $s$. 
Since $\sigma_1$ witnesses $s \in \AS{\kBailout}$ 
against every minimizer strategy $\tau$, almost all induced runs
$\rho = s e_0s_1e_1\ldots$ satisfy either
\begin{description}
\item[(A)] $(\cup_{l}\ES{k,l} \cap \Parity)$, or
\item[(B)] $(\EN{k} \cap \LimSup{=\infty})$.
\end{description}
Almost all runs $\rho$ of the latter type (B) (and potentially also some runs of type
(A)) satisfy $\EN{k}$ and $\sum_{i=0}^l\cost(e_i) \ge k'$ eventually for some
$l$. 
If we observe $\sum_{i=0}^l\cost(e_i) \ge k'$ for some prefix 
$s e_0s_1e_1\ldots e_l s'$ of the run $\rho$ 
then our strategy $\sigma$ plays from $s'$ as
described in the {\bf Gain} part below.
Otherwise, if we never observe this condition, then our run $\rho$
is of type (A) and $\sigma$ continues playing like $\sigma_1$.
Since property (A) implies $(\EN{k} \cap \Parity)$, this is sufficient.
\item[Gain:]
In this case we are in the situation where we have reached 
some state $s'$ after some finite prefix $\rho'$ of the run, where
$\cost(\rho') \ge k'$. Our strategy $\sigma$ now 
plays like the gain strategy $\hat{\sigma}_{s'}$, as long as
$\cost(\rho') \ge k'-k_1$ holds for the current prefix $\rho'$ of the run.
By \cref{thm:correctness-ad2}, this will satisfy 
$\forall \tau.\, \Prob[\hat{\sigma}_{s'},\tau]{s'}(\EN{\hat{k}_{s'}} \cap
\Parity) \ge 1/2$ and thus 
$\forall \tau.\, \Prob[\hat{\sigma}_{s'},\tau]{s'}(\EN{k_1} \cap \Parity) \ge 1/2$.
It follows that with probability $\ge 1/2$ we will keep playing
$\hat{\sigma}_{s'}$ forever and satisfy $\Parity$ and always $\cost(\rho') \ge k'-k_1$
and thus $\EN{k}$, since $k + \cost(\rho') \ge k+k'-k_1 = k_2 + 1\ge 0$.

Otherwise, if eventually $\cost(\rho') = k'-k_1 -1$ then we have 
$k + \cost(\rho') = k_2$. In this case (which happens with probability $<1/2$)
we continue playing as described in the {\bf Bailout} part below.
\item[Bailout:]\label{item:sigma:bailout}
In this case we are in the situation where we have reached 
some state $s'' \in W$ after some finite prefix $\rho'$ of the run, where
$k+\cost(\rho') = k_2$. 
Since $s'' \in W$, we can now let our strategy $\sigma$ 
play like the bailout strategy $\tilde{\sigma}_{s''}$ and obtain
$\forall \tau.\,
\Prob[\tilde{\sigma}_{s''},\tau]{s''}(k_2\text{-}\Bailout) = 1$.
Thus almost all induced runs $\rho'' = s'' e_0s_1e_1\ldots$ from $s''$ 
satisfy either
\begin{description}
\item[(A)] $(\cup_{l}\ES{k_2,l} \cap \Parity)$, or
\item[(B)] $(\EN{k_2} \cap \LimSup{=\infty})$.
\end{description}
As long as $\cost(\rho') < k'$ holds for the current prefix $\rho'$ of the run, we
keep playing $\tilde{\sigma}_{s''}$. 
Otherwise, if eventually $\cost(\rho') \geq k'$ holds, then we switch back to playing
the {\bf Gain} strategy above.
All the runs that never switch back to playing
the {\bf Gain} strategy must be of type (A) and thus satisfy $\Parity$.
Since we have 
$k_2\text{-}\Bailout \subseteq \EN{k_2}$, it follows that,
for every prefix $\rho''$ of the run from $s''$, according to $\tilde{\sigma}_{s''}$
we have $k_2 + \cost(\rho'') \ge 0$.
Thus, for every prefix $\rho'''$ of $\rho$, we have
$k + \cost(\rho''') = k + \cost(\rho') + \cost(\rho'') = k_2 + \cost(\rho'') \ge 0$.
Therefore, the $\EN{k}$ objective is satisfied by all runs.
\end{description}
As shown above, almost all runs induced by $\sigma$ that eventually stop switching between the three modes
satisfy $\EN{k} \cap \Parity$.
Switching from Gain/Bailout to Start is impossible, but switching from Gain to
Bailout and back is possible. However, the set of runs that infinitely often
switch between Gain and Bailout is a null-set, because the probability of
switching from Gain to Bailout is $\le 1/2$.
Thus, $\sigma$ witnesses $s \in \AS{\EN{k} \cap \Parity}$. 
\end{proof}

\begin{remark}\label{rem:WvsWprime}
It follows from the results above that $W' = W$.
The $\subseteq$ inclusion holds by \cref{claim1}.
For the reverse inclusion we have
\begin{align*}
W &\subseteq  \bigcup_{k}\AS{\kBailout \cap \always W} & \text{by \cref{def:W}}\\
  &=          \bigcup_{k}\AS{\EN{k} \cap \Parity}      & \text{by \cref{thm:correctness}}\\
  &=          W'                                      & \text{by \cref{def:Wprime}.}
\end{align*}
\end{remark}

\section{Bailout}
\label{sec:bailout}
In this section we will argue that it is possible 
decide, in \NP\ and \coNP, whether the bailout objective can be satisfied almost surely. %
More precisely, we show the existence of procedures to decide if, for a given $k\in\N$ and state $s$, there exists an $l\in\N$ such that
$s$ almost-surely satisfies the $\Bailout(k,l)$ objective

$$ \Bailout(k,l)\quad \eqdef \quad(\ES{k,l}\cap\Parity)\cup(\EN{k}\cap\LimSup{=\infty}). $$

Recall that the idea behind the Bailout objective is
that, during a game for energy-parity, maximizer is temporarily abandoning the parity (but not the energy) condition in order to increase the energy to a sufficient level (which will then allow him to try an a.s.~strategy for $\Gain$\ once more).
However, in a stochastic game -- as opposed to an MDP \cite{MSTW2017} --
an opponent could possibly prevent this increase in energy level at the expense of satisfying the
original energy-parity objective in the first place (cf. Example \ref{ex:two-ways}).
The Bailout objective is designed to capture the disjunction of both outcomes,
as both are favorable for the maximizer.
The parameter $k$ is the acceptable total energy drop (i.e., the initial value),
and the parameter $l$ is the acceptable energy drop on any infix of a play,
which translates to the upper bound on the energy level in the second outcome.

\medskip
The question can be phrased equivalently as 
membership of a control state $s$ in 
the almost-sure set for the $\kBailout$ objective %
for a given game $\sys{G}$ and energy level $k\in\N$.

\begin{restatable}{theorem}{thmBailoutComp}
    \label{thm:bailout-comp}
    One can check in $\NP, \coNP$ and pseudo-polynomial time
    if, for a given SSG $\sys{G}\eqdef(V, E, \prob)$,
    $k\in\N$ and control state $s\in V$,
    maximizer can almost-surely satisfy
    ${\kBailout}$
    from $s$.
    
    Moreover, there are 
    $K,L\in\N$, polynomial in $\card{V}$ and the largest absolute transition reward,
    so that
    $\bigcup_{k\ge 0}\AS[\sys{G}]{\kBailout} = \AS[\sys{G}]{\Bailout(K,L)}$.
    And so, checking whether state $s$ belongs to $\bigcup_{k\ge 0}\AS[\sys{G}]{\kBailout}$ is in $\NP$ and $\coNP$.
\end{restatable}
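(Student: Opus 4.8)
\emph{Proof plan.}
The first move is to simplify the objective. Unwinding \cref{def:gain-bailout} and using that $\EN{k}\setminus\ES{k}\subseteq\LimSup{=\infty}$ (which is exactly what the proof of \cref{lem:enpar-bailout} establishes, writing $\ES{k}=\bigcup_l\ES{k,l}$), one checks that, \emph{as sets of runs},
$\kBailout=(\ES{k}\cap\Parity)\cup(\EN{k}\cap\LimSup{=\infty})=\EN{k}\cap(\Parity\cup\LimSup{=\infty})$.
Hence deciding $s\in\AS{\kBailout}$ is the same as deciding $s\in\AS{\EN{k}\cap(\Parity\cup\LimSup{=\infty})}$, and for the individual $\Bailout(k,l)$ objectives one argues similarly, keeping the extra component $\EN{k}\cap\ES{k,l}$, which — by the standard equivalence between the $l$-storage condition and the $l$-capped energy staying non-negative — is a finite-state safety property. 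The energy condition $\EN{k}$ itself becomes finite-state once the energy level is tracked up to a polynomial cap, using the usual energy-game precomputation of the set of control states from which, for all sufficiently high energy, the maximizer can almost surely keep the energy non-negative forever; above that region the exact energy may be forgotten.

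The core step is the $\LimSup{=\infty}$ disjunct, which is intrinsically about \emph{unbounded} energy and, as stressed in the introduction, cannot be read off any fixed cap (a play may drive the energy to $+\infty$ while even its mean-payoff value stays $0$, because of how the opponent and the parity condition interact). I would deal with it by a preliminary fixed-point computation of a \emph{booster set} $U\subseteq V$: the largest set such that from every $s'\in U$, and from every sufficiently high energy level, the maximizer can almost surely force the energy to strictly increase while keeping it non-negative and staying within $U$. This $U$ is obtained as a decreasing fixed point, each iteration of which is a finite-state almost-sure energy-reachability query (finite because, by positionality of energy objectives, the answer depends only on the control state once the energy exceeds a polynomial threshold). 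Iterating the increments shows that from $U$ with sufficiently high energy the maximizer almost surely satisfies $\EN{k}\cap\LimSup{=\infty}$; conversely any state from which $\EN{k}\cap\LimSup{=\infty}$ is almost-surely winnable must, under the witnessing strategy, almost surely reach $U$ at a high energy level. On the energy-tracking product the whole $\kBailout$ objective therefore becomes a Boolean combination of safety, reachability (to the booster region) and parity conditions, i.e.\ an $\omega$-regular, stochastic-parity-style winning condition; and the product has size polynomial in $\card{V}$, $k$, and the largest absolute reward (the per-$l$ variant adds the $\ES{k,l}$-safety component and costs a further factor polynomial in $l$).

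It then remains to assemble standard facts. The winner of a finite stochastic game with such an $\omega$-regular condition is decidable, lies in $\NP\cap\coNP$ (the $\coNP$ side via strong determinacy of almost-sure $\omega$-regular objectives, cf.\ \cite{KMSW2017}), and is computable in time polynomial in the size of the game (times a factor depending only on the number of priorities); composed with the product construction this gives the pseudo-polynomial-time algorithm for a fixed $k$. For the ``moreover'' part, the sets $\AS{\Bailout(k,l)}$ are non-decreasing in $l$ and $\AS{\kBailout}$ is non-decreasing in $k$, so by finiteness of $V$ and the polynomial a-priori bounds on the required initial credit and storage level developed in this section, the unions stabilise at values $K,L$ polynomial in $\card{V}$ and the largest absolute reward; an almost-sure winning strategy for $\bigcup_k\AS{\kBailout}$ can be extracted (as an MD strategy) from the product game with polynomial cap, and it witnesses $\Bailout(K,L)$, giving $\bigcup_{k}\AS{\kBailout}=\bigcup_{k,l}\AS{\Bailout(k,l)}=\AS{\Bailout(K,L)}$. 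Finally, when the numeric parameters are in binary the product is exponential, so instead of building it the $\NP$ algorithm guesses a succinct certificate — the maximizer's finite-memory winning strategy on the control states together with a polynomially bounded energy-progress measure witnessing the safety and boosting parts — and verifies it in polynomial time, and the $\coNP$ algorithm guesses the symmetric certificate for the minimizer, again using strong determinacy. The main obstacle is the core step: showing that the unbounded $\LimSup{=\infty}$ disjunct compiles into the finite-state reachability-to-$U$ condition; everything downstream is an assembly of known results about energy-reachability and finite stochastic games.
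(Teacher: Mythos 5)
Your opening identity $\kBailout=\EN{k}\cap(\Parity\cup\LimSup{=\infty})$ is correct, and you have located the difficulty in the right place, but the core step is left as an acknowledged obstacle and, as sketched, it does not go through. Replacing the $\LimSup{=\infty}$ disjunct by ``reach the booster set $U$ at high energy'' founders on a commitment problem: the objective is a disjunction over \emph{runs}, and a single almost-sure winning strategy may produce a positive measure of runs satisfying each disjunct without the maximizer ever being able to decide, at a finite time, which disjunct a given run will satisfy (this is exactly the phenomenon of \cref{ex:two-ways}). Completeness of your compilation fails: under a witnessing strategy for $\EN{k}\cap(\Parity\cup\LimSup{=\infty})$, a run with $\LimSup{=\infty}$ passes through arbitrarily high energies, but the residual strategy at those configurations is only guaranteed to win the \emph{disjunction}, not the boosting objective alone, so such a run need never enter $U$; hence ``safety $\wedge$ (parity $\vee$ reach $U$)'' is not equivalent to the original condition. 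The converse inclusion and the claim that the relevant energy thresholds are polynomial are likewise asserted rather than proved. Finally, the closing appeal to ``finite stochastic games with such $\omega$-regular conditions are in $\NP\cap\coNP$ and polynomial time'' is not a citable black box: Boolean combinations of parity conditions already make non-stochastic games $\coNP$-hard, so one must show that the compiled condition collapses to a single parity-like condition --- which is precisely the missing step.

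The paper resolves this differently and in a way that sidesteps commitment entirely. \Cref{lem:bailout'} first shows that one may strengthen $\EN{k}$ to $\ES{k,L}$ in the second disjunct for a polynomially bounded $L$, by surgery on winning strategies that cuts out redundant negative loops; this bounds all energy fluctuations, so $\LimSup{=\infty}$ coincides with $\LimInf{=\infty}$ (\cref{lem:bailout''}) and, more importantly, ``the energy diverges'' becomes detectable by a \emph{local, non-committal} device: a gadget letting the maximizer trade any positive reward for a visit to priority $0$ (\cref{lem:bailout'''}). This merges both disjuncts into a single storage-parity objective, after which the derandomization gadget of \cite{CJH2004} (\cref{lem:bailout:2pg}) and the known $\NP$, $\coNP$ and pseudo-polynomial bounds for non-stochastic energy-parity games \cite{CD2012} finish the proof and also deliver the bounds $K,L$. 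To repair your plan you would essentially have to prove the storage-bounding lemma and some analogue of the trade-in gadget, at which point you would be following the paper's route.
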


\begin{proof}[Proof (sketch)]
This is shown by a sequence of transformations of the game and ultimately reduced to a finding the winner of a non-stochastic game with an  energy-parity objective, which is known to be solvable in $\NP, \coNP$ and pseudo-polynomial time \cite{CD2012}.
One important observation is that it is possible to replace, without changing the outcome,
the energy $\EN{k}$ condition in the $\Bailout(k,l)$ objective
by the more restrictive energy-storage $\ES{k,l}$ condition.
See \cref{app:bailout} for further details.
\end{proof}

\section{Gain}
\label{sec:gain}
In this section we will argue that it is possible to decide, in \NP\ and \coNP, whether the $\Gain$ objective (i.e., $\LimInf{>-\infty}\cap\Parity$) can be satisfied almost surely.

We start by investigating the strategy complexity
of winning strategies for the $\Gain$ objective.

\begin{restatable}{lemma}{lemgainminMD}
\label{lem:gain:min-MD}
In every finite SSG, minimizer has optimal MD strategies for objective $\Gain$.
\end{restatable}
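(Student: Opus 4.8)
The plan is to prove that minimizer has optimal MD strategies for $\Gain = \LimInf{>-\infty}\cap\Parity$ by analyzing the structure of the complementary objective. Since minimizer is trying to enforce $\overline{\Gain} = \LimInf{=-\infty}\cup\overline{\Parity}$, and $\overline{\Parity}$ is again a parity condition, the minimizer's goal is the union of a ``limit-inferior-is-$-\infty$'' condition and a parity condition with respect to the dual priority function. The key observation is that this complementary objective is \emph{shift-invariant} (it does not depend on any finite prefix) and it is also \emph{prefix-independent} in a way that makes it amenable to the standard reduction for prefix-independent objectives. First I would recall that almost-sure $\Gain$ is strongly determined (stated in the preliminaries via \cite{KMSW2017}), so one can partition $V$ into the almost-sure winning set $\AS{\Gain}$ for maximizer and its complement, on which minimizer can enforce $\overline{\Gain}$ with positive probability from somewhere.

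The main step is to handle each side of this partition separately. On $\AS{\Gain}$, minimizer loses no matter what, so minimizer plays an arbitrary MD strategy there (any MD strategy is optimal since the value is $1$ for maximizer). On the complement $V\setminus\AS{\Gain}$, I would argue that minimizer can enforce $\overline{\Gain}$ \emph{almost surely} using an MD strategy — this is where the real work lies. Here I would invoke the dual of \cref{lem:FD-strategy}, or rather reprove its MD-analogue for the minimizer: since $\overline{\Gain}$ is shift-invariant and submixing (as already established in the proof of \cref{lem:gain-value}), and since in MDPs the minimizing player controlling $\overline{\Gain}$ — equivalently the maximizing player for the shift-invariant submixing objective $\overline{\Gain}$ — has optimal MD strategies (this is the classical result for mean-payoff-type and parity objectives, and for their unions, by Gimbert–Kelmendi \cite{GK2014}), the lifting theorem carries this to SSGs. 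Concretely, one applies the results of \cite{GK2014}: an objective whose complement is shift-invariant and submixing, and which admits optimal MD strategies in MDPs for the player whose objective it is, admits optimal MD strategies in SSGs. Since $\overline{\Gain}$ (minimizer's objective) is itself shift-invariant and submixing, and $\Gain$ (maximizer's objective) — whose complement is $\overline{\Gain}$, also shift-invariant and submixing — is the relevant object, the Gimbert–Kelmendi machinery applies symmetrically to give MD optimality for minimizer.

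More carefully, I would structure it as: (1) note $\overline{\Gain} = \LimInf{=-\infty}\cup\overline{\Parity}$ is shift-invariant and submixing (already in the excerpt); (2) cite that in every finite MDP, the player maximizing such an objective has optimal MD strategies — this follows from \cite{GK2014} since shift-invariance plus submixing of the objective itself is exactly Gimbert–Kelmendi's sufficient condition for MD-optimality in MDPs (a parity objective is a Muller/prefix-independent objective satisfying this; $\LimInf{=-\infty}$ is a ``lim-inf payoff'' type objective satisfying this; and the union of two such objectives is handled by their composition results); (3) apply the lifting result of \cite{GK2014} for SSGs (the same technique underlying \cref{lem:FD-strategy}, but in the MD rather than FD setting — \cite{GK2014} actually proves the MD version directly) to conclude minimizer has an optimal MD strategy for $\overline{\Gain}$ in the SSG, hence an optimal MD strategy for $\Gain$ (her value being $1$ minus maximizer's value). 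The main obstacle I anticipate is being precise about \emph{which} lifting statement to invoke: \cref{lem:FD-strategy} as stated in this paper gives \emph{FD} strategies, not MD, and is phrased for the maximizer's objective $\?O$ with $\overline{\?O}$ shift-invariant and submixing. To get MD one must go back to the stronger form in \cite{GK2014}, which shows that if the objective (here minimizer's $\overline{\Gain}$) is itself prefix-independent and submixing \emph{and} admits MD optimal strategies in one-player games, then the two-player game admits MD optimal strategies for that player. Verifying that $\LimInf{=-\infty}\cup\overline{\Parity}$ meets the exact hypotheses of the relevant \cite{GK2014} theorem — in particular that the \emph{union} of a payoff-type and a parity-type objective still admits MD optimal strategies in MDPs — is the delicate point, and I would handle it by appealing to the known fact that energy/mean-payoff-parity-type MDPs admit MD (or at least memoryless for the minimizer) optimal strategies, or by a direct argument restricting attention to end components.
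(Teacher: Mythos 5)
Your final three-step outline is, at its core, the paper's proof: observe that $\overline{\Gain}=\LimInf{=-\infty}\cup\overline{\Parity}$ is shift-invariant and submixing (both properties are closed under union, and the complement of a parity condition is again a parity condition), and invoke Gimbert--Kelmendi. The paper simply cites \cite[Theorem~5.2]{GK2014}, which is a \emph{direct} statement about finite SSGs: the player maximizing a shift-invariant submixing objective has optimal MD strategies; applied to the player pursuing $\overline{\Gain}$, this is exactly the lemma. Two pieces of your scaffolding are therefore unnecessary, and one intermediate claim is wrong. The MDP-then-lift structure is not needed, since no lifting is involved in \cite[Theorem~5.2]{GK2014} (it is \cref{lem:FD-strategy} that lifts from MDPs, and that result concerns FD strategies for the player whose \emph{complement} objective is shift-invariant and submixing --- not the statement you need here). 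The partition of $V$ into $\AS{\Gain}$ and its complement is likewise superfluous, and your claim that on $V\setminus\AS{\Gain}$ the minimizer can enforce $\overline{\Gain}$ \emph{almost surely} with an MD strategy is false in general: the value of $\Gain$ there may lie strictly between $0$ and $1$, and optimality for the minimizer means attaining that value, not winning with probability $1$. Your final argument does not actually rely on either of these detours, so the proof goes through; and the ``delicate point'' you flag about whether the union still admits MD optimal strategies dissolves once you note that shift-invariance and submixing of the union are all that \cite[Theorem~5.2]{GK2014} requires.
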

\begin{proof}
    We show that maximizer has MD optimal strategies for $\LimInf{=-\infty}\cup\Parity$.
    This is equivalent to the claim of the lemma because
    $\overline{\LimInf{>-\infty}\cap\Parity}=\LimInf{=-\infty}\cup\overline{\Parity}$
    and the complement of a parity condition is itself a parity condition (with all priorities incremented by one).

    We note that both $\LimInf{=-\infty}$, as well as parity objectives $\Parity{}$ are 
    shift-invariant and {submixing}
    and therefore also that the union
    $\LimInf{=-\infty}\cup\Parity$ has both these properties.
    The claim now follows from the fact that
    SSGs with objectives that are both submixing and shift-invariant
    admit MD optimal strategies for maximizer \cite[Theorem 5.2]{GK2014}.
\end{proof}

\noindent
Based on the results in \cite{MSTW2017} %
one can show a similar claim for maximizer strategies in MDPs.

\begin{restatable}{lemma}{lemMDPFDstrategy}
\label{lem:MDP-FD-strategy}
For finite MDPs, almost-sure winning maximizer strategies for $\Gain$ can be chosen FD.
\end{restatable}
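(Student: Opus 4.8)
The plan is to combine three ingredients: the decomposition of $\Gain = \LimInf{>-\infty}\cap\Parity$ according to its maximal end components, the known finite-memory structure of optimal strategies for the limsup/liminf "energy-like" sub-objective from \cite{MSTW2017}, and the standard fact that parity objectives in MDPs admit MD optimal (hence FD) strategies. First I would recall that in a finite MDP, an almost-sure winning strategy for any objective may be assumed to eventually stay inside a single maximal end component (MEC) with probability $1$, and that $\Gain$ is shift-invariant, so whether it holds a.s.\ is a property of MECs. Thus it suffices to fix a MEC $C$ in which $\Gain$ holds a.s.\ from some (equivalently, every) state and build an FD strategy there; reaching such a $C$ is a reachability objective, for which MD strategies suffice, and these can be combined with the FD strategy inside $C$ in the usual way (play MD to reach $C$, then switch).

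Next, inside the MEC $C$ I would isolate the two constraints. For the parity part, restrict attention to the sub-MEC(s) whose minimal priority is even — call the relevant priority $p$ — and note that from anywhere in $C$ one can reach such a region a.s.; a finite-memory "attractor-cycling" strategy visits priority $p$ infinitely often while never being forced to see a smaller (odd) priority infinitely often. For the $\LimInf{>-\infty}$ part, this is exactly (the complement of) an energy-type condition, and here I would invoke the analysis from \cite{MSTW2017}: in a finite MDP, ensuring $\LimInf{>-\infty}$ a.s.\ (equivalently avoiding $\LimInf{=-\infty}$) can be done with finite memory, essentially by a "gas-tank" strategy that uses a bounded counter to decide when to run a high-reward excursion versus when it is safe to run the parity-cycling routine. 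The FD strategy for $\Gain$ then alternates: run the parity routine for a bounded stretch, then run the energy-replenishing routine until the (bounded) internal credit counter is restored, and repeat; because both routines live in the finite MEC $C$ and use only bounded memory, the product strategy is FD, and a standard Borel–Cantelli argument shows that a.s.\ priority $p$ is seen infinitely often and the running sum stays bounded below.

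The main obstacle is making the alternation quantitatively correct: one must choose the length of the parity-cycling phases and the threshold of the energy counter so that the accumulated reward never drifts to $-\infty$ along almost every run, while still guaranteeing the good priority is seen infinitely often — in general the parity phase may have negative drift, so the energy-replenishing phase must more than compensate in expectation and with high enough probability. This is precisely the delicate point already handled for MDPs in \cite{MSTW2017}, so the cleanest route is to phrase the lemma as a direct corollary: extract from \cite{MSTW2017} the statement that, restricted to a MEC where $\Gain$ holds a.s., there is an FD strategy achieving $\Parity$ a.s.\ together with a bounded-below running sum, and then lift it across MECs via the MD reachability strategy as above. I would therefore present the proof mainly as a reduction to the MEC case plus a citation to the finite-memory construction of \cite{MSTW2017}, spelling out only the MEC decomposition and the attractor/reachability glue in detail.
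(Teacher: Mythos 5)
Your high-level plan (reduce to MECs via shift-invariance, glue with MD reachability strategies, and cite \cite{MSTW2017} for the hard part) points in a reasonable direction, but the central construction you sketch inside a MEC --- a fixed alternation between a bounded-length ``parity-cycling'' phase and an ``energy-replenishing'' phase governed by a bounded internal credit counter --- has a genuine gap, and it is not the statement that \cite{MSTW2017} actually provides. The two concerns cannot be handled by one uniform alternation scheme: there are MECs in which maximizer wins $\Gain$ a.s.\ although the attainable mean payoff is exactly $0$, so no ``replenishing routine'' with positive drift exists at all (he wins because he can satisfy a \emph{storage} condition, i.e., uniformly bound every intermediate energy dip); conversely, there are MECs where he can only win by maintaining strictly positive drift together with parity throughout, with no separate parity phase. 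In the first kind of MEC your replenishing phase is undefined; and in general, if the parity phase has negative drift, restoring a \emph{bounded} credit counter does not prevent the true accumulated reward from drifting to $-\infty$, while tracking the actual deficit would require unbounded memory --- this is precisely the mechanism that forces infinite memory for full energy-parity in MDPs. Finally, the statement you propose to ``extract'' from \cite{MSTW2017} (an FD strategy achieving parity a.s.\ with a bounded-below running sum on any MEC where $\Gain$ holds a.s.) is essentially the lemma being proved, so that step of the argument is circular rather than a reduction.

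The paper's proof avoids alternation entirely via the disjunctive characterization of \cite[Theorem~18]{MSTW2017}: after establishing $\Val{s}{\Gain}=\LVal{s}$ (\cref{lem:MDP-LVAL}) and the existence of optimal strategies from shift-invariance \cite{GH2010}, one obtains $\AS{\Gain}=\AS{\eventually (A\cup B)}$ with $A=\bigcup_k\AS{\ES{k}\cap\Parity}$ and $B=\AS{\LimInf{=\infty}\cap\Parity}$. The FD strategy first reaches $A\cup B$ almost surely with an MD strategy and then \emph{commits forever} to a single sub-strategy: in $A$, an a.s.\ winning storage-parity strategy, which is deterministic with memory $\?O(k\cdot\abs{G})$; in $B$, an a.s.\ winning FD strategy for $\MP{>0}\cap\Parity$ (equivalent in MDPs to $\LimInf{=\infty}\cap\Parity$), which exists by \cite{CD2011}. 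To repair your write-up, replace the bounded-counter alternation by this reach-and-commit case split.
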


\noindent
Using the existence of MD optimal minimizer strategies (\cref{lem:gain:min-MD})
and a $\coNP$ upper bound for checking almost sure $\Gain$ in MDPs established in \cite{MSTW2017},
we can derive a \coNP\ procedure.
See Appendix \ref{app:gain-coNP} for full details.

\begin{restatable}{theorem}{coNPgain}
    \label{thm:coNP-gain}
    Checking whether a state $s\in V$ of a SSG satisfies $\Gain$ almost-surely is in $\coNP$.
\end{restatable}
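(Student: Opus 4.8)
The plan is to reduce the almost-sure $\Gain$ problem on an SSG to the corresponding problem on MDPs, which \cite{MSTW2017} already places in $\coNP$, by exploiting the fact (\cref{lem:gain:min-MD}) that minimizer has optimal MD strategies for $\Gain$. Recall that $\Gain$ is strongly determined at the level of almost-sure winning: either maximizer can enforce $\Gain$ with probability $1$ from $s$, or minimizer can prevent this. So $s \notin \AS{\Gain}$ if and only if there is a minimizer strategy $\tau$ such that no maximizer strategy achieves $\Gain$ almost surely against $\tau$, and by \cref{lem:gain:min-MD} we may take $\tau$ to be MD. This is the source of the $\coNP$ upper bound: the certificate is the complement witness.

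Concretely, a $\coNP$ algorithm for ``$s \in \AS{\Gain}$?'' proceeds as follows. A nondeterministic machine for the complement guesses an MD minimizer strategy $\tau$ (this is polynomial-size: one outgoing edge per minimizer state). Fixing $\tau$ turns $\sys{G}$ into the maximizing MDP $\sys{G}[\tau]$, whose size is polynomial in that of $\sys{G}$ and which has the same reward and parity data restricted to the relevant edges. By strong determinacy of almost-sure $\Gain$ and the optimality of MD minimizer strategies, $s \notin \AS[\sys{G}]{\Gain}$ if and only if there exists such a $\tau$ with $s \notin \AS[\sys{G}[\tau]]{\Gain}$. The machine then needs to verify $s \notin \AS[\sys{G}[\tau]]{\Gain}$ for the guessed MDP. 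Checking $s \in \AS{\Gain}$ in a finite MDP is in $\NP \cap \coNP$ by \cite{MSTW2017}, so its complement is in $\NP \cap \coNP$ as well; in particular the machine can invoke the $\NP$ procedure for the MDP-level complement (guessing the appropriate MDP certificate and verifying it deterministically in polynomial time). The whole computation — guess $\tau$, guess the MDP-level certificate, verify deterministically — is a single $\NP$ computation for the complement of our problem, which gives the claimed $\coNP$ membership.

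The one point requiring care, and the main obstacle, is justifying the reduction step ``$s\notin\AS[\sys{G}]{\Gain}$ iff $\exists$ MD $\tau$ with $s\notin\AS[\sys{G}[\tau]]{\Gain}$'' rigorously. The $\Leftarrow$ direction is immediate: if minimizer fixes $\tau$ and even so no maximizer strategy wins $\Gain$ almost surely, then a fortiori maximizer has no almost-sure winning strategy in $\sys{G}$. For $\Rightarrow$, strong determinacy of almost-sure $\Gain$ (which holds since $\Gain$ is a Borel, indeed shift-invariant, objective; cf.\ the strong determinacy remark for energy-parity and \cite{KMSW2017}, applied to the simpler $\Gain$ objective) gives that minimizer has \emph{some} strategy $\tau_0$ spoiling all maximizer almost-sure strategies; then \cref{lem:gain:min-MD} upgrades $\tau_0$ to an MD strategy. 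Here one must be slightly careful that \cref{lem:gain:min-MD}, stated as optimality for the \emph{value} of $\Gain$ (from minimizer's perspective, optimal MD strategies for the complement), indeed implies optimality for the almost-sure/qualitative question: an MD minimizer strategy that is value-optimal from every state in particular keeps the maximizer's value below $1$ wherever the global value is below $1$, hence spoils almost-sure winning at exactly the non-almost-sure states. I would spell this out and then note that fixing such $\tau$ yields $\Val[\sys{G}[\tau]]{s}{\Gain} = \Val[\sys{G}]{s}{\Gain} < 1$, so $s\notin\AS[\sys{G}[\tau]]{\Gain}$.

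Finally I would remark that the same argument, run with the roles reversed is not available here (maximizer does not have MD, or even FD, strategies for $\Gain$ in general — only the almost-sure winning ones can be taken FD, by \cref{lem:MDP-FD-strategy} lifted via \cref{lem:FD-strategy}), which is precisely why we only get $\coNP$ from this route and why the $\NP$ membership (\cref{thm:as-gain-np}) needs a separate, more delicate treatment. The full write-up of the deterministic verification details for the MDP sub-instance, and the bookkeeping on strategy sizes and reward magnitudes, is routine and deferred to \cref{app:gain-coNP}.
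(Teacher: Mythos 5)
Your proposal is correct and follows essentially the same route as the paper: guess an MD minimizer strategy as part of the complement certificate (justified by \cref{lem:gain:min-MD}), reduce to the induced MDP, and invoke the \coNP{} result for almost-sure $\Gain$ in MDPs from \cite{MSTW2017} (the paper does this via $\Val{s}{\Gain}=\LVal{s}$ and \cite[Lemma~26]{MSTW2017}). The care you take over the equivalence of value-$1$ and almost-sure winning matches the paper's appeal to the existence of optimal strategies for shift-invariant objectives.
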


The rest of this section will deal with the $\NP$ upper bound, which is the most challenging part of this paper. 
The crux of our proof is the observation that if maximizer has a strategy that wins almost surely against all MD minimizer strategies, then he wins almost surely.
This is because one of these MD strategies is optimal due to \cref{lem:gain:min-MD}.
We show that, in order to witness such an almost-sure winning strategy for maximizer in SSG $\sys{G}$, it suffices to
provide a polynomially larger SSG $\sys{G}_3$, together with an almost-sure winning strategy for the \emph{storage-parity} objective (see \Cref{thm:storage-parity-comp} in \Cref{sec:main}) in $\sys{G}_3$.
This will give us an $\NP$ algorithm, because $\sys{G}_3$, along with its winning strategy, can be guessed and verified in polynomial time.
Formally we claim that:

\begin{restatable}{theorem}{thmASGainNP}
\label{thm:as-gain-np}
    Checking whether a state $s\in V$ of $\?G$ satisfies $\Gain$ almost-surely is in $\NP$.
\end{restatable}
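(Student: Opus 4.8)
\textbf{Plan of proof for \cref{thm:as-gain-np}.}
The goal is an $\NP$ procedure: a nondeterministically guessed polynomial-size certificate that can be verified in polynomial time. As explained just before the statement, the key reduction step is that, by \cref{lem:gain:min-MD}, minimizer has an optimal MD strategy for $\Gain$, so maximizer wins $\Gain$ almost surely at $s$ iff he has a strategy that wins almost surely against every MD minimizer strategy. The plan is to witness such a maximizer strategy by constructing a polynomially larger SSG $\sys{G}_3$ together with an almost-sure winning strategy for a \emph{storage-parity} objective (the $\ES{k,l}\cap\Parity$-type objective whose decidability in $\NP\cap\coNP$ is \cref{thm:storage-parity-comp} from \cref{sec:main}). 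The certificate is then the pair $(\sys{G}_3, \text{winning strategy})$, both of polynomial size; verification reduces to checking almost-sure storage-parity in $\sys{G}_3$, which is in $\NP$ (and $\coNP$) by \cref{thm:storage-parity-comp}, and — crucially — is effectively a \emph{deterministic} polynomial-time check once the winning strategy is supplied, since fixing it yields a Markov chain / MDP whose almost-sure properties are polynomial-time decidable.

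The construction of $\sys{G}_3$ proceeds in stages, mirroring the ``sequence of game transformations'' idea already used for Bailout. First I would pass to the restriction of $\sys{G}$ to $\AS{\Gain}$ (computable once we know it; but since we are building a certificate we simply guess this set and let the verifier confirm membership), so that we may assume $\Gain$ holds a.s.\ everywhere. Then I would use \cref{lem:gain-value}: from every state there is an FD strategy that not only wins $\Gain$ a.s.\ but also satisfies $\EN{\hat k}\cap\Parity$ with probability $\ge 1/2$ for a suitable finite $\hat k$. The point of $\sys{G}_3$ is to encode, in its finite state space, a bounded ``energy window'' of width polynomial in $|V|$ and the largest reward (this boundedness is exactly what \cref{lem:gain-value} together with the finiteness of $V$ and \cref{BBEK:IC2013:Lemma3.9} buys us): the window tracks the current accumulated reward clamped between two thresholds, and whenever it hits the top threshold the automaton ``resets'' (this is the reset-strategy idea from \cite{GK2014} made concrete as extra states/edges). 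A run of $\sys{G}$ that satisfies $\LimInf{>-\infty}$ corresponds to a run of $\sys{G}_3$ in which the window never underflows after finitely many steps, i.e.\ an energy-storage condition, while the parity component is carried along unchanged; conversely an almost-sure storage-parity strategy in $\sys{G}_3$ projects back to an almost-sure $\Gain$ strategy in $\sys{G}$. Because minimizer's relevant strategies are MD (by \cref{lem:gain:min-MD}), the opponent cannot exploit the clamping, which is what makes the window width polynomial rather than unbounded.

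The main obstacle I expect is proving the correctness of this bounded-window encoding in \emph{both} directions with the right quantifiers. The $\Leftarrow$ direction (a storage-parity winner in $\sys{G}_3$ gives a $\Gain$ winner in $\sys{G}$) should be a relatively direct simulation argument: replaying the $\sys{G}_3$ strategy in $\sys{G}$ keeps the energy bounded below on all tails, hence $\LimInf{>-\infty}$ holds, and the parity component is satisfied by construction. The hard direction is $\Rightarrow$: given an (infinite-memory) almost-sure $\Gain$ strategy in $\sys{G}$, one must extract a finite-memory strategy that respects the window and still wins storage-parity a.s.\ in $\sys{G}_3$. Here I would invoke \cref{lem:gain-value} to get, at each reset point, an FD substrategy that reaches the top threshold again with probability $\ge 1/2$ before the energy drops to the bottom threshold; stitching these FD pieces across resets gives a finite-memory strategy, and a Borel--Cantelli argument over the geometrically-bounded failure probabilities at successive resets shows that almost every run has only finitely many ``bad'' excursions, hence satisfies the storage condition eventually and the parity condition throughout. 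The delicate points are (i) choosing the two thresholds uniformly over all states so that a single polynomial window works — this needs the uniform lower bound on success probability and the uniform horizon $h$ from \cref{BBEK:IC2013:Lemma3.9} — and (ii) ensuring the reset gadget does not let minimizer gain an advantage, which is where the reduction to MD opponents (\cref{lem:gain:min-MD}) and the submixing/shift-invariance of $\overline{\Gain}$ used in \cref{lem:gain-value} are essential. Once correctness is established, the $\NP$ bound is immediate: guess $\AS{\Gain}$, guess $\sys{G}_3$ and the storage-parity strategy, then verify in deterministic polynomial time.
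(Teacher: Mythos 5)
Your high-level skeleton is the same as the paper's: reduce to MD minimizer opponents via \cref{lem:gain:min-MD}, guess a polynomial-size derived game $\sys{G}_3$ together with evidence that maximizer wins a storage-parity objective there, and verify using \cref{thm:storage-parity-comp}. The gap is in the construction of $\sys{G}_3$ itself. Your ``bounded energy window with clamping and resets'' does not give a correct correspondence between $\Gain$ and storage-parity. The problem is that an almost-sure $\Gain$ strategy may have to settle in an end-component with strictly positive mean payoff but unbalanced one-step rewards at random states (e.g.\ the left part of \cref{fig:two-ways}: $+1$ with probability $2/3$, $-1$ with probability $1/3$). There the accumulated reward is a positively drifting random walk: $\LimInf{=\infty}$ holds almost surely, so $\Gain$ is won, but for \emph{every} fixed $l$ the $l$-storage condition fails almost surely (disjoint blocks of length $l$ each independently realize a drop of $l$ with probability at least $p_{\min}^l>0$, so by Borel--Cantelli infinitely many such drops occur a.s.). Hence no fixed window width --- polynomial or otherwise --- makes ``$\Gain$ in $\sys{G}$ iff storage-parity in the windowed game'' true; clamping at the top threshold only makes this worse, since it discards exactly the surplus energy that protects against later drops. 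Your appeal to \cref{lem:gain-value} does not repair this: that lemma bounds the probability of ever dropping below $0$ from a high start, not the size of intermediate drops along runs that survive.

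The paper's construction fixes precisely this point with a \emph{trade-in} (rebalancing) step that your proposal is missing: in $\sys{G}_2$, before each random state, maximizer may sacrifice a constant amount of \emph{expected} reward in exchange for replacing the two outcome rewards by bias-adjusted values $\lfloor 1+b_{\tau,s}-b_{\tau,t_i}\rfloor$ computed from a gain-bias linear program against each MD minimizer strategy $\tau$; this is what makes every infix drop bounded by a polynomially sized $B$ and yields \cref{lem:reduction-gain-to-storage}. This in turn creates a second problem your proposal does not face but also does not solve for the paper's route: $\sys{G}_2$ has one trade-in gadget per MD minimizer strategy, i.e.\ exponentially many, and a separate divide-and-conquer argument over priorities and minimizer traps (\cref{lem:np:concise}) is needed to prune these down to polynomially many options, giving $\sys{G}_3$. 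Both of these ingredients --- the reward rebalancing that converts $\Gain$ into a storage condition at all, and the pruning that makes the witness polynomial --- are essential and absent from your plan, so as written the proposed certificate cannot be made sound.
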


\begin{proof}
    (sketch)
For technical convenience, we will assume w.l.o.g.\ that every SSG henceforth is in a normal form,
where every random state has only one predecessor, which is owned by the maximizer.
To show the existence of $\sys{G}_3$, we are going to introduce two intermediate games:
$\sys{G}_1$ and $\sys{G}_2$. 
These games are never constructed by our $\NP$ algorithm, but are just defined to break down the complex construction of $\sys{G}_3$ into more manageable steps. 

Intuitively, $\sys{G}_1$ is just $\sys{G}$ where all rewards on edges are multiplied by a large enough factor, $f$, to turn strategies with a mean-payoff $>0$ into ones with mean-payoff $>2$. $\sys{G}_2$ is an extension of $\sys{G}_1$ where the maximizer is given a choice before every visit to a probabilistic node. He can either let the game proceed as before,
or sacrifice part of his one-step reward in exchange for a more evenly balanced reward outcome,
so the energy can no longer drop arbitrarily low when a probabilistic cycle is reached. 
As a result, in $\sys{G}_2$ it suffices to consider a storage-parity objective (see \Cref{thm:storage-parity-comp} in \Cref{sec:main}) instead of $\Gain$.
The number of choices maximizer is given is the number of MD minimizer strategies, which clearly can be exponential.
That would not suffice for an $\NP$ algorithm. Therefore, we show that most of these choices are redundant and can be removed without impairing the almost sure wining region. As the result of that pruning, we obtain $\sys{G}_3$ of polynomial size.
\end{proof}

For the the technical details of the
$\sys{G} \to \sys{G}_1 \to \sys{G}_2 \to \sys{G}_3$ constructions please see \cref{app:gain-NP}.
\Cref{fig:np} shows how these transformations may look like.

\begin{figure}[htb!]
    \begin{subfigure}[t]{0.45\textwidth}
        \centering
        \includegraphics[width=0.95\textwidth]{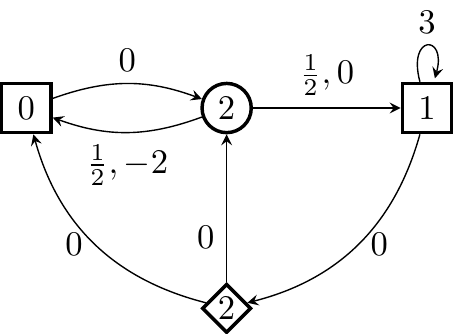}
        \caption{The original game $\sys{G}=\sys{G}_1$}
        \label{fig:np:a}
    \end{subfigure}\hfill
    \begin{subfigure}[t]{0.55\textwidth}
        \centering
        \includegraphics[width=0.95\textwidth]{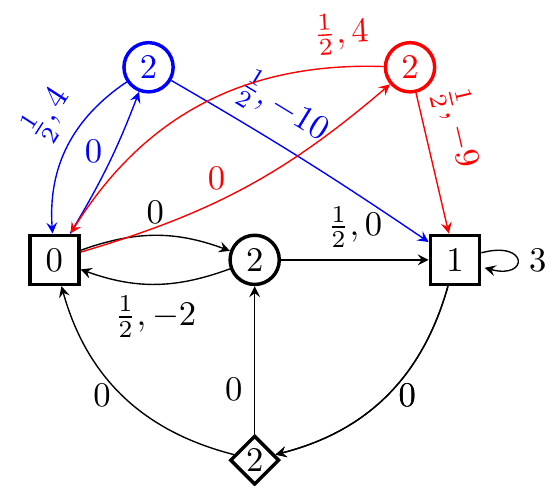}
        \caption{The game $\sys{G}_2$}
        \label{fig:np:b}
    \end{subfigure}\hfill
    \centering
       \begin{subfigure}[t]{0.45\textwidth}
        \includegraphics[width=\textwidth]{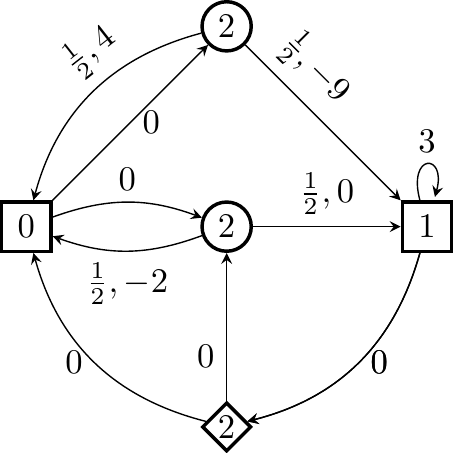}
        \caption{The game $\sys{G}_3$}
        \label{fig:np:c}
    \end{subfigure}
    \caption{\label{fig:np}
        An example game $\sys{G}$ (left) and the derived games.
        The strategy that always loops in the right-most state of $\sys{G}$ ensures a mean-payoff of $3$. As this is the only MD strategy for maximizer that ensures a positive mean-payoff, a factor $f=1$ is sufficient here and we have $\sys{G}_1=\sys{G}$.
        In the derived game $\sys{G}_2$ in \cref{fig:np:b} there are as many trade-in options for the random state as there are MD minimizer's strategies in $\sys{G}_1$ (just two in this example).
        The blue one (top left) corresponds to minimizer going left and the red one (top right) to going up in $\sys{G}_1$.
        Maximizer almost-surely wins $\Gain$ in $\sys{G}$ iff he almost-surely wins a storage-parity condition (see \Cref{thm:storage-parity-comp}) in $\sys{G}_3$.}
\end{figure}

\section{The Main Results}
\label{sec:main}

In this section, we prove the main results of the paper, namely that 
almost-sure energy parity stochastic games can be decided in \NP{} and \coNP{}.
The proofs are straightforward and follow from the much more involved characterization of almost sure energy parity objective in terms of the $\Bailout$ and $\Gain$ objectives established in Section \ref{sec:correctness} and their computational complexity analysis in Sections \ref{sec:bailout} and \ref{sec:gain}, respectively.
\begin{theorem}
    \label{thm:main-fixed-k}
    Given an SSG, energy level $k^*$, checking if a state $s$ is almost-sure winning for $\EN{k^*}\cap\Parity$ 
    is in $\NP \cap \coNP$.
\end{theorem}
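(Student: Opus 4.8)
The plan is to assemble the machinery that is already in place. By \cref{thm:correctness} we have $\AS{\EN{k^*}\cap\Parity}=\AS{k^*\text{-}\Bailout\cap\always W}$ with $W$ the set of \cref{def:W}, so the whole task reduces to (a) computing the set $W$, and (b) answering the single membership query $s\overset{?}{\in}\AS{k^*\text{-}\Bailout\cap\always W}$. Every sub-task below will be a genuine decision problem that lies in $\NP\cap\coNP$ by \cref{thm:coNP-gain,thm:as-gain-np} (for $\Gain$) and \cref{thm:bailout-comp} (for $\Bailout$).

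First I would explain how to decide any objective of the shape $\?O\cap\always X$ for a given state set $X$, which is exactly what appears in \cref{def:W} and in step (b). Compute, by the standard polynomial-time safety/attractor computation, the set $U\subseteq X$ of states from which maximizer can force the play to stay inside $X$ forever, and let $\?G_U$ be the sub-SSG on vertex set $U$ in which all maximizer edges leaving $U$ are deleted. By definition of $U$, every maximizer vertex of $U$ keeps at least one outgoing edge and every minimizer or random vertex of $U$ keeps all of its edges, so $\?G_U$ is a well-formed SSG in which every run stays inside $X$; hence $\AS[\?G]{\?O\cap\always X}=\AS[\?G_U]{\?O}$. Taking $\?O=\Gain$ this is decidable in $\NP\cap\coNP$ by \cref{thm:coNP-gain,thm:as-gain-np}; taking $\?O=\kBailout$ and using the polynomial bounds $K,L$ of \cref{thm:bailout-comp} we moreover get $\bigcup_k\AS[\?G_U]{\kBailout}=\AS[\?G_U]{\Bailout(K,L)}$, again decidable in $\NP\cap\coNP$.

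Next, to compute $W$ I would run the greatest-fixed-point iteration implicit in \cref{def:W}. The operator $F(Y)\eqdef\AS{\Gain\cap\always Y}\cap\bigcup_{k}\AS{\kBailout\cap\always Y}$ is monotone (a larger $Y$ yields a larger $\always Y$), so the descending chain $W_0\eqdef V$, $W_{i+1}\eqdef F(W_i)$ stabilises after at most $\card{V}$ steps at the largest set $Y$ with $Y\subseteq F(Y)$, which is exactly the set $W$ of \cref{def:W}. By the previous paragraph each evaluation of $F$ decomposes into $O(\card{V})$ queries to problems in $\NP\cap\coNP$, there are at most $\card{V}$ iterations, and step (b) is one further such query. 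Altogether this is a polynomial-time procedure making polynomially many queries to decision problems in $\NP\cap\coNP$. Since $\NP\cap\coNP$ is closed under polynomial-time Turing reductions---an $\NP$ (dually a $\coNP$) machine may, along its simulated run, guess each oracle answer together with the corresponding membership or non-membership certificate, which exists precisely because the subproblem lies in $\NP\cap\coNP$, verify all guessed certificates, and then finish the deterministic computation---the problem is in $\NP\cap\coNP$.

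I expect the only delicate points to be bookkeeping rather than mathematics: checking that the reduction $\AS[\?G]{\?O\cap\always X}=\AS[\?G_U]{\?O}$ is exact (confining maximizer to $U$ must neither remove nor create almost-sure winning states for $\?O\cap\always X$), and that the iteration converges to the set $W$ of \cref{def:W} and not merely to some $Y$ with $Y\subseteq F(Y)$. For the $\NP$ bound alone one can also bypass the iteration: guess a candidate set $Y$, certify in $\NP$ (via \cref{thm:as-gain-np,thm:bailout-comp}, applied to the corresponding sub-SSGs $\?G_U$) that $Y\subseteq F(Y)$ --- hence $Y\subseteq W$ and therefore $\always Y\subseteq\always W$ --- and certify $s\in\AS{k^*\text{-}\Bailout\cap\always Y}$; by \cref{thm:correctness} this already gives $s\in\AS{\EN{k^*}\cap\Parity}$, and completeness follows from the choice $Y=W$.
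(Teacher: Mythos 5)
Your proposal is correct and follows essentially the same route as the paper: reduce via \cref{thm:correctness} to a greatest-fixed-point computation of $W$ using the $\NP$ and $\coNP$ procedures for $\Gain$ (\cref{thm:coNP-gain,thm:as-gain-np}) and for $\kBailout$ (\cref{thm:bailout-comp}) on the subgames restricted to the current iterate, followed by one final $\kBailout$ query, with the complexity bookkeeping done by guessing each oracle answer together with its membership or non-membership certificate. Your explicit treatment of the $\always X$ restriction via the safety subgame and the alternative single-certificate $\NP$ argument (guessing a pre-fixed-point $Y\subseteq F(Y)$) are tidy refinements of, not departures from, the paper's proof.
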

\begin{proof}
    Recall that we can compute the set $W$ from \cref{def:W} by iterating
   \begin{equation*}
   W_{i} \quad\eqdef\quad\AS{\Gain \cap \always W_{i-1}}\, \cap\,
    \bigcup_{k}\AS{\kBailout \cap \always W_{i-1}}
   \end{equation*}
    starting with $W_0  \eqdef V$,
    until we reach the greatest fixed point $W$.
    Note that at step $i$ we need to solve almost sure $\Gain$ and almost sure $\bigcup_{k}\AS{\kBailout}$,
    where the states of the game are restricted to $W_{i-1}$.
    There can be at most $|V|$ steps, because at least one state is removed in each iteration.
    
    It then suffices to check $\AS{\kBailout \cap \always W_{}}$ (i.e., $\AS{\kBailout}$ for the subgame that consists only of the states of the fixed point $W$ for $k=k^*$. Note that this step can be skipped if $k^* \geq K$, the bound from \cref{thm:bailout-comp}.
    
    Before we discuss how to use $\NP$ and $\coNP$ procedures to construct these sets and to conduct the final test on the fixed point $W$, we note that the `$\cap \always W_{i-1}$' does not add anything substantial, as these are simply the same tests and procedures conducted on the subgame that only consist of the states of $W_{i-1}$.
    
    To obtain an $\NP$ procedure for constructing $\AS{\Gain}$---or, as remarked above, $\AS{\Gain \cap \always W_{i-1}}$---we can guess and validate its membership for each state $s$ \emph{in} this set, using the $\NP$ result from \cref{thm:as-gain-np}, and we can guess and validate its non-membership for each state $s$ \emph{not in} this set in $\NP$, using the $\coNP$ result from Theorem~\ref{thm:coNP-gain}.
    Similarly, we can guess and validate both the membership and the non-membership in $\bigcup_{k}\AS{\kBailout \cap \always W_{i-1}}$---and of $\bigcup_{k}\AS{\kBailout \cap \always W_{i-1}}$ by analysing the subgame with only the states in $W_{i-1}$---by using the $\NP$ and $\coNP$ result, respectively, from \cref{thm:bailout-comp}.
    
    Once we can construct these sets, we can also intersect them and check if a fixed point has been reached. (One can, of course, stop when $s\notin W_i$.)
    
    We can now conduct the final check in $\NP$ using \cref{thm:as-gain-np}.
    
    A $\coNP$ algorithm that constructs $W$ can be designed analogously:
    once $W_{i-1}$ is known, membership and non-membership of a state $s$ in $\AS{\Gain \cap \always W_{i-1}}$ can be guessed and validated in $\coNP$ by \cref{thm:coNP-gain} and by \cref{thm:as-gain-np}, respectively; and membership or non-membership of a state in $\bigcup_{k}\AS{\kBailout \cap \always W_{i-1}}$ can be guessed and validated in $\coNP$ using the $\coNP$ and $\NP$ part, respectively, of \cref{thm:bailout-comp}.
    
    Once $W$ is constructed, we can conduct the final check in $\coNP$ using Theorem~\ref{thm:coNP-gain}.
\end{proof}
This result, together with the upper bound on the energy needed to win
energy-parity objective, allows us to solve the ``unknown initial energy
problem'' \cite{BFLMS2008}, which is to compute the minimal initial energy level required.
\begin{corollary}
For any state $s$, checking if there is $k$ such that $\AS{\EN{k}\cap\Parity}$ holds is in $\NP\cap\coNP$.
Also, for a given $k^*$, checking if $k^*$ is the minimal energy level required to win almost surely is in $\NP\cap\coNP$ as well.
\end{corollary}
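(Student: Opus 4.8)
The plan is to reduce both decision problems to single instances of \cref{thm:main-fixed-k} — checking almost-sure $\EN{k^*}\cap\Parity$ for a fixed energy level — by combining the identity $W'=W$ from \cref{rem:WvsWprime} with the monotonicity of $\AS{\EN{k}\cap\Parity}$ in $k$.

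For the unknown-initial-energy question, observe that by \cref{def:Wprime} the set of states from which \emph{some} $k$ works is exactly $W'=\bigcup_{k}\AS{\EN{k}\cap\Parity}$, and \cref{rem:WvsWprime} gives $W'=W$. The proof of \cref{thm:main-fixed-k} already shows how to construct $W$, and hence to decide $s\in W$, both in $\NP$ and in $\coNP$: one iterates the greatest-fixed-point operator of \cref{def:W}, each stage being handled by the $\NP$/$\coNP$ subroutines of \cref{thm:bailout-comp,thm:coNP-gain,thm:as-gain-np}. Thus deciding $s\in W$, i.e.\ the unknown-initial-energy problem, is in $\NP\cap\coNP$. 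Alternatively, since $\kBailout$ is monotone in $k$ and stabilises at the polynomial bound $K$ of \cref{thm:bailout-comp}, \cref{thm:correctness} yields $W'=\AS{\EN{K}\cap\Parity}$, so one may simply invoke \cref{thm:main-fixed-k} once with $k^*=K$.

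For minimality, I would first record that $\AS{\EN{k}\cap\Parity}\subseteq\AS{\EN{k+1}\cap\Parity}$: any strategy witnessing the left-hand side also witnesses the right-hand side, since raising the initial energy only increases every prefix's energy level and leaves the parity condition untouched. Consequently $k^{*}$ is the minimal almost-sure-winning energy level for $s$ if and only if $s\in\AS{\EN{k^{*}}\cap\Parity}$ and, unless $k^{*}=0$, $s\notin\AS{\EN{k^{*}-1}\cap\Parity}$. The first conjunct lies in $\NP\cap\coNP$ by \cref{thm:main-fixed-k}; the second is the complement of such a problem, and $\NP\cap\coNP$ is closed under complementation; and $\NP\cap\coNP$ is closed under finite intersection. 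Hence the minimality test is in $\NP\cap\coNP$.

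I do not expect a genuine obstacle here: all the difficulty has been absorbed into \cref{thm:correctness}, \cref{thm:main-fixed-k}, and the complexity analyses of \cref{sec:bailout,sec:gain}. The only points needing care are that ``$\exists k$'' collapses to a single fixed-energy query — which is precisely the content of $W'=W$ (or of the polynomial bound $K$) — together with the elementary monotonicity and closure observations just made.
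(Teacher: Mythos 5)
Your proposal is correct and takes essentially the same approach as the paper: for the existence question the paper likewise computes the polynomial bound $K$ from \cref{thm:bailout-comp} and invokes \cref{thm:main-fixed-k} with $k^*=K$, and for minimality it simultaneously guesses and verifies certificates for $\AS{\EN{k^*}\cap\Parity}$ and for the complement at level $k^*-1$, relying on the same monotonicity and closure observations you make explicit. Your additional route via deciding $s\in W$ directly (using $W'=W$) is a harmless restatement of the same idea.
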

\begin{proof}
Due to \cref{thm:bailout-comp}, if there is an energy level $k$ for which $\AS{\EN{k}\cap\Parity}$ holds, then it also holds for the bound $K$ whose size is polynomial in the size of the game. We can then simply calculate $K$ and then use $\NP$ and $\coNP$ algorithms from \cref{thm:main-fixed-k} for $\AS{\EN{K}\cap\Parity}$.

As for the second claim, note that checking whether maximizer cannot win almost surely $\EN{k}\cap\Parity$ is also in $\NP$ and $\coNP$
as a complement of a $\coNP$ and an $\NP$ set, respectively.
Therefore, for an $\NP/\coNP$ upper bound it suffices to simultaneously guess certificates for almost surely $\EN{k^*}\cap\Parity$
and not almost surely $\EN{k^*-1}\cap\Parity$
and verify them in polynomial time.
\end{proof}

Finally, let us mention that the slightly more restrictive
\emph{storage-parity} objectives can also be solved in $\NP\cap\coNP$.
These are almost identical to energy-parity except that, 
in addition, there must exist some bound $l\in\N$ such that the energy level never drops by more than $l$ during a run.
This extra condition ensures that, if the storage-parity objective holds almost-surely,
then there must exist a \emph{finite-memory} winning strategy for maximizer.

\begin{restatable}{theorem}{thmSTPairityComp}
    \label{thm:storage-parity-comp}
    One can check in $\NP, \coNP$ and pseudo-polynomial time
    if, for a given SSG $\sys{H}\eqdef(V, E, \prob)$,
    $k\in\N$ and control state $s\in V$,
    maximizer can almost-surely satisfy
    $\ES{k}\cap\Parity$
    from $s$.
    
    Moreover, there is a bound
    $L\in\N$, polynomial in the number of states and the largest absolute transition reward,
    so that
    $\ES{k}\cap\Parity = \ES{k,L}\cap\Parity$.
\end{restatable}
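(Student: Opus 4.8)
The plan is to treat storage-parity as a bounded variant of energy-parity and reduce it to a non-stochastic energy-parity game in essentially the same way as \cref{thm:bailout-comp}, but without the extra disjunctive "boosting" branch. First I would establish the second (structural) claim, since it drives the algorithm: I want a polynomial bound $L$ such that $\ES{k}\cap\Parity = \ES{k,L}\cap\Parity$. The key observation is that whenever maximizer can a.s.\ satisfy $\ES{k,l}\cap\Parity$ for \emph{some} $l$, he can already do so for an $l$ bounded polynomially in $\card V$ and $W_{\max}$, the largest absolute transition reward. The reason is that the $l$-storage condition only constrains the drop accumulated \emph{along infixes}, and in a finite game an a.s.\ winning strategy that keeps the energy bounded below must, on almost every run, eventually stay inside a bottom strongly connected component of the relevant Markov chain in which every simple cycle is non-negative (otherwise the energy would drift to $-\infty$ on a positive-probability set, contradicting the $l$-storage condition for every fixed $l$). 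Once every reachable simple cycle is non-negative, the worst-case infix drop is bounded by the weight of a simple path plus one simple cycle's negative excursion, i.e.\ by $O(\card V \cdot W_{\max})$. This needs to be argued against \emph{all} minimizer strategies, but since by weak/strong determinacy it suffices to consider MD minimizer strategies (as in \cref{lem:gain:min-MD}'s use of \cite{GK2014}), and each fixed MD minimizer strategy yields a finite MDP, the bound can be taken uniformly over the finitely many MD counter-strategies.

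With $L$ in hand, the complexity claim follows the template of \cref{thm:bailout-comp}. I would encode the objective $\ES{k,L}\cap\Parity$ as a combined energy-plus-parity condition on an enlarged but still polynomially sized game: augment the state space with a component tracking the ``deficit below the running infix maximum'' capped at $L$ (a finite range $\{0,1,\dots,L\}$, hence polynomial after the pseudo-polynomial blow-up, or polynomial in binary size), so that the $L$-storage condition becomes an ordinary energy-type safety condition on the augmented rewards, while the original $k$-energy condition and the parity condition are carried along unchanged. This turns the question into deciding the winner of a finite-state \emph{non-stochastic} energy-parity game once we also resolve the probabilistic states — but here, crucially, an a.s.\ winning strategy for storage-parity may be taken \emph{finite-memory} (this is exactly the extra leverage the storage bound buys us, as noted in the excerpt), so the standard reduction from a.s.\ SSG problems with finite-memory witnesses to two-player energy-parity games applies, and the latter is solvable in $\NP$, $\coNP$ and pseudo-polynomial time by \cite{CD2012,CD2010}. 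The $\NP$ certificate is the (finite-memory) maximizer strategy together with the product construction; the $\coNP$ certificate is a minimizer MD strategy together with a witness that the resulting MDP has no a.s.\ winning maximizer response, again decidable in polynomial time after the product.

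The main obstacle I expect is the uniform polynomial bound on $L$: the subtlety is that a single run can, in principle, use arbitrarily deep but increasingly rare negative excursions, so one must argue — as the paper already does in spirit for $\Gain$ versus $\Bailout$ (see the discussion around \cref{ex:two-ways}) — that requiring the storage condition for \emph{some fixed finite} $l$ on \emph{almost every} run forces, via König-type / recurrence arguments in the finite MDP induced by each MD minimizer strategy, that the reachable recurrent structure has only non-negative cycles, from which the $O(\card V\cdot W_{\max})$ bound is immediate. Handling the interaction between this argument and the quantification over all (sufficingly, MD) minimizer strategies, so that one $L$ works simultaneously, is the part that requires care; everything else is a routine adaptation of the $\Bailout$ machinery with the $(\EN{k}\cap\LimSup{=\infty})$ disjunct deleted.
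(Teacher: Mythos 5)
Your overall template---derandomize the game and land on a non-stochastic energy-parity game solved via \cite{CD2012,CD2010}---matches the paper, which simply observes that the last steps of the Bailout pipeline (the gadget of \cref{lem:bailout:2pg} replacing random states, followed by the appeal to \cite{CD2012}) use nothing specific to the Bailout construction and hence apply verbatim to any SSG with a storage-parity objective. However, two steps of your argument have genuine gaps. The first is your derivation of the polynomial bound $L$. You bound the worst-case infix drop by the weight of a simple path plus a simple cycle ``in the relevant Markov chain'', but that Markov chain is the product of the game with the memory of the witnessing strategy; its simple paths and cycles have length up to $\card{V}$ times the memory size, and a polynomial bound on the memory is essentially what is at stake, so the argument is circular. (Also, non-negativity of cycles is needed in the whole reachable part, not only in bottom SCCs: a transient SCC with a negative cycle is traversed arbitrarily many times with positive probability, which already violates $l$-storage for every fixed $l$ on a positive-measure set.) The paper obtains $L$ the other way around: only \emph{after} reducing to the non-stochastic game does it invoke the storage bounds for two-player energy-parity games (\cite{CD2012}, Lemma~5), which are polynomial in the number of states and the largest absolute reward; alternatively one can perform run surgery as in \cref{lem:bailout'}, cutting out redundant negative sub-runs between repeated visits to the same state with the same dominant priority.

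The second gap is the explicit product with a deficit counter over $\{0,\dots,L\}$. This is both unnecessary and harmful to the claimed complexity: it blows the game up pseudo-polynomially (exponentially in the binary encoding of the rewards), so a guessed strategy or certificate on the product is no longer of polynomial size and the $\NP\cap\coNP$ claim does not follow from it. The paper instead keeps the storage condition inside the objective and relies on \cite{CD2012} to supply polynomial-size certificates for storage-parity directly. Relatedly, ``the standard reduction from a.s.\ SSG problems with finite-memory witnesses to two-player games'' is not an off-the-shelf black box; one still has to verify, as \cref{lem:bailout:2pg} does, that the \cite{CJH2004} gadget preserves both the storage part (trivially, since rewards are untouched) and the almost-sure parity part.
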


\begin{proof}
    (sketch)
This result follows by a simple adaptation of the proofs showing the same computational complexity of the $\Bailout$ objective (\cref{sec:bailout}). See the end of \cref{app:bailout} for further details.    
\end{proof}

\begin{example}
  \label{ex:two-ways-storage}
  In the game in \cref{fig:two-ways}, maximizer cannot ensure the storage-parity condition $\ES{k}\cap\Parity$
  for any initial energy level $k$. This is because it would imply the existence of a finite-memory almost-surely winning strategy, which as we have already argued, cannot be true.
  More intuitively, 
  to prevent an intermediate energy drop by $l$ units, a winning maximizer strategy for
  storage-parity would need to stop moving left after observing the negative
  cycle in the leftmost state $l$ successive times. However, when maximizer
  moves right, this gives minimizer the chance to visit the rightmost bad
  state (with dominating odd priority $1$).
  The chance of that happening is $(1/3)^l > 0$. In particular,
  this probability is $>0$ for any value of the
  intermediate energy drop $l$. Therefore, for any fixed $l$, maximizer would need to move right
  infinitely often to satisfy storage and lose (against an optimal minimizer strategy that moves
  to the rightmost state).
\end{example}

\section{Conclusion and Outlook}
\label{sec:conclusion}
We showed that several almost-sure problems for combined energy-parity
objectives in simple stochastic games are in $\NP \cap \coNP$. 
No pseudo-polynomial algorithm is known (just like for stochastic mean-payoff parity games \cite{CDGQ:2014}).
All these problems subsume (stochastic) parity games, by setting all rewards to $0$. 
Thus the existence of a pseudo-polynomial algorithm would imply that
(stochastic and non-stochastic) parity games are in P, which is a long-standing open problem.

It is known that maximizer already needs infinite memory to win almost-surely a combined energy-parity objective in MDPs \cite{MSTW2017}. Our results do not imply anything about the memory requirement for optimal minimizer strategies in SSGs for this objective.
We conjecture that memoryless minimizer strategies suffice. If this conjecture holds (and is proven), this would greatly simplify the $\coNP$ upper bound that we established for this problem.

A natural question is whether results on
mean-payoff/energy/parity games can be generalized
to a setting with multi-dimensional payoffs.
Non-stochastic multi-mean-payoff and multi-energy
games have been studied in \cite{VCDHRR2015,JLS2015,AMSS2013}.
To the best of our knowledge, the techniques used there,
e.g.\ upper bounds on the necessary energy levels as in \cite{JLS2015}, do not
generalize to stochastic games (or MDPs).

Multiple mean-payoff objectives in MDPs have been studied in
\cite{Kuceraetal2014,KKK2027},
but the corresponding multi-energy (resp.\ multi-energy-parity) objective has extra difficulties due
to the 0-boundary condition on the energy.
I.e., even on Markov chains, and without any parity condition, it subsumes
problems about multi-dimensional random walks.
Some partial results on Markov chains and MDPs
have been obtained in \cite{14BKKNK-LICS,ACMSS2016,lmcs:867},
but the decidability of the almost-sure problem for stochastic multi-energy-parity games (and MDPs) remains open.

\section*{Acknowledgments}

The work of Sven Schewe and Dominik Wojtczak was supported by EPSRC grant EP/P020909/1.

\newpage
\bibliographystyle{splncs03}
\bibliography{conferences, journals, autocleaned}

\newpage
\appendix

\section{Lifting Almost-sure Strategies from MDPs to SSGs}
\label{app:lifting}
This section contains a proof of \cref{lem:FD-strategy},
that shows how to conclude lift the existence of memoryless determined
almost-sure winning strategies from MDPs to SSGs.

\begin{definition}[\cite{GK2014}, Sec.\ 2.C]\label{def:eps-subgame-perfect}
Let $\?G=(V,E,\prob)$ be an SSG and $\sigma$ a maximizer strategy.

For every finite play $p = s_0 \dots s_n$
we denote by $\sigma[p]$
the shift of strategy $\sigma$ by $p$ as the strategy defined by
\[
\sigma[p](t_0 e_1 t_1 \dots e_m t_m) \eqdef
\begin{cases}
\sigma(p e_1 t_1 \dots e_m t_m), & \text{if } s_n=t_0\\
\sigma(t_0 e_1 t_1 \dots e_m t_m), & \text{otherwise}
\end{cases}
\]
Then $\sigma$ is said to be $\eps$-subgame-perfect if for every finite play
$p$ the strategy $\sigma[p]$ is $\eps$-optimal.
\end{definition}

\begin{definition}[\cite{GK2014}, Sec.\ 5.C]\label{def:trigger-strategy}
Let $\?G=(V,E,\prob)$ be an SSG from initial state $s$ 
and $\?O$ an objective
that is both shift-invariant and submixing and $\pi \in V$ a 
state owned by maximizer.
We assume w.l.o.g., that $\pi$ has two successors
left ($l$) and right ($r$).
Let $\?G_l$ and $\?G_r$ be the SSGs resulting from $\?G$ by removing the edge
from $\pi$ to $r$ and $l$, respectively.
Moreover, assume that $\tau_l$ and $\tau_r$ are $\epsilon$-subgame-perfect
strategies for minimizer in $\?G_l$ and $\?G_r$, respectively.

The \emph{trigger strategy} $\tau_{lr}$ 
for minimizer in the original game $\?G$ (starting at $s$) is defined as follows:
    \begin{itemize}
        \item start by playing according to $\tau_l$
        \item play according to $\tau_r$ (initially with empty memory)
            once maximizer moves from $\pi$ to $r$ for the first time.
        \item every time maximizer moves from $\pi$ to $l$ (or to $r$), maximizer
            resumes the previous play in $\?G_l$ (or $\?G_r$).
    \end{itemize}
    The trigger strategy $\tau_{lr}$ allocates
    the memory used by $\tau_l$, $\tau_r$, and one extra bit to remember
    maximizer's last choice at $\pi$.
\end{definition}

\begin{lemma}[\cite{GK2014}, Eq. (19),(20),(21)]\label{lem:gimbert-kelmendi}
Assume the definitions of \cref{def:trigger-strategy}.
Then
\[
\forall \sigma.\, \Prob[\sigma,\tau_{lr}]{s}(\?O) \le \max\{\Val[\sys{G}_l]{s}{\?O}, \Val[\sys{G}_r]{s}{\?O}\}
    + \epsilon
\]
\end{lemma}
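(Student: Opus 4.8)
The plan is to analyse the Markov chain $\sys{G}[\sigma,\tau_{lr}]$ by splitting each of its runs according to which of the two threads maintained by $\tau_{lr}$ is active. Recall from \cref{def:trigger-strategy} that at every moment minimizer follows exactly one of the $\tau_l$-thread (a play in $\sys{G}_l$) and the $\tau_r$-thread (a play in $\sys{G}_r$), and the active thread changes precisely when maximizer takes an edge out of $\pi$. For a run $\rho$ from $s$ write $\rho^l$ for the subsequence of steps made while the $\tau_l$-thread is active and $\rho^r$ for the subsequence made while the $\tau_r$-thread is active. Then $\rho^l$ is a finite or infinite path in $\sys{G}_l$ from $s$ consistent with $\tau_l$, $\rho^r$ is a path in $\sys{G}_r$ consistent with $\tau_r$ from the successor $r$ that is first entered, $\rho$ is a shuffle of $\rho^l$ and $\rho^r$, and at least one of the two is infinite.

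The first ingredient is a coverage claim: whenever $\rho\in\?O$, already $\rho^l\in\?O$ or $\rho^r\in\?O$. If maximizer leaves $\pi$ only finitely often then one of the threads is finite, hence not in $\?O$, while $\rho$ has a common suffix with the other, infinite thread, so shift-invariance of $\?O$ transfers membership. If maximizer leaves $\pi$ infinitely often then both threads are infinite and $\rho$ is a genuine shuffle of them, so the claim follows from submixing of $\?O$. The second ingredient bounds each thread separately: treating $\rho^l$ as an honest play generated in $\sys{G}_l$ from $s$ against $\tau_l$, and $\rho^r$ as one generated in $\sys{G}_r$ from $r$ against a shift of $\tau_r$, the $\epsilon$-subgame-perfectness of $\tau_l,\tau_r$ (\cref{def:eps-subgame-perfect}) bounds $\Prob[\sigma,\tau_{lr}]{s}(\rho^l\in\?O)$ by $\Val[\sys{G}_l]{s}{\?O}+\epsilon$ and $\Prob[\sigma,\tau_{lr}]{s}(\rho^r\in\?O)$ by the corresponding value in $\sys{G}_r$ plus $\epsilon$; making this precise needs a coupling argument, since a thread is repeatedly paused while an excursion in the other part of the game takes place, and subgame-perfectness (rather than mere $\epsilon$-optimality) is exactly what guarantees that every resumption is still near-optimal, including for the right thread which starts mid-play at $r$.

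Naively combining these two ingredients by a union bound yields only $\Val[\sys{G}_l]{s}{\?O}+\Val[\sys{G}_r]{s}{\?O}+2\epsilon$, so the crux --- and the step I expect to be the main obstacle --- is to sharpen this to $\max\{\Val[\sys{G}_l]{s}{\?O},\Val[\sys{G}_r]{s}{\?O}\}+\epsilon$. The point is that the two threads cannot both contribute in full: on runs with finitely many switches the non-final thread is finite and contributes nothing, so the only genuinely shared probability mass sits on runs that switch infinitely often, and there one argues, again using $\epsilon$-subgame-perfectness of $\tau_l,\tau_r$ together with submixing and shift-invariance, that the success probability is still controlled by a single subgame value (the minor bookkeeping that the right thread's value is naturally taken at $r$ rather than $s$ is absorbed here as well). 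This careful accounting is precisely what \cite[Eq.~(19)--(21)]{GK2014} provide, and I would reproduce their argument.
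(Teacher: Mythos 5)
You should first note that the paper contains no argument for this lemma at all: it is imported verbatim from \cite{GK2014} (the statement is even labelled with that citation and the equation numbers (19)--(21)), so the only ``proof'' in the paper is the citation itself. Your sketch ultimately does the same thing---for the decisive step you say you ``would reproduce their argument''---which is a legitimate way to treat an imported result, but it means the ingredients you do spell out must be judged on whether they add up to a proof, and they do not: the one step you defer is the entire content of the lemma.

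Concretely, the gap is the passage from a per-thread analysis to the $\max$ bound at the common initial state $s$. Your thread decomposition and coverage claim (shift-invariance when the switching stops, submixing when it does not) are fine, and $\epsilon$-subgame-perfection does bound each thread. But from those two facts alone the best derivable bound is
$\Prob[\sigma,\tau_{lr}]{s}(\?O) \le (\Val[\sys{G}_l]{s}{\?O}+\epsilon) + q\,(\Val[\sys{G}_r]{r}{\?O}+\epsilon)$,
where $q$ is the probability that the right thread is ever started: knowing only upper bounds on the probabilities of the two events ``left thread in $\?O$'' and ``right thread in $\?O$'', no refinement of the union bound can turn their sum into their maximum, so any correct proof must use additional structure beyond your two ingredients. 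Moreover, the point you dismiss as ``minor bookkeeping'' is part of the same difficulty: the right thread begins at $\pi$/$r$ after an arbitrary history, so subgame-perfection controls it by $\Val[\sys{G}_r]{r}{\?O}$, which can strictly exceed $\Val[\sys{G}_r]{s}{\?O}$ (e.g.\ when plays from $s$ in $\sys{G}_r$ risk entering a losing trap before reaching $\pi$), and the prefix leading to the switch is generated against $\tau_l$, not $\tau_r$, so one cannot simply fold $q$ into the value at $s$. The accounting in \cite{GK2014} that yields $\max\{\Val[\sys{G}_l]{s}{\?O},\Val[\sys{G}_r]{s}{\?O}\}+\epsilon$ is therefore not a tightening of your union bound but a genuinely different argument, and it is exactly what is missing; either present the lemma as a quoted result, as the paper does, or actually carry out that argument.
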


\lemFDstrategy*
\begin{proof}
    We assume w.l.o.g., that all minimizer's states have at most two successors.

    The proof is done by induction on the number of minimizer's states with choice
    (two successors) in $\?G$.
    The base case holds by the assumption that maximizer has FD optimal strategies in MDPs.
    
    For the induction step, we will use \cref{def:trigger-strategy}
    and \cref{lem:gimbert-kelmendi}, instantiated with $\overline{\?O}$
    instead of $\?O$. Since $\overline{\?O}$ is both shift-invariant and
    submixing, this satisfies the conditions of \cref{def:trigger-strategy}, 
    but (relative to $\?O$) the roles of the players minimizer/maximizer are swapped.

    Pick some initial state $s$ and a minimizer's state $\pi$ for $\?O$ (i.e., a maximizer's state for $\overline{\?O}$)
    and let $\?G_l$, $\?G_r$ be defined as in \cref{def:trigger-strategy}.
    By induction hypothesis, in both these games $\?G_l$ and $\?G_r$, maximizer has an FD optimal
    strategy for objective $\?O$ from $s$. Call these strategies $\sigma_l$
    and $\sigma_r$, respectively. 
    In particular, since $\sigma_l$ and $\sigma_r$ are optimal and 
    $\?O$ and $\overline{\?O}$ are shift-invariant, the strategies $\sigma_l$ and $\sigma_r$
    are subgame-perfect, and thus $\epsilon$-subgame-perfect for $\epsilon=0$.
    Thus we can instantiate \cref{def:trigger-strategy} with objective
    $\overline{\?O}$ and reversed roles of players minimizer/maximizer. I.e.,
    we take $\sigma_l$ for $\tau_l$ and $\sigma_r$ for $\tau_r$, which are
    subgame-perfect for player minimizer for objective $\overline{\?O}$.
    We obtain the trigger-strategy $\sigma_{lr}$ for maximizer for
    $\?O$ (i.e., the $\tau_{lr}$ for minimizer for $\overline{\?O}$ from \cref{def:trigger-strategy}).
    Since $\sigma_l$ and $\sigma_r$ are FD, so is $\sigma_{lr}$.
 
    We now argue that this trigger strategy $\sigma_{lr}$ must be optimal.
    The shift-invariance and submixing conditions on $\overline{\?O}$ imply
    (\cite{GK2014}, Theorem 5.2) that minimizer has MD optimal strategies in every SSG with winning condition $\?O$.
    Let $\tau^*$ be some MD optimal strategy for minimizer in $\?G$ from $s$.
    W.l.o.g.\ assume that $\tau^*(\pi)=l$ (otherwise rename left/right).

    We show that $\sigma_{lr}$ and $\tau^*$ are best responses to each other,
    and thus both are optimal.
    That is, in order to finish the induction step, we prove that
    the following two claims hold for the game $\?G$.
    \begin{enumerate}
        \item
        $\Prob[\sigma_{lr},\tau^*]{s}(\?O) \ge \sup_{\sigma}\Prob[\sigma,\tau^*]{s}(\?O)$,
        and \label{lem:FD-strategy-ad1}
        \item
        $\Prob[\sigma_{lr},\tau^*]{s}(\?O) \le \inf_{\tau}\Prob[\sigma_{lr},\tau]{s}(\?O)$.
        \label{lem:FD-strategy-ad2}
    \end{enumerate}
    Together these imply the claim that $\sigma_{lr}$ is optimal, and hence the induction step, because
    \begin{align}
    \Val{s}{\?O}
    = \sup_\sigma \inf_\tau\Prob[\sigma,\tau]{s}(\?O)
    \eqby{(opt.)} \sup_\sigma \Prob[\sigma,\tau^*]{s}(\?O)
    \eqby{(1)} \Prob[\sigma_{lr},\tau^*]{s}(\?O)
    \eqby{(2)} \inf_{\tau}\Prob[\sigma_{lr},\tau]{s}(\?O)
    \end{align}
    where the second equation uses the optimality of $\tau^*$.
    It remains to prove the two claims above.

    \cref{lem:FD-strategy-ad1}).
    Since $\tau^*(\pi)=l$ we have
    $\Prob[\sigma_{lr},\tau^*]{\?G, s}(\?O)
    = \Prob[\sigma_{l},\tau^*]{\?G_l, s}(\?O)
    \ge \sup_{\sigma}\Prob[\sigma,\tau^*]{\?G_l, s}(\?O)
    = \sup_{\sigma}\Prob[\sigma,\tau^*]{\?G, s}(\?O)$,
    where the equalities hold by $\tau^*(\pi)=l$
    and the inequality holds by the assumed optimality of $\sigma_l$ in $\?G_l$.
    
    \cref{lem:FD-strategy-ad2}).
    From \cref{lem:gimbert-kelmendi}, instantiated with $\overline{\?O}$, we
    obtain that
    \[
    \forall \tau.\, \Prob[\sigma_{lr},\tau]{s}(\overline{\?O}) \le \max\{\Val[\sys{G}_l]{s}{\overline{\?O}}, \Val[\sys{G}_r]{s}{\overline{\?O}}\}.
    + \epsilon
    \]
    Since in our case $\epsilon=0$ we obtain
    \[
    \forall \tau.\,
    1-\Prob[\sigma_{lr},\tau]{s}({\?O}) \le \max\{1-\Val[\sys{G}_l]{s}{\?O},
    1-\Val[\sys{G}_r]{s}{\?O}\}
    = 
    1 - \min\{\Val[\sys{G}_l]{s}{\?O}, \Val[\sys{G}_r]{s}{\?O}\}
    \]
    and thus
    \[
    \forall \tau.\, \Prob[\sigma_{lr},\tau]{s}({\?O}) \ge \min\{\Val[\sys{G}_l]{s}{\?O}, \Val[\sys{G}_r]{s}{\?O}\}.
    \]
    In particular, for $\tau = \tau^*$ we obtain
    $
    \Prob[\sigma_{lr},\tau^*]{s}({\?O}) \ge \min\{\Val[\sys{G}_l]{s}{\?O}, \Val[\sys{G}_r]{s}{\?O}\}
    $.
    However, since $\tau^*$ is an MD optimal strategy for minimizer, we also have
    \[
    \Prob[\sigma_{lr},\tau^*]{s}({\?O}) \le \min\{\Val[\sys{G}_l]{s}{\?O}, \Val[\sys{G}_r]{s}{\?O}\}
    \]
    By combining the above we get
    \[
    \forall \tau.\, \Prob[\sigma_{lr},\tau]{s}({\?O}) \ge \min\{\Val[\sys{G}_l]{s}{\?O}, \Val[\sys{G}_r]{s}{\?O}\}
    = 
    \Prob[\sigma_{lr},\tau^*]{s}({\?O})
    \]  
    and thus 
    $\inf_{\tau}\Prob[\sigma_{lr},\tau]{s}({\?O}) \ge \Prob[\sigma_{lr},\tau^*]{s}({\?O})$.
    This concludes the proof of \cref{lem:FD-strategy-ad2} and thus the induction step.
\end{proof}

\newpage
\section{Bailout} %
\label{sec:appendixC}
\label{app:bailout}
We will proceed in several reduction steps, ultimately reducing to checking the winner of a non-stochastic game for energy-parity objectives.

Assume from now on a fixed SSG $\sys{G}$ with associated reward and parity functions.

\begin{lemma}
\label{lem:bailout'}
Let $\Bailout'(k,l)\eqdef (\ES{k,l}\cap\Parity)\cup(\ES{k,l}\cap\LimSup{=\infty})$.

There exists $L\in\N$ so that
$\AS{\bigcup_{l}\Bailout(k,l)}
        =
    \AS{\Bailout'(k,L)}$.
\end{lemma}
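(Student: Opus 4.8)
The plan is to establish the equality $\AS{\bigcup_{l}\Bailout(k,l)} = \AS{\Bailout'(k,L)}$ in two stages, separating the ``easy'' inclusion from the one that requires a uniform bound $L$. First note that for every fixed $l$ we have $\ES{k,l}\subseteq\EN{k}$, hence $\Bailout'(k,l)\subseteq\Bailout(k,l)\subseteq\bigcup_{l'}\Bailout(k,l')$, so $\AS{\Bailout'(k,L)}\subseteq\AS{\bigcup_l\Bailout(k,l)}$ holds for any $L$; this gives the $\supseteq$ direction for free. The content of the lemma is therefore the $\subseteq$ direction: if maximizer can almost-surely achieve $\bigcup_l\Bailout(k,l)$, then he can in fact almost-surely achieve the more restrictive $\Bailout'(k,L)$ for one fixed, sufficiently large $L$.

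For the $\subseteq$ direction, I would fix a state $s\in\AS{\bigcup_l\Bailout(k,l)}$ and a witnessing maximizer strategy $\sigma$, so that against every $\tau$, almost every $\sigma$-run $\rho$ lies in $(\ES{k,l}\cap\Parity)\cup(\EN{k}\cap\LimSup{=\infty})$ for some $l$ depending on $\rho$. The first key step is to argue that, without loss of generality, one may replace the $\EN{k}$ conjunct in the second disjunct by an $\ES{k,l}$ conjunct: a run in $\EN{k}\cap\LimSup{=\infty}$ that does \emph{not} satisfy any $l$-storage condition must, by the argument already used in the proof of \cref{lem:enpar-bailout}, have prefixes with cost exceeding $-k+l$ for every $l$ — but that is automatic from $\LimSup{=\infty}$ and gives no contradiction directly; the real point is that we are free to \emph{stop} boosting once the storage drop has been small enough and the energy high enough, so a careful strategy modification (truncate the ``boost mode'' as soon as the accumulated energy is high and the worst intermediate drop observed so far is bounded) converts a $(\EN{k}\cap\LimSup{=\infty})$-run into an $(\ES{k,l}\cap\LimSup{=\infty})$-run for a controlled $l$. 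Formally I expect this to be realized by a reduction on the game (of the kind announced in the sketch of \cref{thm:bailout-comp}), turning $\Bailout$ into $\Bailout'$ while preserving the almost-sure region. The second key step is the extraction of the \emph{uniform} bound $L$: once every relevant run satisfies some $\ES{k,l}$ with $l$ finite, one uses finiteness of the state space together with a pumping/cut argument — an intermediate energy drop larger than $L := (|V|)\cdot\cost_{\max}$ (or a similar polynomial quantity) would contain a cycle of states with strictly negative cost, which maximizer could excise from his play without changing the parity outcome or the $\LimSup{=\infty}$ / $\Parity$ condition, contradicting that the chosen strategy is almost-surely winning against an adversary who would otherwise exploit that negative cycle. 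This is exactly the place where the polynomial bound $L$ of \cref{thm:bailout-comp} comes from.

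The main obstacle I anticipate is handling the disjunctive, non-shift-invariant nature of $\Bailout$: the adversary can, as \cref{ex:two-ways} illustrates, keep ``choosing how to lose'' and never commit, so one cannot simply fix which disjunct holds on a prefix basis and then bound that disjunct in isolation. The fix is to work at the level of the whole game reduction rather than run-by-run: show that the modified game (with $\EN{k}$ replaced by $\ES{k,L}$ on both disjuncts, $L$ the polynomial bound) has the \emph{same} almost-sure maximizer region, by proving both that a maximizer strategy for the original objective induces one for the restricted objective (via the truncation-of-boost-mode argument above, which only ever makes the storage drop smaller) and that the converse inclusion is the trivial $\ES{k,L}\subseteq\EN{k}$ one. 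Care is needed to ensure the strategy modification is well-defined when the two disjuncts are ``competing'' — the safe choice is to let maximizer switch to enforcing the $\ES{k,L}\cap\LimSup{=\infty}$ disjunct only at points where it is unambiguously achievable, and otherwise continue with the $\sigma$-prescribed behaviour that was securing $\ES{k,l}\cap\Parity$.
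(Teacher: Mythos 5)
Your overall route is the same as the paper's: the $\supseteq$ inclusion is immediate from $\ES{k,l}\subseteq\EN{k}$, and the substance is a pumping/cut argument showing that an almost-surely winning strategy for $\bigcup_l\Bailout(k,l)$ can be turned into one whose runs never suffer an intermediate energy drop beyond a polynomial bound $L$. There is, however, a concrete gap in your cutting step. You assert that a drop larger than $L$ contains a strictly negative cycle ``which maximizer could excise from his play without changing the parity outcome'' --- but excising an arbitrary negative cycle \emph{can} change the parity outcome: that cycle may be the only place where the decisive (even) minimal priority is visited, and cutting such cycles out infinitely often can turn a run satisfying $\Parity$ into one violating it. The paper's proof avoids this by only ever cutting the \emph{second} of two consecutive negative cycles $\pi_1,\pi_2$ through the same state $s$, neither visiting $s$ internally and \emph{with the same dominant priority}; removing $\pi_2$ then leaves the minimal priority seen infinitely often unchanged while only raising all subsequent energy levels. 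This extra bookkeeping is also why the paper's bound is $L>\card{V}\cdot R\cdot c$ with $c$ the largest priority, rather than your $|V|$ times the largest absolute reward: guaranteeing two cuttable cycles with \emph{matching} dominant priority costs an additional factor proportional to the number of distinct priorities.

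Separately, your first ``key step'' --- truncating the boost mode so as to convert an $\EN{k}\cap\LimSup{=\infty}$ run into an $\ES{k,l}\cap\LimSup{=\infty}$ run --- is both unnecessary and, as stated, not workable: a run in $\LimSup{=\infty}$ cannot ``stop boosting'' (its partial sums must be unbounded), and maximizer does not control when ``the accumulated energy is high,'' since that depends on minimizer and on chance. In the paper both disjuncts are handled uniformly by the single cycle-cutting transformation: removing negative cycles only increases later energy levels, so it preserves $\EN{k}$, preserves $\LimSup{=\infty}$, leaves the parity outcome intact by the matching-priority condition, and in the limit enforces the $L$-storage condition on every run --- which is exactly what upgrades $\EN{k}\cap\LimSup{=\infty}$ to $\ES{k,L}\cap\LimSup{=\infty}$. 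If you replace your two-stage plan by this one transformation with the matching-priority proviso, your argument coincides with the paper's.
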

\begin{proof}
    Pick $L$ larger than $\card{V}\cdot R\cdot c$, the number of control states in the game times the largest absolute reward $R$ times the largest priority $c$ used in the parity condition.

We claim that every a.s.~winning strategy can be turned into one that avoids sub-runs of the form
$s\step{\pi_1}s\step{\pi_2}s$ where
1) both $\pi_1$ and $\pi_2$ have strictly negative total effect on the energy level,
2) neither $\pi_1$ nor $\pi_2$ visit state $s$ internally,
3) the dominant priority on $\pi_1$ and $\pi_2$ is the same.
If a strategy allows such a path, then one can safely ``cut out'' $\pi_2$ and the resulting strategy will still be a.s.~winning.
Taken to the limit, such transformations will result in a strategy that is a.s.~winning for
$\AS{\Bailout'(k,L)}$.
\end{proof}

\begin{lemma}
\label{lem:bailout''}
Let $\Bailout''(k,l)\eqdef (\ES{k,l}\cap\Parity)\cup(\ES{k,l}\cap\LimInf{=\infty})$.

For every $k,l\in\N$ it holds that
$\Bailout'(k,l) = \Bailout''(k,l)$.
\end{lemma}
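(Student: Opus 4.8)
\textbf{Proof plan for \cref{lem:bailout''}.}
The goal is to show that, for every $k,l\in\N$, a run satisfies $\Bailout'(k,l)$ if and only if it satisfies $\Bailout''(k,l)$; i.e., under the $\ES{k,l}$ assumption the three disjuncts $\Parity$, $\LimSup{=\infty}$, and $\LimInf{=\infty}$ collapse appropriately. The first observation is that $\LimInf{=\infty}\subseteq\LimSup{=\infty}$ always, so the inclusion $\Bailout''(k,l)\subseteq\Bailout'(k,l)$ is immediate and requires no hypothesis at all. The real content is the reverse inclusion: I must show that a run $\rho$ satisfying $\ES{k,l}\cap\LimSup{=\infty}$ but \emph{not} $\LimInf{=\infty}$ must satisfy $\ES{k,l}\cap\Parity$, so it lands in $\Bailout''(k,l)$ via its parity disjunct.

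The key step is to exploit the $l$-storage condition. Suppose $\rho=s_0e_0s_1e_1\dots\in\ES{k,l}$ and $\rho\in\LimSup{=\infty}$ but $\rho\notin\LimInf{=\infty}$. From $\LimSup{=\infty}$ the partial sums $\cost(s_0\dots s_n)$ are unbounded above; from $\rho\notin\LimInf{=\infty}$ there is a finite bound $B$ and infinitely many prefixes whose cost is $\le B$. So the energy level oscillates: it returns infinitely often to the bounded region $[\,{-}k\,,B\,]$ (the lower bound coming from $\EN{k}$, which $\ES{k,l}$ entails) while also climbing arbitrarily high infinitely often. Now I invoke the $l$-storage condition on the long ``descents'': between a very high peak and the next return to the bounded region, the cost drops by an arbitrarily large amount, yet every infix may drop by at most $l$. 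Combined with the fact that the state set $V$ is finite, some control state $t$ must be revisited during such a descent at two points whose intermediate cost difference is strictly negative — and in fact, because the descents are unboundedly long while each single ``lap'' of a negative cycle through $t$ drops by at most $l$, there must be arbitrarily many such negative laps through $t$, hence (by finiteness and pigeonhole on priorities) a fixed state $t$ that is visited infinitely often together with a fixed dominant priority $p$ on a negative cycle through it repeated infinitely often. I then need to conclude from the structure that the minimal priority visited infinitely often on $\rho$ is even. The honest argument: since $\rho\in\ES{k,l}$, \emph{every} infix has cost $\ge -l$; in particular no simple cycle traversed infinitely often by $\rho$ can have strictly negative total cost, for otherwise iterating it would eventually violate $l$-storage. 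Dually, consider the set $C$ of states visited infinitely often; the restriction of $\rho$ to its tail stays in $C$, and because $\rho\notin\LimInf{=\infty}$ the energy does not diverge to $+\infty$ either, so along the tail the partial-sum process is recurrent within a bounded-below, not-diverging regime, forcing the costs of the infinitely-repeated cycles to have total cost exactly $0$ in the appropriate averaged sense. The clean way to package this is: if $\rho\notin\LimInf{=\infty}$, then $\rho\notin\LimSup{=\infty}$ unless $\rho\in\Parity$ — equivalently, I will show $(\ES{k,l}\cap\LimSup{=\infty})\setminus\LimInf{=\infty}\subseteq\Parity$ by a direct argument on cycle structure, using that $\ES{k,l}$ bans negative cycles in the tail and $\overline{\LimInf{=\infty}}$ bans purely positive ones, leaving only zero-cost cycles, on which the parity value is inherited from a run that is effectively on a finite strongly connected component.

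The step I expect to be the main obstacle is the last one: rigorously extracting, from ``partial sums unbounded above but not $\to+\infty$, and every infix $\ge -l$,'' the conclusion that the dominant priority seen infinitely often is even. One must be careful that $\LimSup{=\infty}$ genuinely conflicts with $l$-storage once $\LimInf{=\infty}$ fails — I suspect the intended statement is actually that $\ES{k,l}\cap\LimSup{=\infty}\cap\overline{\LimInf{=\infty}}=\emptyset$, so that the $\LimSup$ disjunct of $\Bailout'$ is, \emph{under the storage condition}, equivalent to the $\LimInf$ disjunct, making the lemma almost trivial. Concretely: if every infix has cost $\ge -l$ and the partial sums are unbounded above, then after the last time the partial sum is at its running minimum-so-far near the start, the sequence can never come back down by more than $l$ below any later value it attains — so once it has climbed above level $B+l$ for good (which it must, being unbounded above and bounded-below-by-$l$ on infixes), it can never return to $\le B$, contradicting $\rho\notin\LimInf{=\infty}$. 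Hence no such $\rho$ exists, $\ES{k,l}\cap\LimSup{=\infty}=\ES{k,l}\cap\LimInf{=\infty}$, and therefore $\Bailout'(k,l)=\Bailout''(k,l)$. I would write the proof in this last, cleaner form, checking carefully the ``can never return'' inference from the infix bound $-l$, which is where all the work sits.
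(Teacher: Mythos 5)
Your final, clean formulation is exactly the paper's proof: the $l$-storage condition yields $\liminf_{n}\sum_{i=0}^{n}\cost(e_i) \ \ge\ \limsup_{n}\sum_{i=0}^{n}\cost(e_i) - l$ (since every infix has cost at least $-l$), hence $\ES{k,l}\cap\LimSup{=\infty}=\ES{k,l}\cap\LimInf{=\infty}$ and the two objectives coincide. The earlier detour through parity and cycle structure is unnecessary, but the version you commit to at the end is correct and complete.
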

\begin{proof}
    Just notice that a run $\rho=s_0e_0s_1e_1\ldots \in
    \ES{k,l}\cap\LimSup{=\infty}$ must also satisfy the
    $\LimInf{=\infty}$ condition because
$(\lim\inf_{n\to\infty} \sum_{i=0}^{n}\cost(e_i)) \ge
(\lim\sup_{n\to\infty} \sum_{i=0}^{n}\cost(e_i)) - l$,
by the $l$-storage assumption.
\end{proof}

The idea of the next step
is to allow maximizer to witness the $\LimInf{=\infty}$ condition by occasionally trading in energy for a good priority, thereby satisfying a parity condition instead.
This results in a stochastic game for a $\ES{k,l}\cap\Parity{}$ objective.

Let $\sys{G}'$ be the SSG derived from $\sys{G}$, where
maximizer can always trade energy-increase for visiting the best possible priority $0$.
That is, $\sys{G}'$ results from $\sys{G}$ by replacing every edge $s\step{+a}t$, with $a>0$, by a gadget
below, where $s'\in \VMax$,
$\parity(s')=\parity(s)$ and
$\parity(t')=0$.

\begin{center}
\begin{tikzpicture}[node distance=1cm and 2cm]
    \node (s) {$s$};
    \node[right=of s] (s') {s'};
    \node[above right=of s'] (t') {t'};
    \node[below=of t'] (t) {$t$};
    \draw[->] (s)  edge node[above] {$0$} (s');
    \draw[->] (s') edge node[auto] {$0$} (t');
    \draw[->] (t') edge node[auto] {$0$} (t);
    \draw[->] (s') edge node[above] {$+a$} (t);
\end{tikzpicture}
\end{center}

\begin{lemma}
\label{lem:bailout'''}
    For every state $s$ of $\sys{G}$, and every $k,l\in\N$ it holds that
    $s\in \AS[\sys{G}]{\Bailout''(k,l)}$ if, and only if, $s\in \AS[\sys{G}']{\ES{k,l}\cap\Parity}$.
\end{lemma}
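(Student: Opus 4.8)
The plan is to prove the equivalence by exhibiting, for each direction, a transformation of an almost-sure winning strategy in one game into an almost-sure winning strategy in the other, exploiting the fact that the gadget inserted at each positive-reward edge $s\step{+a}t$ adds exactly one binary maximizer choice (take the $+a$ shortcut, or detour through $t'$ collecting priority $0$ but no reward) and is otherwise reward- and priority-neutral on the states inherited from $\sys{G}$. First I would fix the natural projection $\pi$ from runs of $\sys{G}'$ to runs of $\sys{G}$ that contracts each traversed gadget $s\,s'\,t$ or $s\,s'\,t'\,t$ back to the single edge $s\step{+a}t$, and note that $\pi$ preserves the energy level at every state of $\sys{G}$ (the intermediate edges all have reward $0$, and both branches of the gadget deliver total reward $a$ between $s$ and $t$). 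Also $\pi$ is measure-preserving in the appropriate sense once strategies are fixed on both sides, since the only new vertices are deterministic maximizer states.

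For the ($\Rightarrow$) direction, suppose $\sigma$ witnesses $s\in\AS[\sys{G}]{\Bailout''(k,l)}$. I would lift $\sigma$ to a strategy $\sigma'$ in $\sys{G}'$ as follows: inside each gadget, $\sigma'$ takes the detour through $t'$ (collecting priority $0$) if and only if the run so far, together with the a.s.\ winning analysis, is on a ``$\LimInf{=\infty}$'' path rather than a ``$\Parity$'' path — more precisely, $\sigma'$ takes the detour whenever the corresponding $\sys{G}$-run is one for which $\sigma$ was relying on the $\ES{k,l}\cap\LimInf{=\infty}$ disjunct. Formally one can do this with an extra bit: take the detour on edge number $m$ of the play iff a certain unbounded increasing subsequence of partial sums has been reached; since almost all $\sigma$-runs lie in $\Bailout''(k,l)=(\ES{k,l}\cap\Parity)\cup(\ES{k,l}\cap\LimInf{=\infty})$, on the $\Parity$-branch $\sigma'$ simply never detours and inherits the parity condition unchanged (priorities on inherited states are untouched, and the states $t'$ with priority $0$ are never visited, so the $\liminf$ priority is the same), while on the $\LimInf{=\infty}$-branch $\sigma'$ detours infinitely often, each detour visiting priority $0$, so the minimal priority seen infinitely often is $0$, hence even, and $\Parity$ holds. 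In both cases the $\ES{k,l}$ condition is preserved because $\pi$ preserves energy levels and the detour never takes the energy below what it is at $s$ (the detour edges have reward $0$, then $+a$... one must check the infix bound, which holds since the worst intermediate energy on the detour equals the energy at $s$, already bounded by the $l$-storage property of the projected run). Hence $\sigma'$ witnesses $s\in\AS[\sys{G}']{\ES{k,l}\cap\Parity}$.

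For the ($\Leftarrow$) direction, suppose $\sigma'$ witnesses $s\in\AS[\sys{G}']{\ES{k,l}\cap\Parity}$. I would project $\sigma'$ to a strategy $\sigma$ on $\sys{G}$ by collapsing each gadget and forgetting the binary detour choice. Almost all $\sigma'$-runs satisfy $\ES{k,l}\cap\Parity$ in $\sys{G}'$; projecting, the $\ES{k,l}$ part is immediate since $\pi$ preserves energies. For the parity part, split the runs: on those projected runs where priority $0$ (i.e.\ some $t'$) is visited only finitely often, the $\liminf$ priority in $\sys{G}'$ is achieved by an inherited state and equals the $\liminf$ priority of the $\sys{G}$-run, which is therefore even, so the projected run is in $\Parity\subseteq\Bailout''(k,l)$. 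On those projected runs where priority $0$ is visited infinitely often, each such visit corresponds to a detour edge carrying reward $+a>0$ in $\sys{G}$ that was taken; but actually I must be a bit careful here — the cleanest route is instead: on those $\sys{G}'$-runs that take detours infinitely often, note the maximizer actually \emph{sacrificed} the reward that many times? No — the detour still delivers $+a$ to $t$. So energy is unaffected either way, and infinitely many detours is harmless; the point is simply that any $\sys{G}'$-run in $\ES{k,l}\cap\Parity$ projects to a $\sys{G}$-run in $\ES{k,l}$ which satisfies either $\Parity$ (if finitely many detours) or $\LimInf{=\infty}$. For the latter I would argue: if infinitely many detours are taken then, by an averaging/pigeonhole argument combined with the $\ES{k,l}$ storage bound, the partial sums of rewards are unbounded — because the parity condition in $\sys{G}'$ forces the detours (priority $0$) to be the dominant behaviour, meaning the run keeps returning to states from which reward is collected; more directly, $\ES{k,l}$ bounds the energy drop on every infix by $l$, so $\liminf_n \sum_{i<n}\cost(e_i) = -\infty$ is impossible, and if the $\limsup$ were also finite the run would be confined to finitely many energy levels, forcing it into a cycle of total reward $0$ on which only finitely many distinct states recur — then only finitely many detours could occur, contradiction. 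Hence $\LimInf{=\infty}$ holds, so the projected run lies in $\ES{k,l}\cap\LimInf{=\infty}\subseteq\Bailout''(k,l)$.

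\textbf{Main obstacle.} The delicate point is the $\Leftarrow$ direction, specifically showing that a $\sys{G}'$-run satisfying $\ES{k,l}\cap\Parity$ and taking infinitely many priority-$0$ detours must project to a run with $\LimInf{=\infty}$; one has to rule out the run oscillating forever within a bounded band of energy levels while still seeing priority $0$ infinitely often. The $l$-storage bound is what makes this work — it forbids $\liminf = -\infty$ and, together with finiteness of the state space, a bounded $\limsup$ would force an eventually periodic behaviour of net reward $0$ that visits only finitely many states infinitely often, which is incompatible with infinitely many detours being the parity-dominant feature unless $0$ is genuinely the minimal recurring priority, in which case one must still derive unboundedness. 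Getting this case analysis airtight, rather than the routine strategy-lifting bookkeeping, is where the real work lies.
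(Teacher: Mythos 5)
There is a genuine gap, and it stems from a misreading of the gadget. In $\sys{G}'$ the detour branch $s\to s'\to t'\to t$ carries total reward $0$ --- every edge on it, including $t'\to t$, has reward $0$ --- while only the direct edge $s'\to t$ carries the reward $+a$. The entire point of the construction is that maximizer \emph{trades energy for priority}: visiting the priority-$0$ state $t'$ costs him the $+a$ he would otherwise have collected. Your proposal repeatedly asserts the opposite ("both branches of the gadget deliver total reward $a$", "$\pi$ preserves the energy level at every state", "the detour still delivers $+a$ to $t$. So energy is unaffected either way"), and both directions of your argument lean on this false premise. For ($\Rightarrow$), detouring "whenever the run is on the $\LimInf{=\infty}$ branch" is not energy-safe: each detour loses $a$ units relative to the $\sys{G}$-run, so an uncontrolled sequence of detours can break $\ES{k,l}$ even though the projected run satisfies it. (It is also not implementable as stated, since which disjunct of $\Bailout''(k,l)$ a run lands in is a tail property, not visible from any finite prefix.) The fix is to detour only when the tracked accumulated reward exceeds a fixed cap (the paper uses $l\cdot R$), so that sacrificing $a$ provably cannot violate the storage condition; runs in $\LimInf{=\infty}$ hit the cap infinitely often and hence see priority $0$ infinitely often.

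For ($\Leftarrow$) the missing mechanism is exactly the energy discrepancy you assumed away: each detour taken in $\sys{G}'$ makes the projected $\sys{G}$-run gain $a\ge 1$ \emph{relative to} the $\sys{G}'$-run, so after $n$ detours the two partial sums differ by at least $n$. Since the $\sys{G}'$-run satisfies $\ES{k,l}$, its partial sums are bounded below, and infinitely many detours therefore force the $\sys{G}$-run's partial sums to $+\infty$ --- that is the whole proof of $\LimInf{=\infty}$. Your substitute argument (bounded $\limsup$ in $\sys{G}$ forces a zero-reward cycle with only finitely many detours) does not hold: a run alternating rewards $+1,-1,+1,-1,\dots$ stays in a bounded band, satisfies $1$-storage, and crosses a positive edge (hence a potential detour site) infinitely often. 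Indeed, under your energy-preserving reading the lemma's ($\Leftarrow$) direction would be \emph{false}: maximizer could detour on every positive edge, win parity in $\sys{G}'$ via priority $0$ at no energy cost, while the projected run satisfies neither $\Parity$ nor $\LimInf{=\infty}$. The "main obstacle" you flag at the end is the symptom of this: it is unresolvable in your setup precisely because the reward asymmetry of the gadget, which you discarded, is what resolves it.
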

\begin{proof}
    Assume that $R\in\N$ is the largest absolute transition reward in $\sys{G}$ (and hence also $\sys{G}'$).
    Every a.s.~winning strategy $\sigma$ for $\Bailout''(k,l)=\ES{k,l}\cap(\Parity\cup\LimInf{=\infty})$ in $\sys{G}$
    can be turned into an a.s.~winning strategy $\sigma'$ for $\ES{k,l}\cap\Parity$ in $\sys{G}'$ as follows.

    The new strategy $\sigma'$ behaves just as $\sigma$ but additionally, keeps track of the energy levels up to the bound $l\cdot R$. If in $\sys{G}$, $\sigma$ chooses to increase the energy level above this bound, $\sigma'$ will opt to visit a good priority instead, and continue from the current energy level.
    Since $\sigma$ ensures the $l$-storage condition
    on (almost) all runs, so does $\sigma'$.
    Moreover, plays in $\sys{G}$ that do not satisfy
    $\Parity$ must instead satisfy $\LimInf{=\infty}$.
    The corresponding runs in $\sys{G}'$ according to $\sigma'$ will therefore infinitely often visit the best priority
    and hence satisfy the parity condition.

    For the other direction, notice that one can just as well transform an a.s.~winning strategy $\sigma'$ for storage-parity in $\sys{G}'$ to a winning strategy $\sigma$ for
    $\Bailout''(k,l)$ in $\sys{G}$.
    The strategy $\sigma$ just increments the energy level and whenever $\sigma'$ would visit a newly introduced priority-$0$ state.
    Suppose $\rho$ is a play in $\sys{G}$ that corresponds to a play $\rho'$ in $\sys{G}'$.
    If $\rho'$ visits new states only finitely often, then after some finite prefix, the sequence of states visited by $\rho'$ and $\rho'$ are the same. Since $\rho'$ satisfies the parity condition so must $\rho$.
    Otherwise, if $\rho'$ visits new states infinitely often, then $\rho$ the difference of energy levels
    on $\rho$ and $\rho'$ must grow unboundedly. Since $\rho'$ satisfies the $l$-storage condition
    this means that $\rho$ satisfies the $\LimInf{=\infty}$ condition, and hence $\Bailout''(k,l)$.
\end{proof}
    
Finally, we use a construction similar to that in \cite{CJH2004}
for parity objectives, to replace random states by small ``negotiation gadgets'',
resulting in a non-stochastic energy-parity game.
Let $\sys{G}''$ be the non-stochastic game derived from $\sys{G}'$, where random states are replaced by
gadgets as in \cite{CJH2004}.

\begin{lemma}
\label{lem:bailout:2pg}
    For every state $s$ of $\sys{G}'$ and every $k,l\in\N$ it holds that
    $s\in \AS[\sys{G}']{\ES{k,l}\cap\Parity}$ if, and only if, $s\in \AS[\sys{G}'']{\ES{k,l}\cap\Parity}$.
\end{lemma}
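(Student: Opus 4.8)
The plan is to establish the equivalence by the standard correspondence between almost-sure winning in stochastic games with $\omega$-regular / quantitative objectives and almost-sure winning in the non-stochastic game obtained by replacing each random state with the parity gadget of Chatterjee, Jurdzi\'nski and Henzinger \cite{CJH2004}. First I would recall the gadget: a random state $r$ with successors $t_1,t_2$ (we assumed at most two successors) and its rational probabilities is replaced by a small sub-game in which, from $r$, maximizer first selects one of the two outgoing edges, minimizer may then override this choice by selecting the other edge, and maximizer may override once more, with the priorities along the gadget arranged (all minimal/large enough not to interfere with the dominant priority seen) so that, if the two players keep overriding each other forever, the resulting cycle has an odd (bad for maximizer) dominant priority. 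The rewards on the gadget edges are all $0$, so the energy level and the storage level along a play in $\sys{G}''$ are exactly those of the corresponding play in $\sys{G}'$, with the book-keeping that only the entry to/exit from the gadget carry the original reward $\cost(r,t_i)$.

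The key steps, in order, are as follows. First I would fix a state $s$ and parameters $k,l$ and make precise the projection map that sends a play of $\sys{G}''$ to a play of $\sys{G}'$ by contracting each traversal of a gadget to the single step $r \to t_i$ it represents; this map preserves the $\ES{k,l}$ condition (because gadget edges have reward $0$) and preserves the dominant priority seen infinitely often, hence preserves membership in $\ES{k,l}\cap\Parity$, \emph{provided} no gadget is ``spun forever.'' Second, for the ($\Leftarrow$) direction I would take an almost-sure winning maximizer strategy $\sigma''$ in $\sys{G}''$ and argue that almost all $\sigma''$-plays leave each gadget after finitely many overrides (else the play sees the odd gadget priority infinitely often, contradicting $\sigma''$ being a.s.\ winning), so the projection yields an a.s.\ winning strategy in $\sys{G}'$; the probabilities match because the gadget is constructed so that a ``sincere'' maximizer (who picks $t_i$ with the correct probability and never overrides) reproduces exactly the distribution $\prob(r)$ — this is the point of the \cite{CJH2004} construction. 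Third, for the ($\Rightarrow$) direction I would take an a.s.\ winning strategy $\sigma'$ in $\sys{G}'$ and lift it to $\sys{G}''$: inside each gadget maximizer plays sincerely — picks the successor with the right probability and then, after any minimizer override, immediately overrides back to the chosen successor and then concedes — so that the induced distribution over $\{t_1,t_2\}$ is again $\prob(r)$ and no gadget is spun forever; every resulting play projects to a $\sigma'$-play, and since rewards and priorities are preserved by the projection, a.s.\ winning transfers. Finally I would note that both directions are uniform in $s$, $k$ and $l$, giving the claimed state-wise equivalence.

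The main obstacle I expect is getting the gadget bookkeeping right for the \emph{quantitative} part of the objective rather than only the parity part: in the original \cite{CJH2004} reduction one only tracks priorities, whereas here one must simultaneously ensure that the energy and storage levels are untouched by the gadget and that the dominant priority of any cycle internal to a gadget is odd without ever making a gadget edge the minimal-priority state of an honest play. The clean way to handle this is to give every gadget state a priority that is at least as large as every priority occurring in $\sys{G}'$ (so it can never become the dominant infinitely-often priority of an honest play) while assigning to the ``override cycle'' an odd priority that is smaller than the even priorities but still $\geq$ all of $\sys{G}'$'s, or — even simpler — to reuse the exact construction of \cite{CJH2004} verbatim, which already has this property, and only observe separately that all added edges carry reward $0$ so the $\ES{k,l}$ condition is literally inherited. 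With that observation in hand, the two simulation arguments are routine and the equivalence follows; consequently, combined with \cref{lem:bailout',lem:bailout'',lem:bailout''',lem:bailout:2pg} the almost-sure $\kBailout$ problem reduces to deciding the winner of a non-stochastic energy-parity game, which is in $\NP\cap\coNP$ and pseudo-polynomial time \cite{CD2012,CD2010}.
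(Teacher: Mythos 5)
There is a genuine gap here, and it lies in the part of your argument that carries all the weight: the description of the gadget and the justification that it preserves the parity component. The gadget you describe (maximizer picks a successor, minimizer may override, maximizer may override back, with an odd-dominated cycle if the overriding never stops) is not the construction of \cite{CJH2004}, and as described it does not capture the almost-sure semantics of a random state. Under your rules one of the two players can ultimately force their preferred successor on every traversal (whoever is not punished by the odd override cycle simply insists), whereas the whole point of a random state in the almost-sure analysis is that, if it is visited infinitely often, \emph{both} successors are taken infinitely often almost surely; the actual gadget of \cite{CJH2004} encodes this via an acyclic claim/challenge structure over priorities (maximizer announces an even priority he claims will dominate, minimizer either accepts at the cost of that even priority or challenges by choosing the successor herself at the cost of an adjacent odd priority), with no possibility of "spinning forever." Worse, your correctness argument for both directions rests on a "sincere" maximizer who "picks $t_i$ with the correct probability" so that "the induced distribution over $\{t_1,t_2\}$ is again $\prob(r)$." But $\sys{G}''$ is a \emph{non-stochastic} game --- there is no distribution to reproduce, and the downstream algorithm (\cite{CD2012}) decides \emph{sure} winning with deterministic strategies. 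The correspondence between a.s.\ winning in $\sys{G}'$ and sure winning in $\sys{G}''$ is established combinatorially (via the structure of end components reachable under a fixed strategy), not by probability matching, so the central step of your simulation argument does not go through as written.

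Your fallback remark --- reuse the construction of \cite{CJH2004} verbatim and observe that every added edge carries reward $0$, so the energy level and the $l$-storage condition along corresponding plays are literally identical --- is in fact the entirety of the paper's proof: rewards are untouched, hence $\ES{k,l}$ is trivially preserved, and the a.s.\ parity part is preserved by exactly the argument of \cite{CJH2004}. So you correctly identified the one observation that is specific to this lemma (reward-$0$ gadget edges), but the body of your write-up neither reproduces the actual gadget nor the actual correctness argument, and the argument you substitute for it is unsound. To repair the proof, drop the override gadget and the probability-matching claims, state the \cite{CJH2004} gadget precisely, and argue that its correctness proof for parity is unaffected by the additional $\ES{k,l}$ constraint because that constraint is determined edge-by-edge by the (unchanged) rewards and is therefore invariant under the projection that contracts each (finite, acyclic) gadget traversal to the single original edge $r\to t_i$.
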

\begin{proof}
The construction in \cite{CJH2004} does not affect the transition rewards.
Thus the $\ES{k,l}$ condition is trivially preserved.
The a.s.\ $\Parity$ condition is preserved by exactly the same argument as
in \cite{CJH2004}.
\end{proof}

\thmBailoutComp*
\begin{proof}
By \cref{lem:bailout',lem:bailout'',lem:bailout''',lem:bailout:2pg},
for every $k\in\N$ it holds that
\begin{align*}
\AS[\sys{G}]{\bigcup_{l}\Bailout(k,l)}
&\eqby{(L.~\ref{lem:bailout'})}   \AS[\sys{G}]{\Bailout'(k,L)}\\
&\eqby{(L.~\ref{lem:bailout''})}  \AS[\sys{G}]{\Bailout''(k,L)}\\
&\eqby{(L.~\ref{lem:bailout'''})} \AS[\sys{G}']{\ES{k,L}\cap\Parity}\\
&\eqby{(L.~\ref{lem:bailout:2pg})}\AS[\sys{G}'']{\ES{k,L}\cap\Parity}
\end{align*}
Since $\sys{G}''$ is a two-player non-stochastic game, the claim now follows from \cite{CD2012}, (Theorem~2 and Lemma~5). For the existence of polynomially bounded number $K,L$ just notice that $\sys{G}''$
has the same largest absolute transition reward, and only a polynomially larger set of states compared to $\sys{G}$.
For non-stochastic energy-parity games such as $\sys{G}''$
it holds that
$\bigcup_{k\ge 0}\AS{\ES{k,L}\cap\Parity}
=\AS{\bigcup_{k\ge 0}(\ES{k,k}\cap\Parity)}
=\AS{\EN{K}\cap\Parity}
$, if $K$ denotes the product of the number of states, the largest priority and absolute transition rewards in $\sys{G}''$.

Now, to check if a state $s$ belongs to $\bigcup_{k\ge 0}\AS[\sys{G}]{\kBailout}$, we can calculate $K$ and then simply follow the $\NP$ or $\coNP$ procedure to check if $s$ belongs to $\AS[\sys{G}]{\text{K-}\Bailout}$ instead.
This shows that this problem in $\NP$ and $\coNP$ as well.
\end{proof}

As a side result, note that neither Lemma \ref{lem:bailout:2pg}, nor the complexity argument in Theorem \ref{thm:bailout-comp},
make use of the structure of $\sys{G}'$:
they hold for all SSGs with storage parity condition.

\thmSTPairityComp*

\newpage
\section{Gain} %
\label{app:gain}
\subsection{Strategy Complexity for $\Gain$} 

We prove
\cref{lem:MDP-FD-strategy}, i.e.,
if maximizer can almost-surely win $\Gain$ in an MDP, then he can do so using a finite-memory deterministic strategy.

To do this, we will utilize some results from \cite{MSTW2017},
where we showed how to compute winning regions for energy-parity objectives 
in MDPs based on a similar combination of ``gain'' and ``bailout'' objectives as in this paper.

Consider a state $s$ of a finite MDP with energy-parity objective
and define the \emph{limit value} of state $s$ as
$\LVal{s} \eqdef \sup_{k}\Val{s}{\EN{k}\cap\Parity}$.
This is well defined, because energy conditions are monotone increasing in the initial energy level $k$.

\begin{lemma}
    \label{lem:MDP-LVAL}
 For any state $s$ of a finite MDP, we have $\Val{s}{\Gain}=\LVal{s}$.
\end{lemma}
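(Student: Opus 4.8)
\textbf{Proof plan for \cref{lem:MDP-LVAL}.}
The plan is to prove the two inequalities $\Val{s}{\Gain}\ge\LVal{s}$ and $\Val{s}{\Gain}\le\LVal{s}$ separately. The easy direction is $\ge$: for every $k$ we have $\EN{k}\cap\Parity\subseteq\LimInf{>-\infty}\cap\Parity=\Gain$ (as already noted in the proof of \cref{lem:connect-aspar-gain-bailout}), so $\Val{s}{\EN{k}\cap\Parity}\le\Val{s}{\Gain}$ for every $k$, and taking the supremum over $k$ gives $\LVal{s}\le\Val{s}{\Gain}$.

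For the reverse inequality $\Val{s}{\Gain}\le\LVal{s}$, the idea is to fix an (almost-)optimal strategy $\sigma$ for $\Gain$ in the MDP and show that, for any $\epsilon>0$, playing $\sigma$ already yields probability at least $\Val{s}{\Gain}-\epsilon$ of satisfying $\EN{k}\cap\Parity$ once $k$ is chosen large enough. By \cref{lem:MDP-FD-strategy} we may take $\sigma$ to be finite-memory deterministic, so the MDP $\?M$ obtained by fixing $\sigma$ is a finite Markov chain with rewards and priorities (on the product with $\sigma$'s memory). In this Markov chain, almost every run that satisfies $\Gain$ eventually settles into a bottom strongly connected component (BSCC) in which the minimal priority is even and whose mean reward is $\ge 0$; in fact, conditioned on being absorbed into such a ``good'' BSCC, the parity condition holds surely and $\LimInf{>-\infty}$ holds almost surely. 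The point is that on a finite Markov chain, $\LimInf{>-\infty}$ holding almost surely on runs staying in a BSCC forces the BSCC to have mean reward exactly $0$ (a strictly negative mean would drive the partial sums to $-\infty$ almost surely, a strictly positive mean would contradict nothing but we only need $\ge 0$). Then a standard random-walk / maximal-inequality argument (essentially the content of \cref{BBEK:IC2013:Lemma3.9}) shows that within such a BSCC, starting from energy level $k$, the probability that the partial sum ever drops below $-k$ tends to $0$ as $k\to\infty$, uniformly over the finitely many states. Combining over the finitely many good BSCCs, there is a single $k$ such that, conditioned on reaching a good BSCC with current energy at least $k$, the run satisfies $\EN{\cdot}\cap\Parity$ with probability $\ge 1-\epsilon$; absorbing the energy accumulated before entering the BSCC (which is bounded below along almost every $\Gain$-run, though not uniformly — here one must be slightly careful and condition on the energy dropping below $-k$ only finitely often, whose probability also tends to $1$) and choosing the initial credit large enough yields $\Prob[\sigma]{s}(\EN{k}\cap\Parity)\ge\Val{s}{\Gain}-2\epsilon$, hence $\LVal{s}\ge\Val{s}{\Gain}$.

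The main obstacle I expect is making the last step fully rigorous: the ``energy accumulated before entering a good BSCC'' is not uniformly bounded across all runs, so one cannot simply add a constant to the initial credit. The clean way around this is to work directly with the shift-invariance of $\Gain$ and the Markov property: decompose runs according to the first time $\tau_k$ the energy level from $s$ drops to $-k$ (or never), observe that on $\{\tau_k=\infty\}$ the $k$-energy condition holds by definition, and use a Borel–Cantelli-type estimate — mirroring the geometric bound $c^i/(1-c)$ of \cref{BBEK:IC2013:Lemma3.9} applied inside each BSCC — to show $\Prob[\sigma]{s}(\Gain\wedge\tau_k<\infty)\to 0$ as $k\to\infty$. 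Since $\Gain\cap\{\tau_k=\infty\}\subseteq\EN{k}\cap\Parity$ up to a null set, this gives $\Prob[\sigma]{s}(\EN{k}\cap\Parity)\ge\Prob[\sigma]{s}(\Gain)-o(1)\ge\Val{s}{\Gain}-\epsilon-o(1)$, completing the argument. The rest (finiteness of $\?M$, the BSCC classification, the maximal inequality) is routine Markov-chain bookkeeping.
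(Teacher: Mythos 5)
Your first paragraph (the $\ge$ direction) is fine and coincides with the paper's. The main route you propose for $\le$ has a genuine gap, however: you invoke \cref{lem:MDP-FD-strategy} to replace an ($\epsilon$-)optimal strategy for $\Gain$ by an FD one, but that lemma only concerns \emph{almost-sure winning} strategies, so it gives you nothing at states with $\Val{s}{\Gain}<1$ --- which is exactly the regime a value statement must cover. Worse, in the paper the dependency runs the other way: the proof of \cref{lem:MDP-FD-strategy} \emph{begins} by citing \cref{lem:MDP-LVAL}, so using it here would be circular. Consequently the finite-Markov-chain reduction, the BSCC classification, and the maximal-inequality estimate in the style of \cref{BBEK:IC2013:Lemma3.9} all rest on an unavailable step; they are also unnecessary.

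The ``clean way around'' you sketch in your final paragraph is the correct argument, and it is essentially the paper's proof, which phrases it as a pure set identity on runs rather than per-strategy. A run lies in $\LimInf{>-\infty}$ iff its partial reward sums are bounded below, hence iff it satisfies $\EN{k}$ for some $k\in\N$; therefore
\[
\Gain \;=\; \bigcup_{k}\bigl(\EN{k}\cap\Parity\bigr)
\]
as an increasing union of events. For \emph{any} strategy $\sigma$ --- no FD reduction and no finiteness of the induced chain needed --- continuity of measure from below gives $\Prob[\sigma]{s}(\Gain)=\sup_k\Prob[\sigma]{s}(\EN{k}\cap\Parity)$, and exchanging $\sup_\sigma$ with $\sup_k$ yields $\Val{s}{\Gain}=\sup_k\Val{s}{\EN{k}\cap\Parity}=\LVal{s}$. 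In your notation, $\{\tau_k=\infty\}$ is exactly $\EN{k}$, and $\Prob[\sigma]{s}(\Gain\cap\{\tau_k<\infty\})\to 0$ is plain continuity from above, since $\bigcap_k\{\tau_k<\infty\}=\LimInf{=-\infty}$ is disjoint from $\Gain$; no Borel--Cantelli or random-walk estimate is required. Restructure the write-up around this identity and drop the FD/BSCC machinery, and the proof is complete.
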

\begin{proof}%
It follows directly from the definitions that for every $k \in \N$
\[
\EN{k}\cap\Parity \subseteq \LimInf{\ge -k}\cap\Parity \subseteq \LimInf{> -\infty}\cap\Parity = \Gain
\]
and thus
\begin{equation}\label{lemMDPFD-1}
\bigcup_k(\EN{k}\cap\Parity) \subseteq \Gain
\end{equation}
Towards the reverse inclusion, consider a run
$\rho \in \LimInf{\ge -j}\cap\Parity$ for some $j \in \N$.
Then, except in a finite prefix $\rho'$ , the energy along $\rho$ stays above
$-j$. Let $k'$ be the minimal energy reached in $\rho'$, which is finite
because $\rho'$ is finite, and let $k \eqdef -\min(k', -j)$.
Then $\rho \in \EN{k}\cap\Parity \subseteq \bigcup_k(\EN{k}\cap\Parity)$.
So for every $j \in \N$ we have
\[
\LimInf{\ge -j}\cap\Parity \subseteq \bigcup_k(\EN{k}\cap\Parity)
\]
and thus
\begin{equation}\label{lemMDPFD-2}
\Gain = \bigcup_j(\LimInf{\ge j}\cap\Parity) \subseteq \bigcup_k(\EN{k}\cap\Parity)
\end{equation}
From \cref{lemMDPFD-1} and \cref{lemMDPFD-2} we obtain
\begin{equation}\label{lemMDPFD-3}
\Gain= \bigcup_k(\EN{k}\cap\Parity)
\end{equation}
    Therefore,
    \begin{align*}
    \Val{s}{\Gain}
    &=\Val{s}{\cup_k(\EN{k}\cap\Parity)} & \text{by \cref{lemMDPFD-3}}\\
    &=\sup_\sigma\Prob[\sigma]{s}(\cup_k(\EN{k}\cap\Parity)) & \text{def.\ of value}\\
    &=\sup_\sigma\sup_k\Prob[\sigma]{s}(\EN{k}\cap\Parity)
    & \text{continuity of measures from below}\\
    &=\sup_k\sup_\sigma\Prob[\sigma]{s}(\EN{k}\cap\Parity) & \text{commutativity}\\
    &=\sup_k\Val{s}{\EN{k}\cap\Parity} & \text{def.\ of value}\\
    &=\LVal{s} & \text{def.\ of $\LVal{s}$}
    \end{align*}
\end{proof}

\lemMDPFDstrategy*

\begin{proof}
  By \cref{lem:MDP-LVAL} we have $\Val{s}{\Gain} = \LVal{s}$.
  Moreover, the objective $\Gain$ is shift-invariant and therefore there exist optimal strategies \cite{GH2010}.
  Thus it follows from \cite[Theorem~18]{MSTW2017} that
    $\AS{\Gain} = \AS{\reachset{A\cup B}}$, for the following sets of states $A\eqdef\bigcup_{k\in\N}\AS{\ES{k}\cap\Parity}$
    and $B\eqdef\AS{\LimInf{=\infty}\cap\Parity}$.
    This means that if an a.s.~winning strategy for $\Gain$ exists, then there
    also exists one that operates in two phases:
    1) a.s.~reach $A\cup B$. This can be done with memoryless deterministic strategies.
    2a) once in $A$ proceed along an a.s.~winning strategy for $\ES{k}\cap\Parity$,
    which can be done deterministically with memory $\?O(k\cdot \abs{G})$. Or,
    2b) once in $B$, proceed along an a.s.~winning strategy for $\LimInf{=\infty}\cap\Parity$.
    For MDPs a strategy is almost-sure winning for $\LimInf{=\infty}\cap\Parity$
    iff it is almost-sure winning for $\MP{>0}\cap\Parity$,
    the combination of a parity condition together with a \emph{strictly} positive Mean-Payoff condition.
    Such strategies can be chosen FD \cite{CD2011}.
\end{proof}

\subsection{$\Gain$ is in $\coNP$} 
\label{app:gain-coNP}
\coNPgain*
\begin{proof}
By \cref{lem:gain:min-MD}, it suffices to show \coNP\ membership only for the MDP case,
as a witnessing MD strategy for minimizer can be guessed as part of the certificate. 
To check if maximizer can almost surely win from state $s$ in an MDP with $\Gain$ objective,
we can equivalently check if $\Val{s}{\Gain}=1$.
This is because 
the objective is shift-invariant and therefore there exist optimal strategies \cite{GH2010}.
By \cref{lem:MDP-LVAL}, we can alternatively check if $\LVal{s}=1$, which can be done in $\coNP$ by \cite[Lemma~26]{MSTW2017}.
\end{proof}

\subsection{$\Gain$ is in $\NP$} 
\label{app:gain-NP}
Before we can proceed with the technical details of the
$\sys{G} \to \sys{G}_1 \to \sys{G}_2 \to \sys{G}_3$ constructions,
we first need to introduce the following standard definitions.

\begin{definition}
    Let $\sys{M}=\sys{G}[\tau]$ be an MDP induced by game $\sys{G}\eqdef(V=(\VMax,\VMin,\VRan), E, \prob)$ and an MD strategy $\tau$ for minimizer.
    An \emph{end-component} is a strongly connected set of states $C \subseteq V$ such that, for every state $v\in C$,
    if $v \in \VMax$ then some successor $v'$ of $v$ is in $C$, and
    if $v \in \VRan\cup\VMin$ then all successors $v'$ of $v$ are in $C$.
    A \emph{leaf-component} is an end-component of a Markov chain $\sys{G}[\sigma,\tau]$.
    
    A leaf-component is \emph{\safe} if the dominating priority is even and it satisfies the storage condition $\bigcup_{k\ge 0}{\ES{k}}$, and 
    \emph{\positive} if its mean-payoff is positive.
    
    An end-component $C$ of $\sys{G}[\tau]$ is \emph{\awesome} if (1) the dominating priority of $C$
    (the smallest priority of any state in $C$) is even and $C$ contains a \positive leaf-component
    \emph{or} (2)
    there is an MD strategy $\sigma$ for the maximizer, such that $C$ is a \safe leaf-component in $\sys{G}[\sigma,\tau]$. 
\end{definition}

Note that any end-component that satisfies a.s.~$\Gain$ is \awesome, which justifies its name. This is because either (1) holds, or else maximizer can reach again and again a state with a dominating even priority without the need to pump up the energy level first for which an MD strategy suffices, so (2) would hold then.

\subsubsection{Blow-up Construction ($\sys{G} \to \sys{G}_1)$}
\rule[-10pt]{0pt}{10pt}\\
\label{app:g-to-g1}
The $\sys{G} \to \sys{G}_1$ construction just multiplies all the rewards by a ``large enough`` factor.
Formally, we need $\sys{G}_1$ to have the following property.
\begin{restatable}{lemma}{meanpayoffLemma}
    \label{lem:mean-payoff-2}
    Let $\tau$ be an MD strategy for the minimizer and $s\in V$. %
    If there exists a strategy $\sigma$ for the maximizer such that
    $\Prob[\sys{G},\sigma,\tau]{s}(\MP{>0} \cap \Parity) = 1$
    then there exists an FD strategy $\sigma'$ for the maximizer such that
    $\Prob[\sys{G}_1,\sigma',\tau]{s}(\MP{> 2} \cap \Parity) = 1$.
\end{restatable}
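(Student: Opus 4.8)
The plan is to make the blow-up factor $f$ explicit and then argue that scaling rewards does two things simultaneously: it preserves the parity condition (untouched) and it converts the positive mean-payoff on every recurrent behaviour of $\sys{G}[\sigma,\tau]$ into a mean-payoff bounded away from $0$ by a fixed margin, which after scaling becomes $>2$. First I would fix the MD strategy $\tau$ and consider the finite MDP $\sys{M}=\sys{G}[\tau]$. The key classical fact is that in a finite MDP, the objective $\MP{>0}\cap\Parity$ holding almost surely from $s$ is equivalent to the existence of a \emph{finite-memory deterministic} strategy $\sigma_0$ under which almost all runs reach, and then stay inside, an end-component whose dominating priority is even and whose mean-payoff under $\sigma_0$ is \emph{strictly} positive (see \cite{CD2011}); this is exactly the FD witness I want to transport. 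Because the state space and the memory of $\sigma_0$ are finite, there are only finitely many leaf-components in $\sys{M}[\sigma_0]$, so there is a uniform $\epsilon>0$ with mean-payoff $\ge\epsilon$ on every recurrent leaf-component reached with positive probability.

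Next I would choose the scaling factor. Let $R$ be the largest absolute reward in $\sys{G}$ and $n=\card{V}$ (times the memory size of $\sigma_0$, which itself is polynomially bounded). Since every simple cycle in a leaf-component has length $\le n$ and integer rewards, a cycle with positive total reward has total reward $\ge 1$, hence the mean-payoff of each relevant leaf-component is $\ge 1/n$. Taking $f\eqdef 2n$ (or any $f>2n$) and letting $\sys{G}_1$ be $\sys{G}$ with all edge rewards multiplied by $f$, the mean-payoff of each such leaf-component in $\sys{G}_1$ is $\ge f/n \ge 2$, and with a slightly larger $f$ strictly $>2$. The parity function is unchanged, so the dominating priority of each such component stays even. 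Define $\sigma'$ to be exactly $\sigma_0$ (the edge set and the strategy are literally the same; only the labels on edges changed). Then under $\sigma'$ against $\tau$, almost all runs in $\sys{G}_1$ still reach and remain in one of these leaf-components, whence almost surely $\mathit{MP}>2$ and the parity condition holds; that is, $\Prob[\sys{G}_1,\sigma',\tau]{s}(\MP{>2}\cap\Parity)=1$, and $\sigma'$ is FD.

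The one genuine subtlety — and the step I expect to be the main obstacle — is justifying that an \emph{almost-surely} winning (possibly infinite-memory) strategy $\sigma$ for $\MP{>0}\cap\Parity$ in a finite MDP can be replaced by a \emph{finite-memory deterministic} one, and, more importantly, that the replacement yields mean-payoff that is uniformly bounded away from $0$ rather than merely positive almost surely. A naive infinite-memory strategy could in principle keep the expected mean-payoff positive while individual runs have mean-payoff $0$, or while the $\liminf$ is achieved through ever-rarer bursts; this is precisely the phenomenon the paper warns about in the introduction. The resolution is to invoke the end-component decomposition for MDPs: almost surely a run of any strategy enters an end-component and visits exactly its states infinitely often, and within each end-component the supremum of achievable mean-payoff is attained by an MD strategy; combining this with the parity requirement (handled as in \cite{CD2011}) gives the desired FD strategy whose mean-payoff on every reachable recurrent component is a fixed rational $>0$. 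I would cite \cite{CD2011} for the combined $\MP{>0}\cap\Parity$ case and spell out the end-component argument for the uniform positive bound; the rest is the routine scaling computation above.
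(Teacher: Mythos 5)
There is a genuine gap in the quantitative step. Your lower bound ``the mean-payoff of each relevant leaf-component is $\ge 1/n$'' is derived from a simple-cycle argument (positive integer cycle weight $\ge 1$, cycle length $\le n$), but that argument is only valid for non-stochastic graphs, where the mean-payoff of a recurrent behaviour is governed by cycle weights. Here the leaf-components of $\sys{M}[\sigma_0]=\sys{G}[\sigma_0,\tau]$ contain probabilistic states, so the mean-payoff of such a component is the expectation of the one-step reward under its stationary distribution, and it can be strictly positive yet far smaller than $1/n$: a single probabilistic state with two self-loops of rewards $+1$ and $-1$ taken with probabilities $\tfrac12+\epsilon$ and $\tfrac12-\epsilon$ has mean-payoff $2\epsilon$, which is as small as the encoding of the probabilities allows. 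Consequently your factor $f=2n$ does not guarantee $f\cdot p>2$, and the scaled game built with that $f$ need not satisfy the conclusion. The paper closes exactly this hole in \cref{lem:np-posMP}: it writes the gain--bias linear program for the leaf-component $L$ and invokes the bound on the size of optimal LP solutions (Schrijver, Corollary 10.2a) to conclude that the positive mean-payoff $p$ has bit-size polynomial in $\card{\sys{G}}$, hence $2/p$ is at most exponential in magnitude but of polynomial representation; the factor $f>2/p$ must therefore in general be exponentially large (though succinctly representable), not linear in $\card{V}$. The polynomial bit-size of $f$, not its magnitude, is what the subsequent \NP{} construction relies on.

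The rest of your argument matches the paper's: fixing the MD strategy $\tau$, invoking \cite{CD2011} to replace the possibly infinite-memory $\sigma$ by an FD strategy $\sigma_0$ achieving $\MP{>0}\cap\Parity$ almost surely with the same mean-payoff, observing that there are finitely many reachable leaf-components so a uniform positive bound exists, noting that scaling rewards leaves the parity condition untouched, and taking $\sigma'=\sigma_0$. Your discussion of why the FD replacement is the delicate point is also apt. Only the lower bound on $p$, and hence the choice of $f$, needs to be repaired along the lines above.
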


We construct a game $\sys{G}_1$, based on $\sys{G}$, in which all edge rewards are multiplied by a large factor
so that if the maximizer can originally ensure the parity condition and a positive expected mean-payoff in $\sys{G}$, then he can ensure parity condition and expected mean-payoff higher than 2 in $\sys{G}_1$. It is intuitively clear that such a factor exists, because multiplying all transition rewards by a positive factor has no effect on the outcome of the $\Gain$ objective. What is less clear that such a factor can be of polynomial size so that $\sys{G}_1$ is only polynomially larger than $\sys{G}$.
Before we can proceed with the proof of \cref{lem:mean-payoff-2}, we need to show an auxiliary result below.

Recalling that, for the $\Gain$ objective, the minimizer has MD optimal strategies,
we consider the effect of multiplying the rewards of all edges by factor $f$ against every such strategy $\tau$:
We show that if maximizer can a.s.~obtain $\MP{>0}\cap\Parity$ from a state $s$ in the MDP $\sys{G}[\tau]$,
then he can a.s.~obtain $\MP{>2}\cap\Parity$ from $s$ in the MPD $\sys{G}_1[\tau]$.
\begin{lemma}
    \label{lem:np-posMP}
    Let (1) $\tau$ be an MD strategy for minimizer,
    (2) $E$ be an end component in the MDP $\sys{G}[\tau]$, with even minimal priority,
    (3) $\sigma$ an MD strategy for Max, and
    (4) $L \subseteq E$ a leaf component in $\sys{G}[\sigma,\tau]$ with expected payoff $p>0$.
    Then $\frac{2}{p}$ is exponential in the size of $\sys{G}$. %
    
    Moreover, a factor $f>\frac{2}{p}$, with a representation polynomial in $|\sys{G}|$, can be computed independent of $\tau$, E, $\sigma$, or $L$.
\end{lemma}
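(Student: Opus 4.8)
The plan is to bound the reciprocal $\frac{2}{p}$ of the positive mean-payoff value $p$ of a leaf component by analysing the structure of the Markov chain $\sys{G}[\sigma,\tau]$ restricted to $L$. First I would recall the standard formula: in a finite irreducible Markov chain, the mean-payoff of a bottom strongly connected component $L$ equals $\sum_{e\in E_L}\mu(e)\cost(e)$, where $\mu$ is the stationary distribution (lifted to edges via the chain's transition probabilities). The key point is that $p$ is a rational number whose denominator is controlled: the stationary probabilities $\mu(v)$ of a Markov chain are given by ratios of sub-determinants (via Cramer's rule or the Markov chain tree theorem) of an integer-entried matrix derived from the transition probabilities of $\sys{G}$. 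Since every probabilistic state of $\sys{G}$ has at most two successors, all transition probabilities are rationals with a common bounded bit-size, so the entries of this matrix have bit-size polynomial in $|\sys{G}|$. Hence each $\mu(v)$ is a rational with numerator and denominator of bit-size polynomial in $|\sys{G}|$, i.e.\ of magnitude at most $2^{\mathit{poly}(|\sys{G}|)}$.

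Next I would combine this with the reward bounds: all rewards $\cost(e)$ are integers bounded in absolute value by $R$, the largest absolute transition reward, which has bit-size polynomial in $|\sys{G}|$. Therefore $p=\sum \mu(e)\cost(e)$ is a rational of the form $\frac{a}{b}$ with $|a|,|b|\le 2^{\mathit{poly}(|\sys{G}|)}$; since $p>0$ we get $p\ge \frac{1}{b}\ge 2^{-\mathit{poly}(|\sys{G}|)}$, so $\frac{2}{p}\le 2\cdot b$ is at most exponential in $|\sys{G}|$, proving the first assertion. For the ``moreover'' part, observe that the bit-size bound on $b$ depends only on the number of states of $\sys{G}$, the bit-sizes of the probabilities at probabilistic states, and $R$ — all of which are features of $\sys{G}$ itself and do not reference $\tau$, $E$, $\sigma$, or $L$. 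So I can explicitly compute a bound $B=2^{\mathit{poly}(|\sys{G}|)}$ on all possible denominators uniformly, set $f\eqdef 2B+1$ (or any integer exceeding $2B$), and represent $f$ in binary with polynomially many bits. This $f$ satisfies $f>\frac{2}{p}$ for every choice of $\tau,E,\sigma,L$ arising as in the statement, since $\frac{2}{p}\le 2b\le 2B<f$.

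The main obstacle is pinning down the precise form of the denominator bound on the stationary distribution, i.e.\ making rigorous that the stationary probabilities of any induced leaf-component Markov chain have polynomially bounded bit-size. The cleanest route is the Markov chain tree theorem, which expresses $\mu(v)$ as $\frac{\sum_{T\in\?T_v} w(T)}{\sum_{v'}\sum_{T\in\?T_{v'}} w(T)}$, where $\?T_v$ is the set of spanning trees of the chain's graph directed toward $v$ and $w(T)$ is the product of edge probabilities along $T$; clearing the common denominator of all edge probabilities (which has bit-size polynomial in $|\sys{G}|$, raised to the power $|L|-1 \le |V|$) turns every $w(T)$ into an integer of bit-size polynomial in $|\sys{G}|$, and there are at most $|V|^{|V|}=2^{\mathit{poly}(|\sys{G}|)}$ spanning trees, so numerator and denominator are integers of bit-size polynomial in $|\sys{G}|$. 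A mild care point is that a leaf component $L$ of $\sys{G}[\sigma,\tau]$ is indeed a bottom SCC of a genuine finite Markov chain whose transition probabilities are exactly those of $\prob$ at probabilistic states (and Dirac elsewhere), so the tree-theorem analysis applies verbatim; I would state this observation explicitly before invoking the bound.
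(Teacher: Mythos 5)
Your proof is correct, but it takes a genuinely different route from the paper. You bound $p$ by expressing the mean-payoff of the leaf component $L$ (a bottom SCC of the finite Markov chain $\sys{G}[\sigma,\tau]$) through its stationary distribution, and then control the bit-size of the stationary probabilities via the Markov chain tree theorem (equivalently, Cramer's rule on the balance equations); combined with the integer reward bound this gives $p\ge 2^{-\mathit{poly}(|\sys{G}|)}$ and a uniform $f$ independent of $\tau,E,\sigma,L$. The paper instead writes the standard gain-bias linear program for $L$ (one gain variable $g$ and one bias variable per state) and invokes Schrijver's bound (Corollary~10.2a of the cited reference) on the size of optimal LP solutions to conclude that $p$, and hence $2/p$, has polynomial bit-size; uniformity follows the same way, since the bound depends only on the number of variables and coefficient sizes. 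Your argument is arguably more elementary and self-contained for this particular lemma, since no polyhedral machinery is needed beyond determinant bounds on integer matrices. What the paper's choice buys is reuse: the gain-bias formalism and the resulting polynomial bound on the bias variables $b_{\tau,u}$ are needed again in the trade-in construction $\sys{G}_1\to\sys{G}_2$, so setting up the LP here amortizes that work. One presentational point: you should state explicitly (as you note at the end) that $L$ is a bottom SCC of a genuine finite Markov chain whose probabilities are exactly those of $\prob$ at random states and Dirac elsewhere, so the stationary-distribution formula for the mean-payoff applies; with that observation in place the argument is complete.
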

\begin{proof}
    For any fixed MD strategies $\sigma$ and $\tau$, we can write a linear program for the so-called {\em gain-bias relations}%
    \footnote{The terms `gain' from the `gain-bias relations' and our `$\Gain$ objective' are unrelated established terms.} in $L$, which
    is a standard way to solve MDPs with a mean-payoff objective (see, e.g., \cite[Theorem 8.2.6(a), p. 343]{P1994}).
    In any solution, the gain of a state equals its mean-payoff value while, broadly speaking,
    the bias compensates for the fluctuation of the payoff, where the gain is only the expected longterm average.
    
    Notice that for a fixed $L$, we only need a single gain variable $g$, because all nodes in a leaf component have the same mean-payoff. 
    For each node $u \in L$, we introduce  
    a bias variable, $b_u$. 
    
    The constraints of the gain-bias linear program for $L$ are:
    \begin{align*}
    b_u &= b_{\tau(u)} + \cost(u,\tau(u)) - g &\text{ for all } u \in \VMin \cap L\\
    b_u &= b_{\sigma(u)} + \cost(u,\sigma(u)) - g &\text{ for all } u \in \VMax \cap L\\
    b_u &= \sum_{(u,v) \in E} \prob(u,v) (b_v + \cost(u,v)) - g &\text{ for all } u \in \VRan \cap L
    \end{align*}
    and its objective is {\it Maximize} $g$.
    
    It follows from the proof of Corollory 10.2a in \cite{schrijver1998theory} that the size of an optimal finite solution to such this linear program is at most $4m^2(m+1)(S+1)$, where $m$ is the number of variables and $S$ is the maximum size of any coefficient used. In our case we can easily estimate that $m \leq |V|+1$ and
    $S \leq |\sys{G}|$, so the optimal solution, $p$, is of size polynomial in $|\sys{G}|$. And, since $p > 0$, the same holds for $2/p$.
    
    Note that the loose upper bound given above on the size of $2/p$ does not really depend on $\tau$, $\sigma$, $E$ nor $L$, so if we take the maximum of the size of $2/p$ over all possible $\tau$, $\sigma$, $E$ and $L$, we would still get the same upper bound.
\end{proof}

Such an $f$ will serve as our sufficiently large (yet sufficiently small) blow-up factor: $\sys{G}_1$ is obtained from $\sys{G}$ by changing the reward function to $\cost_1(e) = f \cdot \cost(e)$ for all $e \in E$, i.e., by multiplying all rewards by $f$.
We are now finally ready to prove \cref{lem:mean-payoff-2}.

\begin{proof}[Proof of \cref{lem:mean-payoff-2}]
    The existence of an FD strategy $\sigma'$ that achieves $\Prob[\sys{G},\sigma',\tau]{s}(\MP{>0} \cap \Parity) = 1$ follows from \cite{CD2011}. Moreover, $\sigma$ achieves the same mean-payoff, denoted by $p'$, as the original almost-sure winning strategy $\sigma$.
    By \cref{lem:np-posMP}, 
    the mean payoff of $\sigma'$ in $\sys{G}_1$ is $\ge f \cdot p' > 2$.
\end{proof}

\begin{figure}
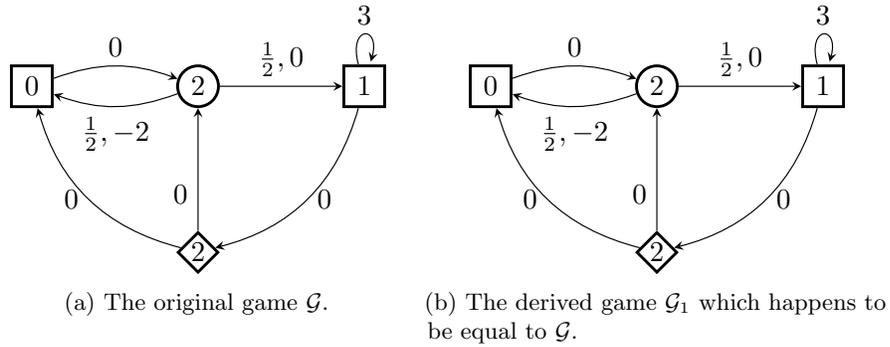

    \begin{subfigure}[t]{0.5\textwidth}
        \centering
        \includegraphics[width=5cm]{Figures/np-example-G.pdf}
        \caption{The original game $\sys{G}$.}
        \label{fig:np1:a}
    \end{subfigure}\hfill
    \begin{subfigure}[t]{0.5\textwidth}
        \centering
        \includegraphics[width=5cm]{Figures/np-example-G.pdf}
        \caption{The derived game $\sys{G}_1$ which happens to be equal to $\sys{G}$.}
        \label{fig:np1:b}
    \end{subfigure}
    \caption{An example game $\sys{G}$ (left) and its example derived game $\sys{G}_1$.
        \label{fig:np1}
    }
\end{figure}

\begin{example}[running example]
    \label{ex:gainNPExample}
    Consider the game $\sys{G}$ in \cref{fig:np1} (left). Maximizer can almost-surely guarantee the $\Gain$ condition.
    The strategy that always loops in the right-most state ensures a mean-payoff of $3$. As this is the only MD strategy for maximizer that ensures a positive mean-payoff, picking any factor $f>\frac{2}{3}$ is  sufficient. In particular we can pick $f=1$ which results in $\sys{G}_1=\sys{G}$.
\end{example}

\subsubsection{Trade-in Construction ($\sys{G}_1 \to \sys{G}_2)$}
\rule[-10pt]{0pt}{10pt}\\
\label{app:g1-to-g2}
\label{sec:gain.NP.tradein}
We are now going to modify the game $\sys{G}_1$ into the game $\sys{G}_2$,
where maximizer can sacrifice part of the reward he would normally get while visiting a probabilistic node in exchange for rebalancing the values of these rewards.

During the construction of 
$\sys{G}_2$ we are going to fix an
optimal MD strategy, $\tau^*$, for minimizer in 
$\sys{G}_1$.
Game $\sys{G}_2$ will be the same no matter which optimal strategy is picked as $\tau^*$.

We start the construction of $\sys{G}_2$ with identifying the union, $U$, of all 
\awesome\ end-components of $\sys{G}_1[\tau^*]$, for which there is no maximizer strategy that ensures $\MP{>0} \cap \Parity$.
Condition (2) of \awesome{}ness has to hold instead, i.e., there are MD maximizer strategies that a.s.\ satisfy storage and parity, and note that then the mean-payoff has to be 0.
We can compose all these strategies into 
a single winning maximizer MD strategy $\sigma$ for all states in $U$.
We now collapse all states in $U$ into a single  \awesome state $u_a$ with an even priority, and a self-loop with payoff $3$, resulting in the SSG $\sys{G}_U$.
Now, if the maximizer can a.s.\ reach $U$ in $\sys{G}_1[\tau^*]$, then he can enforce $\MP{> 2} \cap \Parity$ in $\sys{G}_U[\tau^*]$.
All the remaining \awesome end-components in $\sys{G}_U$ satisfy $\MP{>0} \cap \Parity$ and so 
$\MP{>2} \cap \Parity$ due to \cref{lem:mean-payoff-2}.

We therefore fix a winning maximizer MD strategy $\sigma$ for $\MP{>2}$ and 
for each MD strategy $\tau$  write a linear program, consisting of
the gain-bias \emph{in}equations for gain of at least $2$ in $\sys{G}_U[\sigma,\tau]$,
and
forcing all biases to be non-negative and of polynomial size. This is a straight-forward adaptation of the gain-bias relations for solving mean-payoff MDPs (see, e.g., \cite[Theorem 8.2.6(a), p. 343]{P1994})
In particular, we have
\begin{align*}
b_{\tau,u} &< b_{\tau,\tau(u)} + \cost_1(u,\tau(u)) - 2 &\text{ for all } u \in \VMin \setminus U\\
b_{\tau,u} &< b_{\tau,\sigma(u)} + \cost_1(u,\sigma(u)) - 2 &\text{ for all } u \in \VMax \setminus U\\
b_{\tau,u} &< \sum_{(u,v) \in E} \prob(u,v) (b_{\tau,v} + \cost_1(u,v)) - 2 &\text{ for all } u \in \VRan \setminus U\\
b_{\tau,u} &\geq 0 &\text{ for all } u \in \{u_a\} \cup V \setminus U
\end{align*}
and we pick as the objective 
$$ \textit{Minimize} \sum_{u \in \{u_a\} \cup V \setminus U} b_{\tau,u}$$

It follows from the proof of Corollory 10.2a in \cite{schrijver1998theory} that the size of an optimal finite solution to such a linear program is at most $4m^2(m+1)(S+1)$, where $m$ is the number of variables and $S$ is the maximum size of any coefficient used. In our case $m \leq |V|+1$ and
$S \leq |\sys{G}|$, so the size of any $b_{\tau,u}$ in an optimal finite solution to such a linear program is of size polynomial in $|\sys{G}_1|$. Note that this loose upper bound, $B$, does not depend on $\tau$, $\sigma$ nor $U$.
    
We now build the SSG $\sys{G}_2=(V_2,E_2,\prob_2)$, where $\sys{G}_2\supseteq \sys{G}_U$, and the associated reward function $\cost_2$ that we derive from $\sys{G}_U$ by allowing maximizer to redistribute the rewards of random edges.
More precisely, let
$s$ be a random state with two outgoing edges $(s,t_1), (s,t_2) \in E$ and a unique predecessor $p\in \VMax{}$.
Then, for every MD minimizer strategy $\tau$, 
$\sys{G}_2$
 contain an extra random state $s_\tau$ and edges $(p,s_\tau),(s_\tau,t_1),(s_\tau,t_2)$---with the same probabilities $p_1$ and $p_2$ for taking $(s_\tau,t_1)$ and $(s_\tau,t_2)$ as for taking $(s,t_1)$ and $(s,t_2)$, respectively---and rewards
$\cost_2{(p,s_\tau)}\eqdef \cost_1{(p,s)}$,
$\cost_2{(s_\tau,t_1)} \eqdef \lfloor 1+b_{\tau,s}-b_{\tau,t_1} \rfloor$ and
$\cost_2{(s_\tau,t_2)} \eqdef \lfloor 1+b_{\tau,s}-b_{\tau,t_2} \rfloor$.
See \cref{fig:np.tradein} for an example.
Notice that, due to the inequalities defining the biases $b_{\tau,u}$, we have $p_1\cost_2{(s_\tau,t_1)} + p_2\cost_2{(s_\tau,t_1)} + 1 < p_1 \cost_1(s,t_1) + p_2 \cost_1(s,t_2)$, so maximizer sacrifices expected reward of at least $1$ at $s$.

\begin{figure}[t]
    \centering
    \includegraphics[width=0.8\linewidth]{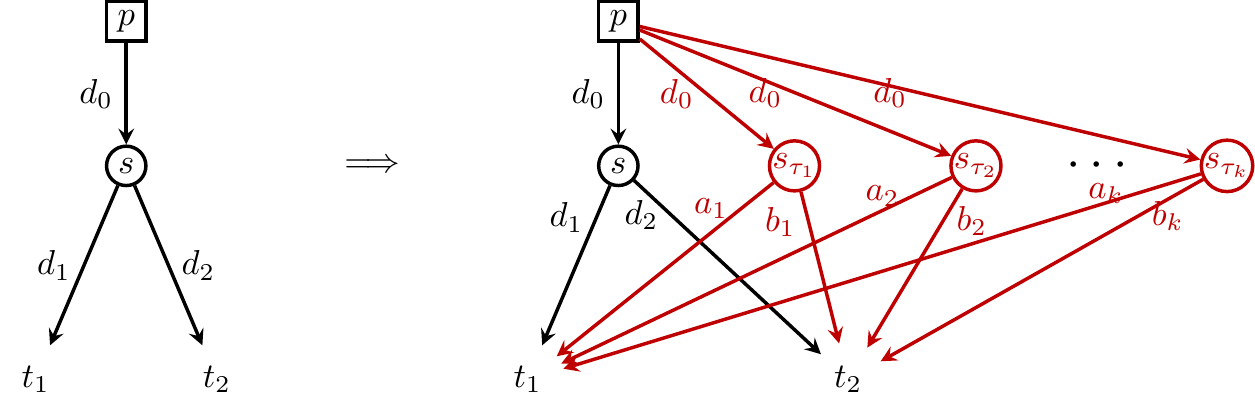}
    \caption{The reduction from $\sys{G}_1$ to $\sys{G}_2$, in which maximizer can choose to rebalance the rewards of edges out of probabilistic states at the cost of a reduced expected payoff where $a_i=\lfloor 1+b_{\tau_i,s}-b_{\tau_i,t_1} \rfloor$, $b_i=\lfloor 1+b_{\tau_i,s}-b_{\tau_i,t_2} \rfloor$ and $k$ is the number of MD minimizer strategies.}%
    \label{fig:np.tradein}
\end{figure}

This extended arena has the following property for every state $u\in V_2$ of $\sys{G}_2$.
\begin{lemma}
    \label{lem:reduction-gain-to-storage}
 Let $\tau$ be an MD minimizer strategy and $u$ a state. Then $u\in \AS[{\sys{G}_1[\tau]}]{\Gain}$ if, and only if, $u\in \bigcup_{k\ge 0}\AS[{\sys{G}_2[\tau]}]{\ES{k}\cap\Parity}$.
\end{lemma}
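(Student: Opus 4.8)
\textbf{Proof plan for \cref{lem:reduction-gain-to-storage}.}
The plan is to prove the two directions of the equivalence separately, in both cases by translating strategies between the two MDPs $\sys{G}_1[\tau]$ and $\sys{G}_2[\tau]$ and tracking how the energy level evolves. Fix an MD minimizer strategy $\tau$ and a state $u$. Recall that $\sys{G}_2$ is built on top of $\sys{G}_U$, which already collapses the union $U$ of problematic \awesome{} end-components into a single \awesome{} state $u_a$ carrying an even priority and a profitable self-loop, and that the remaining \awesome{} end-components in $\sys{G}_U$ satisfy $\MP{>2}\cap\Parity$ by \cref{lem:mean-payoff-2,lem:np-posMP}. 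The key structural fact is that the only difference between $\sys{G}_1[\tau]$ (lifted to $\sys{G}_U$) and $\sys{G}_2[\tau]$ is that, before each visit to a probabilistic state $s$ with predecessor $p\in\VMax$, the maximizer in $\sys{G}_2$ additionally has the option of routing through the trade-in copy $s_\tau$, whose outgoing rewards $\cost_2(s_\tau,t_i)=\lfloor 1+b_{\tau,s}-b_{\tau,t_i}\rfloor$ are obtained by shifting according to the bias solution of the gain-bias linear program for $\MP{>2}$ in $\sys{G}_U[\sigma,\tau]$, at a cost of at least one unit of expected reward at $s$.

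For the direction ``$u\in\AS[{\sys{G}_1[\tau]}]{\Gain}$ $\Rightarrow$ $u\in\bigcup_{k}\AS[{\sys{G}_2[\tau]}]{\ES{k}\cap\Parity}$'': first observe that any end-component witnessing a.s.\ $\Gain$ is \awesome{} (as noted just after the definition of \awesome{}), so by the construction of $\sys{G}_U$ it suffices for the maximizer in $\sys{G}_U[\tau]$ to a.s.\ reach, with an MD strategy, either $u_a$ or an \awesome{} end-component in which he can secure $\MP{>2}\cap\Parity$; in the former case we are immediately done since $u_a$ is \safe{}. In the latter case, fix the MD maximizer strategy $\sigma$ witnessing $\MP{>2}$ there, and let the maximizer in $\sys{G}_2[\tau]$ mimic $\sigma$ but always route through the trade-in copy $s_\tau$ at every probabilistic state. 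The point of the bias shift is telescoping: along any cycle in the resulting Markov chain the accumulated $\sum(1+b_{\tau,s}-b_{\tau,t})$-type contributions telescope, so the energy level in $\sys{G}_2[\tau]$ under this strategy never drops more than a bounded amount (bounded by the polynomial bound $B$ on the biases, plus rounding slack, times the number of states) — this is exactly the $l$-storage condition for some $l$ polynomial in $|\sys{G}_1|$. Since $\sigma$ preserves the parity condition, we get $\ES{k}\cap\Parity$ almost surely for suitable $k$. I would make the telescoping precise by noting that each step $u\to v$ contributes reward $\ge b_{\tau,u}-b_{\tau,v}$ up to an additive constant, so any infix from $u$ to $v$ accumulates $\ge b_{\tau,u}-b_{\tau,v} - O(\text{length})$... actually cleaner: each step contributes $1+b_{\tau,s}-b_{\tau,t}-(\text{rounding})\ge b_{\tau,s}-b_{\tau,t}$, so any infix sums to $\ge b_{\tau,u_{\text{start}}}-b_{\tau,u_{\text{end}}}\ge -B$, giving $l\le B$.

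For the converse ``$u\in\bigcup_{k}\AS[{\sys{G}_2[\tau]}]{\ES{k}\cap\Parity}$ $\Rightarrow$ $u\in\AS[{\sys{G}_1[\tau]}]{\Gain}$'': take an a.s.\ winning strategy for $\ES{k}\cap\Parity$ in $\sys{G}_2[\tau]$ and project it back to $\sys{G}_1[\tau]$ by replacing each detour through a trade-in copy $s_\tau$ by the original transition through $s$. Since the trade-in edges have strictly smaller expected reward than the original ones (by at least $1$, as noted after the definition of $\sys{G}_2$), the energy level in $\sys{G}_1$ dominates, pointwise in expectation, the one in $\sys{G}_2$; more carefully, since the $l$-storage condition holds in $\sys{G}_2$ with probability $1$ and the original transitions are never worse on expectation, the resulting strategy keeps $\liminf$ of the energy above $-\infty$ almost surely (using the finiteness of the state space and a standard argument that a.s.\ bounded-below behaviour is preserved under this kind of reward improvement, e.g.\ via the second Borel–Cantelli lemma or a supermartingale argument). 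Parity is preserved verbatim since the trade-in gadget does not change visited priorities modulo the collapsed \awesome{} state, which carries an even priority and lies in a \safe{}/\positive{} component. Hence a.s.\ $\LimInf{>-\infty}\cap\Parity=\Gain$ holds in $\sys{G}_1[\tau]$ from $u$.

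The main obstacle is the forward direction and specifically making the telescoping/storage bound fully rigorous: one must handle the floor functions in $\cost_2(s_\tau,t_i)$ (which introduce a bounded but nonzero error per step), verify that the $l$-storage bound that comes out is genuinely polynomial in $|\sys{G}_1|$ (it is, since it is governed by the polynomial bound $B$ on the bias variables from the Schrijver estimate and the accumulated rounding is at most one per state along an acyclic segment), and correctly treat the collapsed state $u_a$ — i.e.\ argue that whenever the $\sys{G}_2$-play enters $u_a$ one can expand it back, inside $U$, using the precomputed MD maximizer strategy $\sigma$ for storage-parity on $U$, without violating the global storage bound (this is where we crucially use that all components collapsed into $u_a$ have mean-payoff $0$ and admit MD strategies a.s.\ winning for storage-and-parity). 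Once these bookkeeping points are settled the rest is routine.
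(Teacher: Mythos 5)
Your forward direction is essentially the paper's argument: force the play through the trade-in copies $s_\tau$, use $\cost_2(s_\tau,t_i)=\lfloor 1+b_{\tau,s}-b_{\tau,t_i}\rfloor > b_{\tau,s}-b_{\tau,t_i}$ together with the gain-bias inequalities on the maximizer/minimizer edges to telescope the accumulated reward of any infix $s_l\ldots s_h$ down to at least $b_{\tau,s_l}-b_{\tau,s_h}\ge -B$, and treat $U$/$u_a$ by switching to the precomputed MD storage-parity strategy. The one thing you leave open there is the transient prefix: the storage condition constrains \emph{every} infix of the run, so ``a.s.\ reach a suitable end-component and then telescope inside it'' does not bound the energy drop before that end-component is entered. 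The paper applies the telescoping globally along the entire play outside $U$ (the bias inequalities hold on all edges used by the fixed strategies, not only inside end-components), and you need to do the same.

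The genuine gap is in your converse direction. You replace each trade-in detour by the original edge and argue that, because the original edges are never worse \emph{in expectation} (better by at least $1$ per trade-in visit), the a.s.\ storage bound in $\sys{G}_2[\tau]$ transfers to a.s.\ $\LimInf{>-\infty}$ in $\sys{G}_1[\tau]$. Expectation domination alone does not give this: a driftless random walk dominates the constant-zero process in expectation at every time yet has $\liminf=-\infty$ almost surely, so ``a.s.\ bounded-below behaviour is preserved under reward improvement in expectation'' is false as a general principle, and the second Borel--Cantelli lemma is not the tool that closes it. What rescues your argument is the specific structure: the difference process $D_n$ (accumulated reward in $\sys{G}_1$ minus that in $\sys{G}_2$ along coupled runs) changes only at trade-in replacements, has bounded increments, and has conditional drift at least $1$ at each such step; by the strong law for bounded martingale differences it either stabilizes or tends to $+\infty$, hence $\liminf_n D_n>-\infty$ a.s., which together with $\EN{k}$ holding in $\sys{G}_2[\tau]$ yields $\LimInf{>-\infty}$. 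You must make this step explicit — as written, the inference is precisely of the kind that can fail. For comparison, the paper's proof of this direction avoids probability altogether and argues by a direct per-run comparison that dropping the trade-ins can only raise the energy level, which is a much shorter (pointwise rather than stochastic) route; you should either adopt that comparison or spell out the martingale argument above.
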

\begin{proof}

\emph{$(\impliedby)$}. %
Pick $k$ such that $u \in \AS[{\sys{G}_2[\tau]}]{\ES{k}\cap\Parity}$ holds, and  
let $\sigma$ be an a.s.\ winning FD strategy for the maximizer.
Now in $\sys{G}_1$, we simply follow $\sigma$, but 
whenever $\sigma$ picks a trade-in edge to $s_{\tau}$, we pick the original edge to $s$ instead. 
Notice that such a strategy ensures parity and the energy level at any point can only increase.
If such a strategy reaches a node in $U$ then it switches to an optimal strategy for
$\ES{k'}\cap\Parity$, where $k'$ is the minimum energy for which $\ES{k'}\cap\Parity$ holds for all states in $U$.
It is easy to see that while using such a strategy the energy can never drop more than $k+k'$, so it has to satisfy $\Gain$ a.s.

\emph{$(\implies)$}. %
First of all, note that due to the definition of biases 
$b_{\tau,u}$ we have that $\cost_2(u,u') > b_{\tau, u} - b_{\tau, u'} + 2$ for $u \in \VMax \cup \VMin$,
and $\cost_2(u,u') > b_{\tau, u} - b_{\tau, u'}$ for $u \in \{s_\tau | s \in \VRan\}$, because $\lfloor x + 1 \rfloor > x$ for all $x$.

Now pick any a.s.\ winning $\sigma$ for $\Gain$ in $\sys{G}_1[\tau]$.
Let $\sigma'$ be $\sigma$ that always picks trade-ins $s_\tau$ when possible.
Such a strategy still satisfies parity a.s.
Consider any play $\rho = s_0 e_0 s_1 e_1 s_2 e_2\ldots$ of $\sigma'$.
If $\rho$ reaches a state in $U$ then we switch at that point to an optimal strategy for $\ES{k'}\cap\Parity$ as defined above.
Otherwise, we have that for any infix $s_l e_l \ldots e_{h-1} s_h$ of $\rho$, the change in the energy level
is $\sum_{i=l}^{h-1} \cost_2(s_i, s_{i+1}) > \sum_{i=l}^{h-1} (b_{\tau, s_{i}} - b_{\tau, s_{i+1}}) = 
b_{\tau, s_{h}} - b_{\tau, s_{l}} \geq -B$.
This shows that $\ES{B+k'}\cap\Parity$ is satisfied a.s.\ by such a strategy.
\end{proof}

Using the existence of MD optimal minimizer strategies for their respective objectives in both games, we get the following.

\begin{restatable}{corollary}{NPGainSp}
    \label{lem:np:gain-sp}
$u\in \AS[{\sys{G}}]{\Gain} \iff u\in \bigcup_{k\ge 0}\AS[{\sys{G}_2}]{\ES{k}\cap\Parity}$.
\end{restatable}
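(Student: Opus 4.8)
The plan is to combine the per-strategy equivalence \cref{lem:reduction-gain-to-storage} with the existence of memoryless-deterministic (MD) optimal minimizer strategies in both games. The workhorse is the following principle: if $\?O$ is shift-invariant (so optimal maximizer strategies exist, \cite{GH2010}) and the minimizer has MD optimal strategies for $\?O$ in a game $\?G$, then $s\in\AS[\?G]{\?O}$ iff $s\in\AS[{\?G[\tau]}]{\?O}$ for \emph{every} MD minimizer strategy $\tau$. (The ``$\supseteq$'' is trivial since an a.s.\ winning strategy beats every $\tau$; for ``$\subseteq$'', if $s\notin\AS[\?G]{\?O}$ then $\Val{s}{\?O}<1$, so against a fixed MD optimal minimizer strategy the value is already $<1$, whence $s\notin\AS[{\?G[\tau]}]{\?O}$.)

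First, scaling all rewards by the positive factor $f$ leaves $\Gain$ unchanged, so $\AS[\sys{G}]{\Gain}=\AS[\sys{G}_1]{\Gain}$; and $\Gain$ is shift-invariant with MD optimal minimizer strategies (\cref{lem:gain:min-MD}), so the principle gives $u\in\AS[\sys{G}_1]{\Gain}$ iff $u\in\AS[{\sys{G}_1[\tau]}]{\Gain}$ for every MD minimizer $\tau$, which by \cref{lem:reduction-gain-to-storage} is equivalent to $u\in\bigcup_k\AS[{\sys{G}_2[\tau]}]{\ES{k}\cap\Parity}$ for every MD minimizer $\tau$. To push this family of MDP statements up to $\sys{G}_2$, I would first note that the objective $\?O\eqdef\bigcup_k\ES{k}\cap\Parity$ --- the parity condition together with the existence of \emph{some} finite storage bound --- is shift-invariant and has a submixing, shift-invariant complement, by the same bookkeeping as in \cref{lem:gain:min-MD} ($\bigcup_k\ES{k}$ is exactly the set of runs with some finite storage bound; this set and its complement are shift-invariant and the complement is submixing). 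Hence \cite[Theorem~5.2]{GK2014} gives the minimizer MD optimal strategies for $\?O$ in $\sys{G}_2$, and the principle applies to $\?O$ too. Since $\ES{k}\cap\Parity\subseteq\?O$, the MDP statements yield $u\in\AS[{\sys{G}_2[\tau]}]{\?O}$ for all MD $\tau$, hence $u\in\AS[\sys{G}_2]{\?O}$; and the converse implications $u\in\bigcup_k\AS[\sys{G}_2]{\ES{k}\cap\Parity}\Rightarrow u\in\AS[\sys{G}_2]{\?O}$ and, running the chain backwards, $u\in\bigcup_k\AS[\sys{G}_2]{\ES{k}\cap\Parity}\Rightarrow u\in\AS[\sys{G}]{\Gain}$, are routine.

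The one genuine gap --- which I expect to be the main obstacle --- is the inclusion $\AS[\sys{G}_2]{\?O}\subseteq\bigcup_k\AS[\sys{G}_2]{\ES{k}\cap\Parity}$: winning $\?O$ only commits to \emph{some} run-dependent storage bound, while $\bigcup_k\AS[\sys{G}_2]{\ES{k}\cap\Parity}$ demands a single $k$, so one must commute ``for every MD $\tau$'' with ``there is $k$''. I would close this by adapting the trade-in argument from the proof of \cref{lem:reduction-gain-to-storage}: forcing maximizer always to take the trade-in edges makes the bias bookkeeping bound every infix drop by the $\tau$-independent constant $B$, and once the play reaches the collapsed awesome component $U$ one switches to a fixed storage-parity strategy there, whose credit is polynomial by \cref{thm:storage-parity-comp}; so a single credit $\hat K$ (namely $B$ plus that polynomial credit on $U$) works uniformly in $\tau$. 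Note that $\ES{k}\cap\Parity$ for a \emph{fixed} $k$ is not shift-invariant, which is precisely why the whole argument must be routed through the unknown-credit objective $\?O$ rather than a fixed-credit one.
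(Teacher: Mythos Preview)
Your approach works, but you have missed the one-line observation that dissolves the ``genuine gap'' you flagged: there are only \emph{finitely many} MD minimizer strategies (the game is finite and minimizer's states and choices are the same in $\sys{G}_1$ and $\sys{G}_2$). Hence, from ``for every MD $\tau$ there is a $k_\tau$ with $u\in\AS[{\sys{G}_2[\tau]}]{\ES{k_\tau}\cap\Parity}$'' you simply set $k^*\eqdef\max_\tau k_\tau$, use upward-closure of $\ES{k}\cap\Parity$ in $k$ to get $u\in\AS[{\sys{G}_2[\tau]}]{\ES{k^*}\cap\Parity}$ for all MD $\tau$, and then invoke that some MD $\tau$ is optimal for minimizer for storage-parity to conclude $u\in\AS[\sys{G}_2]{\ES{k^*}\cap\Parity}$. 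This is exactly the paper's proof; the whole detour through the unknown-credit objective $\?O=\bigcup_k\ES{k}\cap\Parity$ is unnecessary.

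Your route via $\?O$ is not wrong --- the claim that $\overline{\?O}$ is shift-invariant and submixing is correct (if $b,c$ have storage bounds $l_b,l_c$ then any shuffle has storage bound $l_b+l_c$, so the no-storage-bound set is submixing; union with $\overline{\Parity}$ preserves both properties), and it does cleanly give MD optimal minimizer strategies for $\?O$ via \cite{GK2014}. But having reached $u\in\AS[\sys{G}_2]{\?O}$ you then still need to extract a single $k$, and your proposed fix is where the argument becomes muddled. ``Forcing maximizer always to take the trade-in edges'' is ambiguous: $\sys{G}_2$ has one trade-in copy $s_\tau$ \emph{per} MD minimizer strategy, and the telescoping bias estimate $\sum\cost_2 \ge b_{\tau,s_h}-b_{\tau,s_l}\ge -B$ only holds when the trade-in taken matches the minimizer's actual play $\tau$. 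What does work is to notice that the proof of \cref{lem:reduction-gain-to-storage} already yields the $\tau$-independent bound $B+k'$, so that for every MD $\tau$ one has $u\in\AS[{\sys{G}_2[\tau]}]{\ES{B+k'}\cap\Parity}$ --- but that is precisely the finiteness/max argument in disguise, and once you see it the $\?O$ machinery is redundant.
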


\begin{proof}
    First of all, by the way $\sys{G}_1$ is defined, we have $u\in \AS[{\sys{G}}]{\Gain} \iff u\in \AS[{\sys{G}_1}]{\Gain}$.
    
    ($\Rightarrow$)
    For all MD strategies $\tau$ the following has to hold
    $u\in \AS[{\sys{G}_1[\tau]}]{\Gain}$.
    Due to Lemma \ref{lem:reduction-gain-to-storage} we get
    $u\in \bigcup_{k\ge 0}\AS[{\sys{G}_2[\tau]}]{\ES{k}\cap\Parity}$,
    so there exists $k$ such that $u \in \AS[{\sys{G}_2[\tau]}]{\ES{k}\cap\Parity}$.
    As there are only finitely many MD strategies, we let $k^*$ be the maximum value of $k$ corresponding to one of them.
    Note that $u \in \AS[{\sys{G}_2}]{\ES{k^*}\cap\Parity}$ has to hold, because
    $u \in \AS[{\sys{G}_2[\tau]}]{\ES{k^*}\cap\Parity}$ for all MD strategies $\tau$
    (as $\ES{k}\cap\Parity$ objective is upward-closed)
    and one of them has to be an optimal strategy for minimizer.
    
    ($\Leftarrow$) Suppose that 
    $u\not\in \AS[{\sys{G}_1}]{\Gain}$ then
    pick any MD optimal minimizer strategy $\tau$ such that 
    $u\not\in \AS[{\sys{G}_1[\tau]}]{\Gain}$.
    Due to Lemma \ref{lem:reduction-gain-to-storage} we get
    $u\not\in \bigcup_{k\ge 0}\AS[{\sys{G}_2[\tau]}]{\ES{k}\cap\Parity}$;
    a contradiction with the fact that $u\in \bigcup_{k\ge 0}\AS[{\sys{G}_2}]{\ES{k}\cap\Parity}$.
\end{proof}

\begin{figure}
    \begin{subfigure}[t]{0.5\textwidth}
        \centering
        \includegraphics[width=5cm]{Figures/np-example-G.pdf}
        \caption{The game $\sys{G}_1$}
        \label{fig:np2:a}
    \end{subfigure}\hfill
    \begin{subfigure}[t]{0.5\textwidth}
        \centering
        \includegraphics[width=6.5cm]{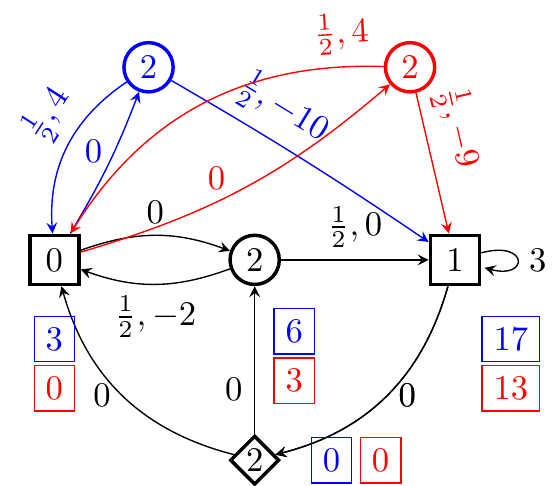}
        \caption{The derived game $\sys{G}_2$}
        \label{fig:np2:b}
    \end{subfigure}
    \caption{    \label{fig:np2}
        An example game $\sys{G}_1$ (left) and its example derived game $\sys{G}_2$.
        }
\end{figure}

\begin{example}[continuation of \cref{ex:gainNPExample}]
    \label{ex:gainNPExample2}
Consider the game $\sys{G}_1$ in \cref{fig:np2} (left).
In its derived game $\sys{G}_2$ there are as many trade-in options for the random state as there are MD minimizer's strategies (just two in this example).
The blue one (top left) corresponds to minimizer going left and the red one (top right) to going up.
Example biases that satisfy the inequalities presented in Section \ref{sec:gain.NP.tradein} are drawn next to the nodes inside colored boxes. They results in the rewards $4$ and $-10$ for the blue trade-in and $4$ and $-9$ for the red one.
\end{example}

\subsubsection{Concise Witnesses Construction ($\sys{G}_2 \to \sys{G}_3$)}
\label{app:g2-to-g3}
\rule[-10pt]{0pt}{10pt}\\
The final step is to show that we can clean up $\sys{G}_2$ by removing all but a small number of the new trade-in options for maximizer when entering a random state, preserving the fact that maximizer wins the $\ES{\cdot}\cap\Parity$ objective. Formally this whole subsection is dedicated to a proof of the following crucial lemma.

\begin{restatable}{lemma}{thmNPConcise}
    \label{lem:np:concise}
There exists a game $\sys{G}_3\supseteq\sys{G}_1$ that results from $\sys{G}_2$ by keeping, for any random state,
    at most twice the number of states in $\sys{G}_1$ trade-in options, and such that for any state $s\in V$
    maximizer wins the almost-sure $k$-storage-parity game in $\sys{G}_3$ iff he does in $\sys{G}_2$.
\end{restatable}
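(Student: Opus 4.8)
The plan is to show that, although $\sys{G}_2$ adds one trade-in gadget $s_\tau$ per MD minimizer strategy $\tau$ at each random state $s$ (exponentially many in general), only polynomially many of them are ever needed to preserve the almost-sure storage-parity winning region, and moreover a single uniform bound on both the number of retained gadgets and on $k$ works. The key observation is that the trade-in gadget $s_\tau$ at $s$ with edges to $t_1,t_2$ is fully described by the pair of integer rewards $(\cost_2(s_\tau,t_1),\cost_2(s_\tau,t_2)) = (\lfloor 1+b_{\tau,s}-b_{\tau,t_1}\rfloor,\lfloor 1+b_{\tau,s}-b_{\tau,t_2}\rfloor)$, and by the polynomial size bound $B$ on the biases (from Corollary~10.2a of \cite{schrijver1998theory}, as used in \cref{sec:gain.NP.tradein}) each of these two integers lies in an interval of length polynomial in $|\sys{G}_1|$. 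Hence, across all $\tau$, there are only polynomially many \emph{distinct} trade-in gadgets at each random state (at most the square of a polynomial, but one can do better as noted below), so we may simply discard duplicate gadgets. To get the bound ``at most twice the number of states in $\sys{G}_1$'', one argues that what matters is a \emph{Pareto-optimal} selection: for a fixed target $t_i$, only the gadget offering the largest reward on that edge among those that are not strictly dominated on the other edge need be kept, and a short combinatorial argument shows this leaves at most two relevant options per random state per choice of ``which edge to favour'', i.e.\ a number bounded by twice $|V_1|$.

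First I would make precise the claim that retaining a dominating subset of gadgets does not change the almost-sure $k$-storage-parity winner. For the ``$\sys{G}_3$ wins $\Rightarrow$ $\sys{G}_2$ wins'' direction this is immediate since $\sys{G}_3$'s gadgets are a subset of $\sys{G}_2$'s. For the converse, take an a.s.\ winning FD strategy $\sigma$ for $\ES{k}\cap\Parity$ in $\sys{G}_2$; whenever $\sigma$ routes through a gadget $s_\tau$ that was removed, reroute through a retained gadget $s_{\tau'}$ whose reward on each of the two edges is at least that of $s_\tau$. Such a $\tau'$ exists by the Pareto-optimality of the retained set (or, more crudely, because we kept a gadget dominating $s_\tau$ coordinatewise). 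Since both edge-rewards only weakly increase, every energy level along every run weakly increases, so the $\ES{k}$ (in fact $\ES{k,l}$) condition is preserved, and since the underlying state structure — priorities, probabilistic branching probabilities — is untouched, the parity condition and the a.s.\ guarantee are preserved. This yields an a.s.\ winning strategy for $\ES{k}\cap\Parity$ in $\sys{G}_3$, possibly for a slightly larger but still fixed $k$, which suffices since $\ES{k}\cap\Parity$ is upward-closed in $k$.

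The main obstacle, and where care is needed, is the \emph{uniformity} of the retained gadget set: the gadget rewards depend on the biases $b_{\tau,\cdot}$, which are computed relative to a fixed optimal minimizer strategy $\tau^*$ and a fixed maximizer strategy $\sigma$ achieving $\MP{>2}$ in $\sys{G}_U$, and one must ensure that the dominating subset can be selected \emph{without} knowing which $\tau$ is minimizer-optimal (indeed the whole point is that $\sys{G}_3$ is guessed by the $\NP$ algorithm). The resolution is that domination is a purely syntactic property of the two integer rewards, so one can enumerate all achievable reward pairs at each random state — there are polynomially many, by the bound $B$ — compute the Pareto frontier, and keep one gadget per frontier point; this selection is independent of $\tau^*$ and of which gadget is ``intended''. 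I would also double-check the bookkeeping that reordering/rerouting through a retained gadget interacts correctly with the collapsed \awesome\ state $u_a$ and the switch to the $\ES{k'}\cap\Parity$ strategy inside $U$ (as in the proof of \cref{lem:reduction-gain-to-storage}); since $U$ and its internal strategy are unaffected by which trade-in gadgets survive outside $U$, this adds only a fixed constant $k'$ to the storage bound, exactly as before. Finally, $\sys{G}_3\supseteq\sys{G}_1$ holds because the original (non-trade-in) edges of $\sys{G}_1$ are always retained, and the size bound ``at most twice $|V_1|$ trade-in options per random state'' follows from the Pareto-frontier count above.
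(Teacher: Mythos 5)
Your reduction to a coordinate-wise dominance argument is sound as far as it goes: if a retained gadget $s_{\tau'}$ has weakly larger reward than $s_\tau$ on both outgoing edges, then rerouting a run through $s_{\tau'}$ instead of $s_\tau$ only increases every partial sum, so $\ES{k,l}\cap\Parity$ is preserved; and the converse direction is immediate because $\sys{G}_3$ offers a subset of the choices available in $\sys{G}_2$. The genuine gap is the size bound. The set of gadgets that are not coordinate-wise dominated is a two-dimensional Pareto frontier, and there is no ``short combinatorial argument'' bounding such a frontier by two points per random state, nor by twice the number of states of $\sys{G}_1$: an antichain of reward pairs $(a_1,b_1),(a_2,b_2),\dots$ with $a_1>a_2>\dots$ and $b_1<b_2<\dots$ has as many pairwise-incomparable elements as there are distinct coordinate values. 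Keeping only the gadget maximizing the reward towards $t_1$ and the one maximizing the reward towards $t_2$ does not dominate a gadget that is second-best on both edges, so your rerouting step fails for it. Worse, even a polynomial bound does not follow from counting distinct reward pairs: the biases $b_{\tau,u}$ are only guaranteed to have polynomial \emph{bit-size}, hence magnitude exponential in the number of states, so the rewards $\lfloor 1+b_{\tau,s}-b_{\tau,t_i}\rfloor$ can take exponentially many distinct values and the Pareto frontier can be exponentially large. Thus the argument yields neither the stated bound of twice the number of states of $\sys{G}_1$ per random state, nor any bound sufficient for the $\NP$ procedure of \cref{thm:as-gain-np}.

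The paper obtains the bound by an entirely different, game-structural route: it recursively decomposes the winning region using \cref{lem:winningTrap} and \cref{lem:NP:DnC}, splitting $\sys{H}$ into a trap $S_t$ (or the set of states with priority below the least odd priority), its maximizer attractor, and the remaining subgame $\sys{H}'$. At each recursion level the combined winning strategy only ever chooses between $\sigma_{\min}$ and an attractor strategy, and for the attractor strategy the choice of trade-in gadget is irrelevant because all gadgets at a given random state have the same successor distribution; hence each level adds at most one new trade-in option per state. Since $S_t$ is non-empty at every step, the recursion depth is bounded by the number of states of $\sys{G}_1$, which is where the factor of two times that number comes from. To rescue your approach you would need either to prove a bound on the number of pairwise-incomparable reward pairs actually realizable by the bias LPs (not attempted, and not apparent), or to fall back on a strategy-decomposition argument of the paper's kind.
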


Most of the properties in this subsection hold for an arbitrary energy-parity game, so we will use $\sys{H}$ instead of $\sys{G}_2$ in order to avoid the use of double subscripts. 

The main idea of the proof of \cref{lem:np:concise} is to use the monotonicity of the 
$\ES{k}\cap\Parity$ objective with respect to the initial energy level $k$.
If maximizer a.s.~wins $\ES{\cdot}\cap\Parity$ from state $p$
then there is a least $k_p\in\N$ such that (for some $l$),
$\ES{k_p,l}\cap\Parity$ holds a.s.
Fix $l$ large enough to work for all minimal $k_p$ for every state $p$---and for all purposes of the proofs below.

Consider a configuration $(p,k_p)\in\AS{\ES{k_p,l}\cap\Parity}$
where $p$ has newly introduced outgoing edges that allow for trade-ins (it has a random successor node).
Let $\sigma$ be a winning maximiser strategy for this game that depends only on the state and the energy level in the energy store%
\footnote{Recall that such a strategy must exist as, once the store limit $l$ is fixed, the game becomes an ordinary finite parity game.}, and
let $\sigma_{\min}$ denote the maximiser strategy that maps each maximiser state $p$ to the successor that $\sigma$ assigns to $(p,k_p)$.
Note that this strategy is positional, and therefore uses only one possible trade-in option.

We first observe that maximiser can ensure by using this strategy that he can only gain energy distance relative to the minimal energy level of the state (except where the energy is limited by the capacity of his energy store):
For every run $(s_0,k_0),(s_1,k_1),(s_2,k_s),(s_3,k_3),\ldots$
of $\sys{H}$ consistent with $\sigma_{\min}$ and all $i \in \omega$ it holds that $k_{i+1} - k_{s_{i+1}} \geq k_i - k_{s_i}$.
The following lemma is a direct consequence.

\begin{lemma}
    The strategy $\sigma_{\min}$ almost-surely guarantees that 1) the cumulative rewards tend to infinity or 2) the parity condition holds.
    That is, for every minimizer strategy $\tau$ and initial state $s$ of $\sys{H}$ it holds that $\Prob[\sys{H},\sigma_{\min},\tau]{s}(\LimInf{=\infty}\cup\Parity) = 1$.
\end{lemma}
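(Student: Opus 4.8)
\emph{Proof plan.}
Fix a minimizer strategy $\tau$ and an initial state $s$; we may assume $s$ is almost-sure winning for $\ES{\cdot}\cap\Parity$, because if $s_0$ lies in the winning region then so does every state visited under $\sigma_{\min}$ (the successor configuration of a \emph{minimal} winning configuration, under any player's move, is again winning, hence its state component is again winning), so that $\sigma_{\min}$ and all the relevant numbers $k_{s_n}$ stay well defined along the run. We will in fact prove the stronger, purely combinatorial statement that \emph{every} run consistent with $\sigma_{\min}$ and $\tau$ belongs to $\LimInf{=\infty}\cup\Parity$, whence the probability is $1$.

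Let $\rho = s_0 e_0 s_1 e_1\ldots$ be such a run and set $E_n \eqdef \sum_{i<n}\cost(e_i)$. The engine of the proof is a sharpening of the observation preceding the lemma: because $\sigma_{\min}$ always plays the move that the winning strategy $\sigma$ prescribes at the \emph{minimal} winning energy, and because $\sigma$ is winning from $(s_n,k_{s_n})$ against every opponent, minimality of $k_{s_{n+1}}$ forces $\cost(e_n)\ge k_{s_{n+1}}-k_{s_n}$ at maximizer, minimizer \emph{and} random states alike --- an inequality that remains valid even when the energy store saturates (if $k_{s_n}+\cost(e_n)$ exceeds the store capacity $C$ then $\cost(e_n)>C-k_{s_n}\ge k_{s_{n+1}}-k_{s_n}$). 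Consequently the integer sequence $d_n \eqdef E_n - k_{s_n}$ is non-decreasing.

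Now distinguish two cases. If $\liminf_n E_n = \infty$ then $\rho\in\LimInf{=\infty}$, as desired. Otherwise $\liminf_n E_n < \infty$; since there are only finitely many states and each $k_{s_n}$ is a fixed finite number, $(d_n)_n$ is bounded above, and being non-decreasing it is eventually constant, say $d_n = d_\infty$ for all $n \ge N$. For those $n$ the equality $d_{n+1}=d_n$ forces $\cost(e_n)=k_{s_{n+1}}-k_{s_n}$; hence the play in the energy-store game on which $\sigma$ operates that starts in the winning configuration $(s_N,k_{s_N})$ and copies the edges $e_N,e_{N+1},\dots$ of $\rho$ passes through exactly the configurations $(s_N,k_{s_N}),(s_{N+1},k_{s_{N+1}}),(s_{N+2},k_{s_{N+2}}),\dots$, and at every maximizer state $s_n$ the copied edge equals $\sigma_{\min}(s_n)=\sigma(s_n,k_{s_n})$, so this play is consistent with $\sigma$. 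As $\sigma$ witnesses $\ES{k_{s_N},l}\cap\Parity$, this play satisfies the parity condition, and since its sequence of states is precisely the tail $s_N s_{N+1}\dots$ of $\rho$, and parity depends only on the states visited infinitely often, $\rho\in\Parity$.

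\emph{Main obstacle.} The subtle case is the second one: ruling out a run that settles at a bounded reward level while still violating parity. The point that makes it work is that, once $(d_n)_n$ stabilises, the tail of $\rho$ is \emph{forced} to track the minimal-energy profile $(k_{s_n})_n$ exactly, which is precisely what transfers the parity guarantee of the energy-aware winning strategy $\sigma$ to the positional strategy $\sigma_{\min}$. The remaining work is bookkeeping: verifying carefully that the inequality $\cost(e_n)\ge k_{s_{n+1}}-k_{s_n}$ survives energy-store saturation, and that from an almost-sure winning start the run never leaves the winning region so that $k_{s_n}$ and $\sigma_{\min}$ are defined throughout.
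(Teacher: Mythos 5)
Your reduction to the monotone quantity $d_n = E_n - k_{s_n}$ and the ensuing case split are sound and match the observation the paper records just before the lemma; the treatment of store saturation is also fine. The problem is the very last step. You claim that because the stabilised tail of $\rho$ is a play consistent with the winning strategy $\sigma$ from the configuration $(s_N,k_{s_N})$, it must satisfy the parity condition. That inference is valid only for \emph{sure} winning; here $\sigma$ is merely \emph{almost-surely} winning, so an individual consistent play may perfectly well violate parity as long as the set of such plays is null. Consequently your stronger claim --- that \emph{every} run consistent with $\sigma_{\min}$ lies in $\LimInf{=\infty}\cup\Parity$ --- is false. For instance, take a maximizer state $a$ of priority $2$ with a single edge of reward $0$ to a random state $r$ of priority $1$, which returns to $a$ or moves to a priority-$0$ sink, each with probability $1/2$ and reward $0$. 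Maximizer wins storage-parity almost surely from $a$ with $k=0$, yet the (null) run $(ar)^\omega$ is consistent with $\sigma_{\min}$, has bounded cumulative reward, and fails parity.

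To close the gap you must turn the pathwise observation into a measure statement: show that the event ``$(d_n)$ stabilises and parity fails'' has probability $0$ under $(\sigma_{\min},\tau)$. This requires transferring the null-ness of the parity-violating plays of $\sigma$ to the measure induced by $\sigma_{\min}$; since the two strategies differ off the minimal-energy profile, the transfer is not automatic (though it can be done by checking that the two measures agree on all cylinders whose prefix stays on the minimal profile, and summing over the countably many stabilisation points). The paper sidesteps this entirely: it argues by contradiction, observes that ``only finitely many true energy gains and parity fails'' is the conjunction of a co-B\"uchi and a parity objective and hence itself a parity objective, so the minimizer would have a \emph{memoryless} strategy $\tau$ achieving it with positive probability; it then inspects a leaf component of the finite Markov chain $\sys{H}[\sigma_{\min},\tau]$, where every transition preserves $d$ and the dominant priority is odd, contradicting that $\sigma$ wins almost surely there from the minimal energy level. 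You should either adopt that route or supply the missing measure-transfer argument.
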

\begin{proof}
    Assume for contradiction that minimizer has a strategy that ensures that runs with a positive probability weight contain (1) only finitely many transitions that lead to a true gain in energy (relative to the minimal energy level) and (2) do not satisfy the parity condition.
    (1) is a co-B\"uchi objective, (2) a parity objective, so (1) and (2) together are a parity objective. Thus, minimizer has a memoryless strategy $\tau$ to obtain this.
    Thus $\sys{H}[\sigma_{\min},\tau]$ has a leaf-component where this holds.
    Thus, $\sys{H}[\sigma,\tau]$ is not winning on the states of this leaf-component on the minimal energy level. (contradiction)
\end{proof}

We call the property $(\LimInf{=\infty}\cup\Parity)$ established by this lemma the \emph{lift or win property}
and will use it for a separation of concerns.
For this, we first show that, when the dominating priority is odd, then the maximizer can win on a smaller set that he can ensure is never left while winning the energy storage condition almost surely.

For a set $S$ of states, we write 
$\atr_i^{\sys{H}}(S)$ 
for the set of states from which player $i\in \{\Max,\Min\}$ (maximizer / minimizer) can force the game to a state in $S$.
In particular, $\atr_\Max^\sys{H}(S) = \AS{\eventually S}$ is the set of states for which maximizer can ensure to almost-surely reach $S$.
We call a set $S$ of states a (minimizer) \emph{trap} if all minimizer states and all random states in $S$ have only successors in $S$. Naturally, the union of two traps is also a trap, so there exists a unique $\subseteq$-maximal trap.

\begin{lemma}
    \label{lem:winningTrap}
    Let $\sys{H}$ be a game with minimal odd priority $o$, where the maximiser wins storage parity from all positions, and let $S_o$ be the states with priority $o$. 
    Then there is a trap $S_t$ in $\sys{H}\setminus \atr_\Min^{\sys{H}}(S_o)$, such that the maximiser wins storage parity from all positions in the subgame $\sys{H}\cap S_t$, that is, without exiting $S_t$.
\end{lemma}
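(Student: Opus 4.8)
The plan is to establish Lemma~\ref{lem:winningTrap} as follows. Let $A \eqdef \atr_\Min^{\sys{H}}(S_o)$ be the set of states from which minimizer can force a visit to a priority-$o$ state. First I would argue that maximizer must win storage parity everywhere in $\sys{H} \setminus A$ already: since minimizer cannot force the play into $A$ from outside $A$ (that is the definition of the minimizer attractor being closed under minimizer and random moves, and maximizer simply avoids edges into $A$ from his own states), the subgame on $\sys{H} \setminus A$ is well-defined, and a maximizer strategy that wins storage parity in $\sys{H}$ can be adapted to stay in $\sys{H}\setminus A$. The reason this adaptation preserves winning is that any run leaving $\sys{H}\setminus A$ would have to pass through $A$, from where — if minimizer then plays optimally — priority $o$ is seen; if this happened infinitely often, the dominating priority seen infinitely often would be $o$ (odd), contradicting that the original strategy wins parity. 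Hence almost all runs of the winning strategy stay in $\sys{H}\setminus A$ eventually, and after that finite prefix the play is confined to $\sys{H}\setminus A$; restricting to that tail gives a strategy staying in $\sys{H}\setminus A$.

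Next I would take $S_t$ to be the $\subseteq$-maximal (minimizer) trap contained in $\sys{H}\setminus A$, which exists and is unique by the remark that unions of traps are traps. The key point is that $S_t$ is nonempty and that maximizer can be kept inside it: I would show that from every state of $\sys{H}\setminus A$ maximizer can force the play into $S_t$, and then that he wins storage parity without leaving $S_t$. For the first part, consider the complement $(\sys{H}\setminus A)\setminus S_t$; by maximality of $S_t$, every state here is either a maximizer state all of whose successors leave $\sys{H}\setminus A$ (impossible, since we are in the safe region where maximizer has a move staying inside), or a minimizer/random state with some successor outside $S_t$ — a standard attractor-style argument shows maximizer can force $S_t$. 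Actually the cleanest route is: compute the maximizer attractor to $S_t$ within $\sys{H}\setminus A$; the complement of this attractor is again a trap inside $\sys{H}\setminus A$, and if it were nonempty it would contradict maximality of $S_t$ (it could be merged with $S_t$). Hence $\atr_\Max(S_t) = \sys{H}\setminus A$.

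Finally I would combine: maximizer wins storage parity in $\sys{H}\setminus A$ (first paragraph), maximizer can force the play from anywhere in $\sys{H}\setminus A$ into $S_t$ while maintaining the energy-storage bound — here I need to be slightly careful, since forcing into $S_t$ costs a bounded amount of energy, which is absorbed into the storage constant $l$; since $S_t$ is a trap, once inside the play never leaves, so the restriction of the winning strategy to $S_t$ is a well-defined strategy in the subgame $\sys{H}\cap S_t$, and it still wins storage parity there because its runs are exactly a subset of the winning runs in $\sys{H}$ that stay in $S_t$. The only subtlety is matching energy levels: a run that enters $S_t$ does so at some energy level bounded below (using the lift-or-win property and the minimal-energy bookkeeping from the preceding discussion), so storage parity in $\sys{H}\cap S_t$ holds from all positions with a possibly larger but still finite storage constant.

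The main obstacle I expect is the first paragraph — making rigorous that a storage-parity-winning strategy in $\sys{H}$ can be made to stay out of $\atr_\Min(S_o)$ without sacrificing the storage bound. One must argue that almost every winning run enters $\atr_\Min(S_o)$ only finitely often (else priority $o$ dominates infinitely often, against either the parity part or, if maximizer escapes $A$ before $o$ is seen, one reasons about the measure of such escapes), and then that cutting off the finite prefix where these visits occur yields a genuine strategy in the subgame with only a bounded additive loss of energy. Handling the stochastic states here — where "minimizer forces $S_o$ from $A$" must be read as "with positive probability from random states" — and threading the energy bookkeeping through both require care, but follow the pattern of standard attractor/trap decompositions for stochastic parity games.
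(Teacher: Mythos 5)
There are two genuine gaps here. The first, and most important, is in your opening paragraph: from ``almost every run of the original winning strategy enters $A=\atr_\Min^{\sys{H}}(S_o)$ only finitely often'' you conclude that ``restricting to that tail gives a strategy staying in $\sys{H}\setminus A$''. That inference is invalid: the finite prefix after which a given run stays out of $A$ depends on the run, and cutting it off does not yield a single strategy that never enters $A$. In fact the conclusion you are aiming for --- that maximizer wins storage parity from \emph{every} position of $\sys{H}\setminus A$ while never leaving $\sys{H}\setminus A$ --- is strictly stronger than the lemma and is false in general: from some states of $\sys{H}\setminus A$ maximizer's only winning strategies may have to route through $A$ (and risk seeing $o$) a finite number of times. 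This is precisely why the lemma only asserts the existence of \emph{some} trap $S_t\subseteq \sys{H}\setminus A$ on which he can win without exiting, rather than taking $S_t$ to be all of $\sys{H}\setminus A$.

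The second gap is the maximality argument by which you get $\atr_\Max(S_t)=\sys{H}\setminus A$ and, implicitly, the nonemptiness of $S_t$. The complement (inside $\sys{H}\setminus A$) of a \emph{maximizer} attractor is a region closed under maximizer and random moves, i.e.\ a region the \emph{minimizer} can confine the play to; under the paper's convention a ``trap'' means a minimizer trap (closed under minimizer and random moves). So that complement is a trap for the other player, cannot be united with $S_t$, and yields no contradiction with the maximality of $S_t$. Consequently neither $S_t\neq\emptyset$ nor the claim that maximizer wins inside $S_t$ is actually established by your argument. The paper's own proof sidesteps the entire construction with a short determinacy argument by contraposition: if no winning trap existed, minimizer would have an (almost-sure, hence positional) winning strategy $\tau$ from every position of $\sys{H}\setminus A$, and combining $\tau$ with her attractor strategy on $A$ (and anything on $S_o$) wins everywhere in $\sys{H}$ --- every run either eventually stays in $\sys{H}\setminus A$, where $\tau$ wins, or re-enters $A$ infinitely often and then sees the dominating odd priority $o$ infinitely often --- contradicting the hypothesis that maximizer wins from all positions. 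I would restructure your proof around that contraposition rather than trying to build $S_t$ directly.
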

\begin{proof}
    Assume for contradiction that no such trap exists.
    Then the minimiser has an almost-sure winning strategy---and thus a positional winning strategy $\tau$---for all positions in $\sys{H}\setminus \atr_\Min^{\sys{H}}(S_o)$.
    Then the minimiser can win almost surely in $\sys{H}$ by a positional winning strategy that fixes an arbitrary strategy for her positions in $S_o$, uses her attractor strategy in all her other positions in $ \atr_\Min^{\sys{H}}(S_o)$, and $\tau$ elsewhere. (contradiction)
\end{proof}

The minimal energy level for winning from a state in $S_t$ can, of course, differ from the minimal sufficient energy level for the same state in the full game $\sys{H}$. We now partition the winning regions using divide and conquer.

\begin{lemma}
    \label{lem:NP:DnC}
    Let $\sys{H}$ be a game where the maximiser wins storage parity from all positions.
    Let $o$ be the minimal odd priority that occurs in $\sys{H}$.
    If $o$ is the minimal priority in $\sys{H}$ then let $S_t$ be defined as the trap $S_t$ guaranteed by \cref{lem:winningTrap},
    otherwise let $S_t$ be the set of states with smaller priority than $o$. The following holds.
    \begin{enumerate}
        \item Maximizer wins storage parity from all positions in
        the subgame $\sys{H}' = \sys{H} \setminus \atr_\Max^{\sys{H}}(S_t)$.
        \item Fix maximizer strategies $\sigma_1, \sigma_2$ and $\sigma_3$
        that are almost-sure winning for 1) storage-parity in $\sys{H}'$, storage-parity in $S_t$, and 3) reachability ($\eventually{S_t}$), respectively, and let $\sys{I}\subseteq \sys{H}$ be the game in which all new trade-in states that are never used by those strategies are removed.
        Maximizer almost-surely wins the storage-parity objective from all states of $\sys{I}$.
    \end{enumerate}
\end{lemma}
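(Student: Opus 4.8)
\emph{Proof plan.}
The approach is to treat the two parts separately: part~1 is a routine consequence of attractor removal, and part~2, the construction of a winning strategy in $\sys{I}$, carries the real content. For part~1, recall that $\sys{H}'$ is the subgame of $\sys{H}$ induced on $V\setminus\atr_\Max^{\sys{H}}(S_t)$; since $\atr_\Max^{\sys{H}}(S_t)$ is a maximizer attractor, every maximizer state that survives in $\sys{H}'$ keeps all of its successors, so $\sys{H}'$ is a subgame out of which maximizer alone cannot steer the play. I would prove the contrapositive: if minimizer could win (the complement of storage-parity) from some state $v$ of $\sys{H}'$, via a strategy $\tau'$, then extending $\tau'$ to $\sys{H}$ by playing arbitrarily on $\atr_\Max^{\sys{H}}(S_t)$ yields a minimizer strategy that wins from $v$ in $\sys{H}$ as well: every $\tau'$-consistent play that stays in $\sys{H}'$ forever is a play of the subgame and is lost by maximizer, and such a play can leave $\sys{H}'$ only through a random transition into $\atr_\Max^{\sys{H}}(S_t)$ (maximizer keeps all successors and minimizer moves inside $\sys{H}'$), which is impossible by the way $\sys{H}'$ is formed as a subgame --- contradicting the hypothesis that maximizer wins everywhere in $\sys{H}$.

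For part~2, first fix $\sigma_3$ to be a maximizer attractor strategy for reaching $S_t$ (an admissible choice of an almost-sure reachability strategy): from any state of $\atr_\Max^{\sys{H}}(S_t)$ it reaches $S_t$ within at most $|V|$ steps no matter what the other players and the coin flips do, since each attractor layer is left after a single step. Hence the energy can drop by at most $|V|\cdot R$ (with $R$ the largest absolute reward) along any such transient. Now define the combined maximizer strategy $\sigma^*$ on $\sys{I}$: on $\atr_\Max^{\sys{H}}(S_t)$ play $\sigma_3$ until $S_t$ is reached and then switch forever to $\sigma_2$; on $\sys{H}'$ play $\sigma_1$, and the first time a (necessarily random) transition moves the play into $\atr_\Max^{\sys{H}}(S_t)$, switch to the previous mode. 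Since $\sigma_1,\sigma_2,\sigma_3$ only ever move into trade-in states that $\sys{I}$ retains, $\sigma^*$ is a legal strategy in $\sys{I}$, and a $\sigma_i$-consistent play in $\sys{I}$ is also one in $\sys{H}$.

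It then remains to check that $\sigma^*$ is almost-surely winning for $\ES{k,l}\cap\Parity$ from every state of $\sys{I}$, for one finite storage bound $l$ and a large enough initial credit $k$. A run that stays in $\sys{H}'$ forever is a $\sigma_1$-run, so it satisfies $\ES{k_1,l_1}\cap\Parity$ for the $l_1$ given by $\sigma_1$. A run that switches decomposes into a $\sigma_1$-prefix (maintaining $l_1$-storage), a transient of at most $|V|$ steps (energy drop at most $|V|\cdot R$), and a $\sigma_2$-tail started in $S_t$ (maintaining $l_2$-storage and satisfying parity); when $o$ is the minimal priority of $\sys{H}$, $S_t$ is a minimizer trap by \cref{lem:winningTrap}, so this tail never leaves $S_t$ and there is at most one switch. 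Gluing the phases, every run satisfies the storage condition for a bound $l$ depending only on $l_1,l_2,|V|$ and $R$, and --- since parity is prefix-independent --- the parity condition as well, provided $k$ is large enough that the $\sigma_2$-tail still starts above its threshold; the probability with which a run switches is immaterial, as both alternatives are winning. The case where the minimal priority of $\sys{H}$ is even is analogous, with $S_t$ the set of minimal-priority states and the observation that returning to $S_t$ infinitely often already forces the dominating priority seen infinitely often to be even.

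I expect the main obstacle to be the uniform energy bookkeeping in part~2: pinning down a single finite storage bound $l$ and finite credit $k$ that survive the concatenation $\sigma_1\to\sigma_3\to\sigma_2$ on every run. This rests on the $|V|$-step bound for the attractor strategy $\sigma_3$ being genuine --- independent of the opponent and of randomness, which is why $\sigma_3$ must be taken as an attractor strategy rather than an arbitrary reachability strategy --- and on combining that bounded transient with the separately chosen storage bounds of $\sigma_1$ and $\sigma_2$. Closely tied to this is the careful treatment of random transitions leaving $\sys{H}'$, which is precisely the point that both the subgame argument in part~1 and the mode switch in part~2 turn on.
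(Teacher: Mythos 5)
Part~1 of your argument is fine and matches the paper's (one-line) justification. The gap is in part~2, and it sits exactly where you yourself flagged the load-bearing step: you assume that the attractor strategy $\sigma_3$ reaches $S_t$ within at most $|V|$ steps ``no matter what the other players and the coin flips do'', so that the transient costs at most $|V|\cdot R$ energy. That is true for attractors in \emph{non-stochastic} games, but here $\atr_\Max^{\sys{H}}(S_t)$ is defined as $\AS{\eventually S_t}$, the \emph{almost-sure} reachability region of a stochastic game. A random state may send the play back away from $S_t$ with positive probability while remaining inside $\AS{\eventually S_t}$; the target is then reached with probability $1$ but after an unbounded number of steps, and if the ``bounce back'' edges carry negative rewards the energy drop along the transient is unbounded. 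Consequently no single storage bound $l$ and credit $k$ survive the concatenation $\sigma_1\to\sigma_3\to\sigma_2$, and your combined strategy $\sigma^*$ need not satisfy any $\ES{k,l}$ condition. The same problem recurs, amplified, in the even-minimal-priority case, where $S_t$ is not a trap and the play must return to it infinitely often: you would need infinitely many transients, each with a potentially unbounded energy cost.

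The paper's proof is built precisely to avoid this. It does not ask the attractor attempt to succeed surely or cheaply; it only fixes $k_a$ so that, starting the attempt with energy $\ge k_a$, there is a \emph{positive} (bounded-below) probability of arriving in $S_t$ with energy $\ge k_t$ while respecting the storage constraint. When an attempt fails (the energy drops below the threshold), the combined strategy falls back to the minimal-energy strategy $\sigma_{\min}$ and its \emph{lift-or-win} property (the run either satisfies parity or pumps the energy back up, relative to the per-state minimal levels), after which a fresh attempt is launched; since each attempt succeeds with probability bounded away from $0$, almost every run either eventually wins inside $S_t\cup\sys{H}'$ or wins via the parity alternative of lift-or-win. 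Your proposal omits $\sigma_{\min}$ entirely, and without it (or some substitute mechanism for recovering energy between failed attempts) the uniform energy bookkeeping you identify as the crux cannot be carried out.
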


\begin{proof}
    For the first part, one immediately sees that a winning minimizer strategy for (some) states in $\sys{H}'$ would also be winning for these states in $\sys{H}$.
    
    For the second part, notice that maximizer can combine the existing strategies into an overall winning strategy as follows.
    
    Suppose $k_t\in\N$ is large enough so that for all states in $S_t$, maximizer wins the storage-parity objective $\ES{k_t}\cap\Parity$.
    Based on this, we can pick $k_a\in\N$ large enough so that,
    in a game that starts in $\atr_{\Max}^{\sys{H}}(S_t)$
    with energy $k_a$ and where maximiser plays the attractor strategy towards $S_t$, he has a positive chance of reaching $S_t$ with energy $\ge k_t$ while remaining in the almost-sure winning region for storage-parity (in $\sys{H}$).
    Finally, let $k>\max\{k_a,k_t\}$ be large enough so that for all states in $\sys{H'}$, maximizer wins the storage-parity objective $\ES{k}\cap\Parity$. W.l.o.g., this is already witnessed by the strategy $\sigma_1$, by monotonicity of the objective.
    Maximizer will play as follows.
    
    As long as the energy level is low ($< k$),
    maximizer plays according to $\sigma_{\min}$.
    By the \emph{lift or win} property (\cref{lem:mean-payoff-2}) he can either win or gain an arbitrary amount of energy.
    Alternatively, assuming he is in $\atr_\Max^{\sys{H}}(S_t)$ and has sufficient energy, he invests it into an attempt to reach $S_t$ in $\atr_\Max^{\sys{H}}(S_t)$ while complying with the minimal energy level on the way and, if $o$ is the minimal priority, having sufficient energy in $S_t$ to win storage parity in the trap $S_t$.
    Outside of $\atr_\Max^{\sys{H}}(S_t)$, 
    he plays according to $\sigma_1$, the winning strategy in $\sys{H'}$, while maintaining an energy level of at least $k$.
    This combined strategy is winning for the $k$-storage-parity because 1) it remains in the almost-sure winning region
    and 2) either eventually forever follows a winning strategy in $S_t\cup \sys{H}'$, or 
    (in case the minimal priority is even) infinitely often tries to reach states with the dominant priority.
    
    As this strategy only combines the existing strategies, it will never use any trade-in state in $\sys{H}\setminus\sys{I}$,
    and therefore works in the smaller subgame $\sys{I}$.
\end{proof}

This finally allows us to establish our main claim, of which
\cref{lem:np:concise} is a direct consequence.

\begin{lemma}
    If the maximiser almost surely wins storage parity for $\sys{G}_2$, he can win storage parity in $\sys{G}_2$ with a strategy that does not use more choices for any state in $\sys{G}_2$ than twice the number of states of $\sys{G}_1$ has states.
\end{lemma}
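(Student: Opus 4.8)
The plan is to establish this by iterating the divide-and-conquer lemma \cref{lem:NP:DnC} along a recursion on the number of distinct priorities occurring in the game. First I would replace $\sys{G}_2$ by the subgame $R$ induced by its almost-sure winning region $\bigcup_{k}\AS[{\sys{G}_2}]{\ES{k}\cap\Parity}$: this is a genuine subgame (at a maximizer state at least one successor lies in the region, at a minimizer or random state all successors do), maximizer wins storage-parity from every state of $R$, and outside $R$ nothing is won, so deleting trade-in edges there is harmless. Thus it suffices to prove the bound for $R$, which is a game of the form required by \cref{lem:NP:DnC}, whose base game $\sys{G}_1$ has $|V_1|$ states.

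Next I would fix, once and for all inside $R$, the smallest uniform storage bound and the global lift-or-win strategy $\sigma_{\min}$ from the paragraph preceding \cref{lem:winningTrap}; the fact used repeatedly is that $\sigma_{\min}$ is \emph{positional} and hence uses at most one trade-in option at every state. Then I would unfold \cref{lem:NP:DnC}. Let $o$ be the minimal odd priority; either $o$ is the minimal priority and \cref{lem:winningTrap} supplies a non-empty minimizer-trap $S_t\subseteq\sys{H}\setminus\atr_\Min^{\sys{H}}(S_o)$ on which maximizer wins storage-parity, or $o$ is not minimal and $S_t$ is the non-empty set of states of priority below $o$. In both cases $\sys{H}$ splits into the pairwise disjoint parts $\sys{H}\cap S_t$, the attractor shell $\atr_\Max^{\sys{H}}(S_t)\setminus S_t$, and $\sys{H}'=\sys{H}\setminus\atr_\Max^{\sys{H}}(S_t)$, with maximizer still winning storage-parity on $\sys{H}'$ by part~(1). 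The monotonicity I would exploit is that $\sys{H}\cap S_t$ always has strictly fewer distinct priorities (its minimal priority is $>o$ in the first case, below $o$ in the second), and that repeating the first case until priority $o$ is exhausted — each inner step removing the non-empty set $\atr_\Max^{\sys{H}}(S_t)$ — turns $\sys{H}'$ into a game with strictly fewer priorities as well. Hence the recursion is well-founded with depth at most the number of distinct priorities, which is at most $|V_1|$.

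For the strategy I would take exactly the combined strategy from the proof of part~(2) of \cref{lem:NP:DnC}: play $\sigma_{\min}$ whenever the energy is low (the lift-or-win property then either wins via parity or raises the energy above any threshold), play a positional reachability strategy towards $S_t$ inside the attractor shell once the energy is sufficient, and otherwise descend into the recursively constructed winning strategy of the currently occupied piece. The per-state accounting is then: one option from the single global $\sigma_{\min}$; the reachability strategies arising at the various recursion nodes have \emph{pairwise disjoint supports} (each shell lies inside its own subgame, and the descendant subgames $\sys{H}'$ and $\sys{H}\cap S_t$ exclude the ancestor's shell), so together they amount to one shared positional strategy and contribute at most one more option per state; the remaining options come from the sub-construction on a game with at least one fewer priority. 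So if $g(d)$ bounds the number of distinct choices used per state in a subgame with $d$ distinct priorities, then $g(d)\le g(d-1)+c$ for a small constant $c$ with $g(1)$ constant (the innermost pieces are won by $\sigma_{\min}$ alone), whence $g(d)=O(d)\le 2|V_1|$. Taking $\sys{G}_3$ to be $\sys{G}_1$ together with, for each random state, only the $\le 2|V_1|$ trade-in options actually used by this strategy then gives \cref{lem:np:concise}.

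The main obstacle I anticipate is precisely this per-state count: keeping it linear in $|V_1|$ rather than letting it degrade to (number of priorities)$\times$(number of states). This forces one to (i) collapse the inner ``same-minimal-odd-priority'' iterations of \cref{lem:NP:DnC} into a single priority-reducing layer, (ii) use the disjointness of all shells so that the reachability strategies cost only one shared positional choice overall, and (iii) ensure the $\sigma_{\min}$ strategies of nested subgames do not accumulate — either by arguing a single $\sigma_{\min}$ can be reused throughout (minimizer having no incentive to leave the current piece, as that hands maximizer an almost-sure run into $S_t$), or, failing that, by absorbing the handful of extra per-state choices into the deliberately loose factor $2$. The energy-threshold bookkeeping needed to splice the modes while preserving the storage condition is routine and already carried out inside \cref{lem:NP:DnC}.
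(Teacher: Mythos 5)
Your construction is the same as the paper's --- a recursive application of \cref{lem:NP:DnC}, splicing $\sigma_{\min}$ (for low energy), a positional attractor strategy on the shell, and the recursively obtained strategies on $S_t$ and $\sys{H}'$ --- but your accounting is organized differently, and in its stated form it has a gap. You recurse on the number of distinct priorities and claim $g(d)\le g(d-1)+c$ for a small constant $c$; the trouble is that a single ``priority layer'' (the inner iterations sharing the same minimal odd priority $o$) can consist of up to $n$ decompositions, where $n$ is the number of states of $\sys{G}_1$, each with its own lift-or-win strategy $\sigma_{\min}^{(j)}$ defined relative to its own residual game, and a state captured only at the $j$-th inner iteration sits in the low-energy mode of all $j$ of them. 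Neither of your two proposed fixes is established: reusing one global $\sigma_{\min}$ throughout all subgames is a nontrivial claim (the minimal winning energy levels $k_p$, and hence $\sigma_{\min}$, change when passing to $\sys{H}'$ or $S_t$, and the global $\sigma_{\min}$ need not even keep the play inside the subgame), and the extra choices are not ``a handful'' per layer. The paper sidesteps all of this with a coarser but watertight count: every decomposition removes the non-empty set $\atr_\Max^{\sys{H}}(S_t)$, so the \emph{total} number of decomposition steps along any state's branch --- summed across all priority layers --- is at most $n$; each step costs at most one fresh trade-in option (for that step's $\sigma_{\min}$), and the attractor strategy costs nothing extra because all trade-in copies of a random state have identical successor distributions, so its choice among them is immaterial. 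Your feared blow-up of (number of priorities)$\times$(number of states) therefore never arises, and the disjoint-shells observation, while correct, is not needed. If you replace the priority-indexed recurrence by this state-counting argument your proof closes; as written, the step $g(d)\le g(d-1)+c$ is unjustified.
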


\begin{proof}
    The claim follows from a recursive application of \cref{lem:NP:DnC}.
    Starting with $\sys{H}\subseteq \sys{G}_2$ defined by the almost-sure winning states, each application will split the game into disjoint subgames $\sys{H}'$, $S_t$, and $\atr_\Max^{\sys{H}}(S_t)\setminus S_t$, in which maximizer can be assumed to win according to simpler (wrt.~the number of trade-ins used) strategies.
    Notice that every new trade-in states $s_\tau \in \sys{G}_2 \setminus \sys{G}_1$ will belong to the same subgame as its accompanying original random state $s\in \sys{G}_1$.
    In every decomposition $S_t$ must be non-empty, so the number of states in $\sys{G}_1$ bounds the recursion depth.
    
    The base cases are either empty or games in which maximizer wins only by combining $\sigma_{\min}$ and an attractor strategy towards the dominating priority. Both can be chosen MD.
    In any further decomposition, any given state will wither belong to a smaller game ($\sys{H}'$ or $S_t$), in which case the number of necessary trade-in options is unchanged,
    or is in $\atr_\Max^{\sys{H}}(S_t)\setminus S_t$, in which case
    the combined strategy may need to chose between $\sigma_{\min}$
    and an attractor strategy. But notice that the choice of trade-in state is meaningless for the attractor strategy, because all such states have the same (distributions over) successors.
\end{proof}

\begin{figure}
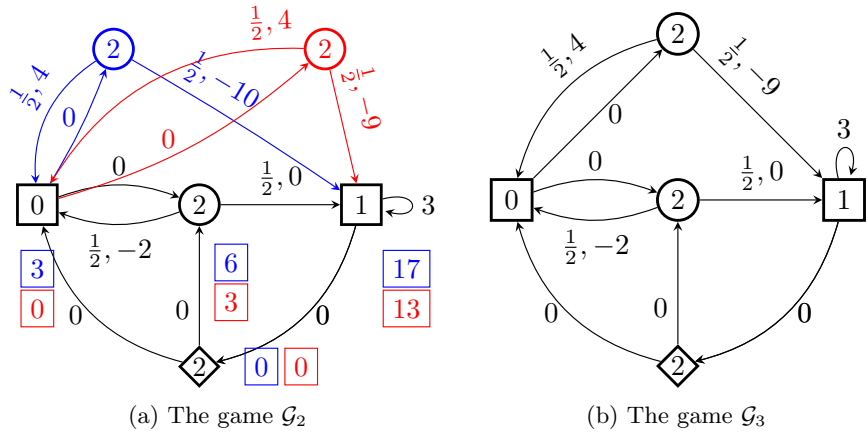

    \begin{subfigure}[t]{0.5\textwidth}
        \centering
        \includegraphics[width=6cm]{Figures/np-example-G2.pdf}
        \caption{The game $\sys{G}_2$}
        \label{fig:np3:b}
    \end{subfigure}\hfill
    \begin{subfigure}[t]{0.5\textwidth}
        \centering
        \includegraphics[width=5cm]{Figures/np-example-G3.pdf}
        \caption{The game $\sys{G}_3$}
        \label{fig:np3:c}
    \end{subfigure}
    \caption{    \label{fig:np3}
        An example game $\sys{G}_2$ (left) and the derived games.
    }
  \end{figure}

\begin{example}[continuation of \cref{ex:gainNPExample2}]
    \label{ex:g1-3}
    Consider the game $\sys{G}_2$ in \cref{fig:np3} (left). 
    We can prune $\sys{G}_2$ into a game where all but one new alternative state is removed.
    In this game $\sys{G}_3$, depicted on the right, maximizer can almost-surely guarantee the $\Gain$ condition while simultaneously ensuring that no negative cycle is closed. This means that $\ES{k}\cap\Parity$ holds almost-surely
    in $\sys{G}_3$, and hence $\EN{k}\cap\Parity$ in $\sys{G}$.
\end{example}

\subsubsection{Proof of \cref{thm:as-gain-np}}
\rule[-10pt]{0pt}{10pt}\\

We are now ready to prove the main theorem of \cref{sec:gain}.

\thmASGainNP*
\begin{proof}
Guess a game $\sys{G}_3$ that uses only the given bound on the number of choices, i.e., without constructing the exponentially large game $\sys{G}_2$. Prune the unreachable random states 
and verify that maximizer can almost-surely ensure the storage-parity objective in $\sys{G}_3$.
The correctness of this procedure follows from
\cref{lem:np:concise,lem:np:gain-sp}.
\end{proof}

\end{document}